\DeclareMathOperator{\E}{\mathbb{E}}
\pgfplotsset{compat = newest}
\newcommand\numberthis{\addtocounter{equation}{1}\tag{\theequation}}
\newtheorem{theorem}{Theorem}
\newtheorem{observation}[theorem]{Observation}
\newtheorem{definition}[theorem]{Definition}
\newtheorem{lemma}[theorem]{Lemma}
\theoremstyle{plain}
\newtheorem*{theorem*}{Theorem}
\newtheorem*{corollary*}{Corollary}
\renewcommand{\algorithmcfname}{MECHANISM}
\title{Online Mechanism Design with Predictions\thanks{All four authors were supported by NSF grant CCF-2210502. Eric Balkanski was also supported by NSF grant  IIS-2147361.}}
\author[a]{Eric Balkanski\thanks{eb3224@columbia.edu}}
\author[b]{Vasilis Gkatzelis\thanks{gkatz@drexel.edu}}
\author[b]{Xizhi Tan\thanks{xizhi@drexel.edu}}
\author[a]{Cherlin Zhu\thanks{cz2740@columbia.edu}}
\affil[a]{Columbia University, IEOR}
\affil[b]{Drexel University, Computer Science}
\date{}
\begin{document}

\maketitle

\begin{abstract}

Aiming to overcome some of the limitations of worst-case analysis, the recently proposed framework of ``algorithms with predictions'' allows algorithms to be augmented with a (possibly erroneous) machine-learned prediction that they can use as a guide. In this framework, the goal is to obtain improved guarantees when the prediction is correct, which is called \emph{consistency}, while simultaneously guaranteeing some worst-case bounds even when the prediction is arbitrarily wrong, which is called \emph{robustness}. The vast majority of the work on this framework has focused on a refined analysis of online algorithms augmented with predictions regarding the future input. A subsequent line of work has also successfully adapted this framework to mechanism design, where the prediction is regarding the private information of strategic agents. In this paper, we initiate the study of online mechanism design with predictions, which combines the challenges of online algorithms with predictions and mechanism design with predictions.

We consider the well-studied problem of designing a revenue-maximizing auction to sell a single item to strategic bidders who arrive and depart over time, each with an unknown, private, value for the item. We study the learning-augmented version of this problem where the auction designer is given a prediction regarding the maximum value over all agents. Our main result is a strategyproof mechanism whose revenue guarantees are $\alpha$-consistent with respect to the highest value and $(1-\alpha^2)/4$-robust with respect to the second-highest value, for $\alpha \in [0,1]$. We show that this trade-off is optimal within a broad and natural family of auctions, meaning that any $\alpha$-consistent mechanism in that family has robustness at most $(1-\alpha^2)/4$.
Finally, we extend our mechanism to also obtain expected revenue that is proportional to the prediction quality.
\end{abstract}

\thispagestyle{empty} 

\newcommand{\prediction}{\Tilde{v}_{(1)}}
\newcommand{\bestsofar}[1]{v^{\leq #1}_{\max}}
\newcommand{\rev}{\texttt{Rev}}
\newcommand{\vmax}{v_{(1)}}
\newcommand{\vbest}{v_{\max}}
\newcommand{\first}{i_1}
\newcommand{\second}{i_2}
\newcommand{\mech}{\textsc{Three-Phase}}
\newcommand{\emech}{\textsc{Error-Tolerant}}
\newcommand{\pref}{\succ}
\newcommand{\checkp}{\text{active-winner}}
\newcommand{\vmaxless}[1]{v_{\max}^{< #1}}
\newcommand{\vmaxlessbar}[1]{\overline{v}_{\max}^{< #1}}
\newcommand{\range}{W_n}
\newcommand{\quality}{q}
\newcommand{\suc}{\text{success}}
\newcommand{\tbest}{t_{(1)}}
\newcommand{\reach}{\text{reach i}}
\newcommand{\sofar}{v_i\text{ is the highest so far}}
\newcommand{\ji}{v_i^{max}<p}
\newcommand{\PP}{\mathbb{P}}
\newcommand{\vsf}{\mathcal{M}_{\text{vsf}}}
\newcommand{\stoprule}{\mathbf{s}}
\newcommand{\stoptime}{\sigma}
\newcommand{\infset}{\mathcal{I}_n(Y)}
\newcommand{\priceinterval}{I}
\newcommand{\randorder}{\mu}

\section{Introduction}
\setcounter{page}{1} 
One of the well-established shortcomings of worst-case analysis is that it often leads to overly pessimistic conclusions. On the other hand, any non-trivial performance guarantee that can be established through worst-case analysis is very robust since it holds no matter what the input may be. In an attempt to overcome the limitations of worst-case analysis without compromising its robustness, the recently proposed framework of ``algorithms with predictions'' allows algorithms to be augmented with a machine-learned prediction that they can use as a guide~\citep{mitzenmacher_vassilvitskii_2021}. Crucially, this prediction may be highly inaccurate, so depending too heavily on it can lead to very poor performance in the worst case. Therefore, the goal in this framework is to use such a prediction so that a strong performance can be guaranteed whenever the prediction is accurate (known as the \emph{consistency} guarantee) while simultaneously maintaining non-trivial worst-case guarantees even if the prediction is inaccurate (known as the \emph{robustness} guarantee).

Since this framework was introduced, a surge of work has utilized it toward a refined analysis of algorithms, data structures, and mechanisms (see \citep{alps}  for a frequently updated list of papers in this rapidly growing literature). The vast majority of this work has focused on the design and analysis of online algorithms, i.e., algorithms that need to process their input piece-by-piece and make irrevocable decisions without knowing the whole input. Learning-augmented online algorithms are enhanced with a prediction regarding the future input, which they can potentially use to make more informed decisions, while carefully managing the risk of being misguided by it. An even more recent line of work has successfully adapted this framework for the design and analysis of mechanisms interacting with strategic bidders \citep{ABGOT22,XL22}. One of the canonical problems in mechanism design is the design of auctions for selling goods to a group of strategic bidders, aiming to maximize the revenue. The main obstacle in achieving this goal is the fact that the amount that each bidder is willing to pay is private information that the designer needs to carefully elicit. Learning-augmented mechanisms are therefore enhanced with predictions regarding the value of this private information, which can potentially alleviate this obstacle.

In this work, we initiate the study of online mechanism design with predictions, bringing together the two lines of work on online algorithms with predictions and mechanism design with predictions. Specifically, we consider the problem of selling goods to strategic bidders that arrive and depart over time. 
This problem combines the challenges of both lines of work since the designer needs to carefully elicit the unknown, private, value of each bidder, while also not knowing (and being unable to elicit) the values of the bidders who have not yet arrived. In fact, designing an auction for such dynamic settings can be more demanding because, apart from the combined information limitations that the designer faces, the bidders may not only strategically misreport their value for the good(s) being sold, but also strategically misrepresent their arrival and departure times. 

The study of online mechanism design (without predictions) has previously received a lot of attention, given the many important applications that involve dynamic settings with bidders that arrive and depart over time~\citep{parkes_2007}. For example, the sale of airplane and theater seats or the sale of cars usually takes place over a period of time, during which interested buyers join the market and depart from it. As this happens, the seller may gradually adjust the prices of the goods being sold, aiming to maximize the revenue. These adjustments can be a function of the demand that the seller observes over time, but it is quite natural to assume that the designer may also have access to some prediction regarding this demand, e.g., using historical data. Our goal in this paper is to design online auctions enhanced with such a prediction and to evaluate the extent to which they can yield strong performance guarantees in terms of consistency and robustness.

\subsection{Our Results}
Our main goal is to evaluate the potential impact of the learning-augmented model on the performance of auctions in dynamic environments. To achieve this goal, we revisit the online mechanism design model, where the bidders arrive and depart over time~\citep{parkes_2007,HKP04}. This model poses several realistic and non-trivial obstacles for the auction designer: 1) the bidders can lie about their value for the good(s) being sold (the standard obstacle in mechanism design), 2) during the execution of the auction, the auctioneer has no information regarding bidders who have not yet arrived (the standard obstacle in online problems), and 3) the bidders can also lie regarding their arrival and departure times (an obstacle that is specific to the online mechanism design setting).

Within this model, we focus on the problem of selling a single item, aiming to maximize revenue. Each bidder $i$ has a value $v_i$ for the item being sold, and this value is the largest amount she would be willing to pay for it. In the absence of any predictions, the best revenue that one can guarantee, even in an offline setting, is equal to the second-highest value over all bidders.\footnote{This can be achieved by the classic Vickrey (second-price) auction.} Using this as a benchmark, prior work proposed an online single-item auction that guarantees revenue at least $1/4$ of the second-highest value~\citep{HKP04}. Aiming to refine this result and achieve stronger guarantees, we adopt the learning-augmented framework and consider the design of online auctions that are enhanced with a (possibly very inaccurate) prediction regarding the highest value over all bidders. The goal is to guarantee more revenue whenever the prediction is accurate (the consistency guarantee), while also achieving some non-trivial revenue guarantee even if the prediction is highly inaccurate (the robustness guarantee, which is equivalent to the worst-case guarantee studied in prior work). 

Targeting a more ambitious benchmark, we use the highest value over all bidders (the first-best revenue) as a benchmark for our consistency guarantee while maintaining the second-highest value (the second-best revenue) as the benchmark for robustness (as in prior work).

\paragraph{The $\mech$ learning-augmented online auction.}
Our first main result is the \mech\ auction: a learning-augmented online auction parameterized by some value $\alpha\in [0,1]$, which takes place in three phases. During the first phase, the auction observes the values of the first $\lceil\frac{1-\alpha}{2}n\rceil$ departing bidders in order to ``learn'' an estimate regarding what an appropriate price may be. In the second phase, the auction ``tests the prediction'' by giving each active bidder the opportunity to clinch the item if their value is at least as high as the prediction. After $\lfloor\alpha n\rfloor$ more bidders have departed, if the item remains unsold, the auction enters the third and last phase. During this phase, any active bidder is given the opportunity to clinch the item at a price equal to the maximum value observed during the first two phases. This learning-augmented online auction achieves the following trade-off between consistency and robustness. 
\begin{theorem*}
The \mech\ learning-augmented online auction is deterministic, strategyproof, and for any $\alpha \in [0,1]$ such that $\alpha n \in \mathbb{N} \text{ and } \frac{1-\alpha}{2}n \in \mathbb{N}$ the expected revenue it extracts guarantees $\alpha$-consistency with respect to the first-best revenue benchmark and $(1-\alpha^2)/4$-robustness with respect to the second-best revenue benchmark.
\end{theorem*}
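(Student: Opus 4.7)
My plan is to verify the three substantive claims in turn (determinism is immediate): strategyproofness, $\alpha$-consistency against $\vmax$, and $(1-\alpha^2)/4$-robustness against $v_{(2)}$. I would work in the standard secretary-style random-order model (matching the HKP04 benchmark), where the departure sequence is a uniformly random permutation of the $n$ bidders. For strategyproofness, I would argue via posted prices: the Phase~2 price is a function only of $\prediction$ and of values reported by already-departed Phase~1 bidders, and the Phase~3 price equals the maximum value observed during Phases~1--2; in both cases, the price a given active bidder faces is independent of that bidder's own report, so truthful value reporting is weakly dominant by a standard take-it-or-leave-it argument. For time misreports under the no-early-arrival-misreports model, I would check that (i)~delaying an arrival cannot move the bidder to a cheaper phase without risking the item being clinched in the meantime, and (ii)~departing early only forfeits future clinch opportunities.

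For $\alpha$-consistency, let $i^\star$ denote the bidder with value $\vmax$ and let $A$ be the event that $i^\star$ falls in Phase~2. Because the permutation is uniform, $\Pr[A]=|\text{Phase 2}|/n = \lfloor\alpha n\rfloor/n = \alpha$, using $\alpha n\in\mathbb{N}$. On $A$, bidder $i^\star$ is active during Phase~2 with value $\vmax=\prediction$ and therefore meets the Phase~2 threshold; since the prediction is accurate, no other bidder does, so $i^\star$ clinches at price $\prediction=\vmax$. This contributes $\alpha\cdot\vmax$ to the expected revenue, giving the claimed consistency.

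For $(1-\alpha^2)/4$-robustness, the key event is $E=\{i^\star\in\text{Phase 3}\}\cap\{i^{\star\star}\in\text{Phase 1}\cup\text{Phase 2}\}$, where $i^{\star\star}$ is the bidder with value $v_{(2)}$. Under the uniform random order,
\[
\Pr[E] \;=\; \frac{|\text{Phase 3}|}{n}\cdot\frac{|\text{Phase 1}|+|\text{Phase 2}|}{n-1} \;=\; \frac{\tfrac{1-\alpha}{2}\cdot\tfrac{(1+\alpha)n}{2}}{n-1} \;\geq\; \frac{1-\alpha^2}{4}.
\]
On $E$, bidder $i^{\star\star}$ is observed before Phase~3 begins while $i^\star$ is not, so the maximum value observed during Phases~1--2 equals $v_{(2)}$; in particular, if Phase~3 is reached, its posted price is exactly $v_{(2)}$ and $i^\star$ (active in Phase~3) accepts, yielding revenue $v_{(2)}$. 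The hard part will be to rule out, for an adversarial prediction $\prediction$, a ``premature'' Phase~2 sale at a price strictly below $v_{(2)}$: the argument must invoke the mechanism's Phase~2 pricing rule (which I expect to implicitly lower-bound the Phase~2 price by the running maximum of the observed values) to conclude that under $E$ either Phase~2 never sells or it sells at a price at least $v_{(2)}$. Combining a conditional revenue lower bound of $v_{(2)}$ on $E$ with $\Pr[E]\geq(1-\alpha^2)/4$ then gives the claimed robustness.
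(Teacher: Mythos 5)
Your consistency argument is the same as the paper's and is fine, but the other two parts have genuine gaps. For robustness, the difficulty you flagged as ``the hard part'' is a real obstruction, and the fix you anticipate does not work. Under your single event $E$ (highest bidder departs in Phase 3, second-highest departs in Phases 1--2), the Phase-2 threshold is $\tau_1=\max\{\bestsofar{\first},\prediction\}$, where $\bestsofar{\first}$ is the maximum over the \emph{first-phase} departures only. The event $E$ allows the second-highest bidder to depart during Phase 2, so with a tiny (under-)prediction one can have $\tau_1<v_{(2)}$; moreover the highest-value bidder, although she departs in Phase 3, may be active throughout Phase 2 and then clinches at price $\tau_1$, and the payment rule never raises a price. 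Concretely, take $n=4$, $\alpha=1/2$ (so $\first=1$, $\second=3$), values $10,5,1,0.5$, $\prediction\approx 0$, with the value-$1$ bidder departing first, the value-$5$ bidder departing second, and the value-$10$ bidder arriving at time $0$ and departing last: on this realization of $E$ the revenue is $1<v_{(2)}=5$. This is exactly why the paper's robustness proof splits into cases: when $\prediction>\vmax$ your event works (no Phase-2 sale is possible and the Phase-3 price is $v_{(2)}$), but when $\prediction\le\vmax$ the paper uses a \emph{different} event -- the second-highest departs among the first $\first$ bidders and the highest among the last $n-\first$ -- which forces all Phase-2 and Phase-3 prices into $[v_{(2)},\vmax]$ and has probability $\frac{\first}{n}\cdot\frac{n-\first}{n-1}\ge\frac{1-\alpha^2}{4}$, the same bound as in the over-prediction case.

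For strategyproofness, your sketch treats the auction as a per-bidder posted price that is ``independent of that bidder's own report,'' which is false for the actual mechanism and skips its central technical content: bidders whose active interval spans the Phase~2 $\to$ Phase~3 transition. Under the naive rule you describe (pay the threshold at which you clinch), a bidder with value at least $\tau_1$ who remains active into Phase 3 could report a value in $(\tau_2,\tau_1)$ and, if no one else clinches in Phase 2, win right after the transition at the lower price $\tau_2$; so that mechanism is not value-strategyproof. The paper prevents this deviation through the payment rule that, when the Phase-2 winner stays active past the transition, re-runs the allocation without her and reduces her payment to $\tau'$ in the appropriate cases (the active-winner / tie-breaking conditions), and the strategyproofness proof is organized precisely around this: the single-threshold versus two-threshold cases, the winner's inability to lower her price (Lemma~\ref{lem:winner2}), and the argument that losers with values strictly between $\tau_2$ and $\tau_1$ cannot profitably deviate (Lemma~\ref{cla:midvalbidder2}), together with the time-misreport cases handled by allocating at the reported departure time. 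Since your proposal never engages with this payment-reduction rule, the strategyproofness part is not just incomplete -- as written it argues about a simpler mechanism that fails the property being claimed.
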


Note that, although we focus on revenue maximization throughout this paper, as a corollary of our analysis, we also obtain a social welfare guarantee that is also $\alpha$-consistent and $(1-\alpha^2)/4$-robust, where consistency and robustness are both with respect to the highest value. 

\begin{figure}
    \centering
    \includegraphics[width = 0.55 \textwidth]{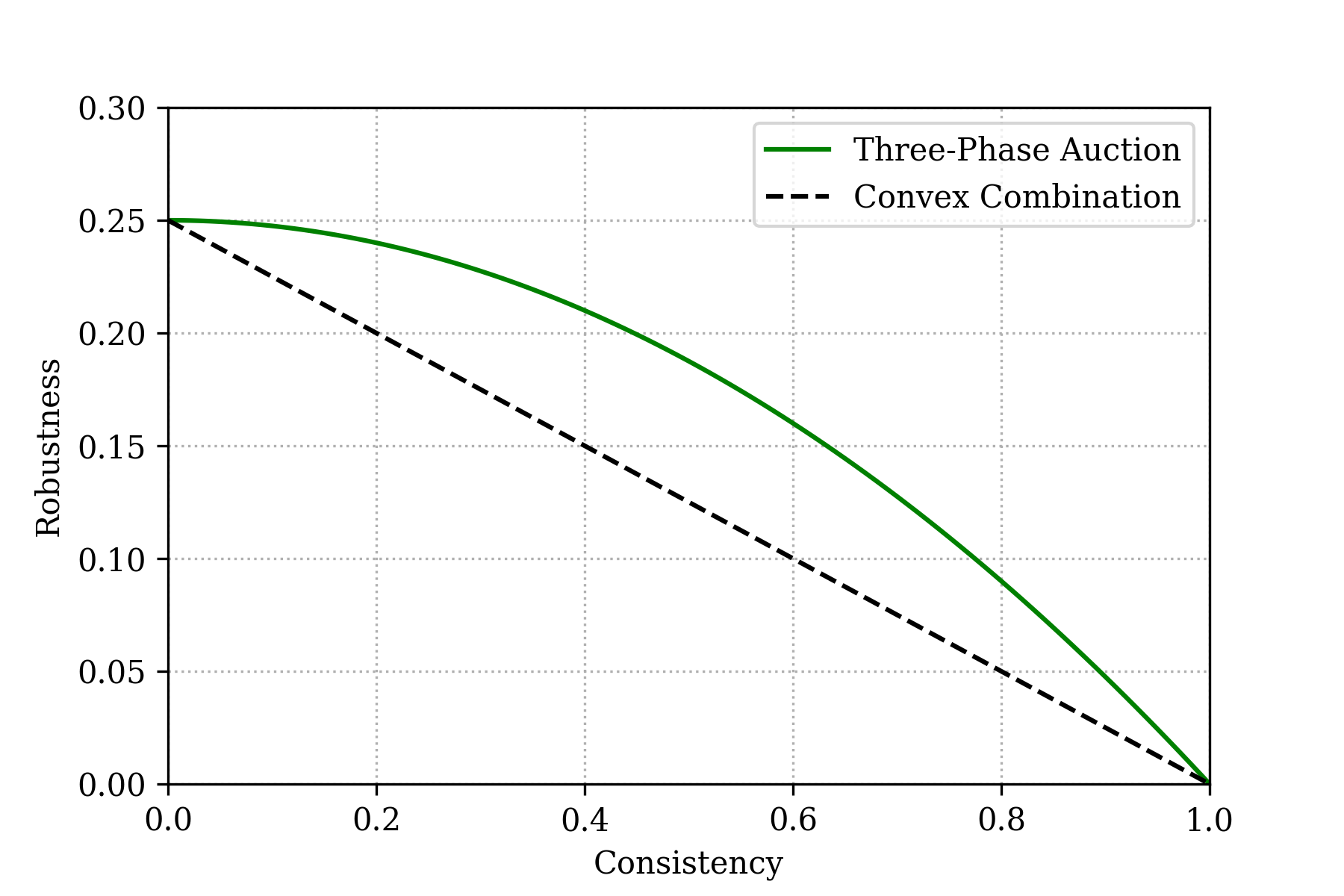}
    \caption{\small{The robustness-consistency trade-off achieved by the \mech\ auction and the trade-off achieved by  convex combinations of the auction that optimizes consistency by completely trusting the predictions and the auction that optimizes robustness by ignoring the predictions.}}
    \label{fig:errordistortion}
\end{figure}

If we let $\alpha=1$, our auction reduces to a posted price auction that offers to every bidder a price equal to the predicted maximum value, so it achieves a perfect consistency of $1$ (if the prediction is correct, then only a bidder with the maximum value would accept, and the optimal first-best revenue is extracted). In fact, this is the only way to achieve a consistency of $1$, but it provides no robustness guarantees (e.g., if the prediction was higher than the maximum value, then the good will remain unsold, leading to zero revenue). If the value of $\alpha$ is reduced, then our auction provides improved robustness guarantees in exchange for a drop in consistency, up until $\alpha=0$, for which our auction reduces to the one by~\cite{HKP04} and we retrieve the best-known robustness guarantee of $1/4$. The most interesting region for our mechanism is when $\alpha$ is in $(0,1)$ and all three of its phases are used in order to optimize the robustness-consistency trade-off while overcoming the strategyproofness obstacles that arise in the presence of multiple phases. Figure~\ref{fig:errordistortion} exhibits the convex combination of the two extreme solutions (optimizing only for consistency or only for robustness), as well as the improved trade-off achieved by the \mech\ auction.


\paragraph{Impossibility results.}
An interesting fact about the online mechanism design problem that we study is that proving impossible results for revenue maximization is significantly more demanding than proving impossibility results for the closely related secretary problem. 
First, the performance of the auction depends not only on who gets the good, but also at what price. For example, the auction could potentially offer the good to the highest value bidder for a price greater than the second-highest value, leading to revenue greater than the benchmark. Second, the price that the auction offers to a bidder can be an arbitrary function of the values observed among previous bidders, instead of just their relative ordering, giving rise to a very rich design space and making impossibility arguments quite demanding.  As a result, even though we would conjecture that $1/4$ is the optimal competitive ratio for revenue, the best-known inapproximability result remains at $2/3$.

We complement our positive results with impossibility results both for the original setting (without predictions) and the learning-augmented setting. For the original setting, we prove that the competitive ratio of $1/4$ is, in fact, optimal for a large family of auctions whose pricing rule can be an arbitrary function of the values observed in the past, but no more than the maximum seen so far. The proof relies on an LP formulation of the revenue maximization problem and  duality. For the learning-augmented setting, we prove that the robustness-consistency trade-off achieved by our \mech\ auction is optimal within a class of learning-augmented online auctions. We note that the previous LP approach requires a history-independence property that no longer holds in the setting with predictions. We therefore use an interchange argument instead to reduce an optimal auction from this class to our auction while maintaining the optimal revenue guarantee.

\subsection{Related Work}

\paragraph{Online mechanism design.} Due to the many applications with strategic agents who arrive and depart in an online fashion,  online mechanism design is an important subfield of mechanism design (see Chapter~16 by~\citet{parkes_2007} of the Algorithmic Game Theory textbook \citep{NRTV07} for an overview). The problem of designing strategyproof online auctions when bidders 
can misreport their arrival and  departure times was introduced by \citet{HKP04}. For revenue maximization, they show  a $1/4$-competitive strategyproof mechanism  and a $2/3$ impossibility result  in the single item setting. For the $k$-item setting, they give a constant factor competitive strategyproof mechanism, which was improved to  $1/(26e)$ for the special case where the active times of bidders do not overlap \citep{KT12}. For welfare maximization, \citet{HKP04} give a $1/e$-competitive strategyproof mechanism and a  $1/2$ impossibility in the single item setting. A  result by~\citet{correa} implies a tight $1/e$ impossibility for welfare. The impossibility results for welfare maximization hold even for mechanisms that are not strategyproof, whereas the strategyproofness assumption is required for the impossibility with respect to revenue maximization (otherwise, the mechanism could always charge the winning bidder their full value). 

We note that online mechanisms were first introduced in a setting where  the agents are unable to misreport their arrival and departure time \citep{lavi2000competitive}.
\citet{BJS10} consider agents with private arrival times and values, but without departure times, and propose a strategyproof auction that achieves a competitive ratio of $3/16$ in terms of revenue (the setting with arrival  and without departure times introduces challenges that do not occur with both arrival and departure times).  Additional related models for online mechanism design include unlimited supply and digital goods \citep{BYHW02, BRW03, BH05, KP13},
two-sided auctions with both buyers and sellers online \citep{BP12, BSZ06},
and interdependent value environments \citep{CIP07}.

\paragraph{Online algorithms with predictions.} The line of work on algorithms with predictions, also called learning-augmented algorithms, is an exciting emerging literature (see \citep{MV22} for a survey of early contributions and \citep{alps} for a frequently updated list of papers in this field). 
Numerous classic online algorithm design problems have been revisited, including online paging \citep{lykouris2018competitive}, scheduling \citep{PSK18}, optimization problems involving covering \citep{BMS20} and knapsack constraints \citep{IKQP21}, as well as Nash social welfare maximization \citep{banerjee2020online}, the secretary problem \citep{AGKK23, DLLV21, KY23}, and a range of graph-related challenges \citep{azar2022online}.

Among these previous works, the most closely related to our setting is by \citet{AGKK23}, who consider the value-maximizing secretary problem augmented with a prediction regarding the maximum value of the agents. In fact, their proposed learning-augmented algorithm follows a three-phase structure, which is similar to the one used in our auction. However, our setting and our proposed solution differ in several significant ways. The most significant one is the fact that just focusing on the online aspect of the problem, we need to deal with the important additional obstacle that the agents are strategic and can misreport their value, as well as their arrival and departure time. 
Furthermore, our goal is to maximize revenue whereas the goal in the secretary problem is to choose the agent with maximum value. Finally, our setting does not assume that each agent departs before the arrival of the next one, like the secretary setting does; this makes the problem of designing strategyproof mechanisms significantly more delicate.
To  achieve strategyproofness, our auction must very carefully determine the time at which the item is allocated, the bidder who receives the item, and the item’s price in order to handle bidders who might be active during multiple phases, which is the main technical (and novel) challenge in our setting.

\paragraph{Mechanism design with predictions} The design of learning-augmented mechanisms interacting with strategic agents is an even more recent line of work that was initiated by \citet{ABGOT22} and  \citet{XL22}. It includes strategic facility location~\citep{ABGOT22, XL22,IB22}, price of anarchy of cost-sharing protocols \citep{GKST22}, strategic scheduling~\citep{XL22, BGT223}, auctions~\citep{MV17,XL22,LuWanZhang23,caragiannis2024randomized}, and bicriteria (social welfare and revenue) mechanism design~\citep{BPS23}. We refer to \citep{BGT23} for a reading list of this line of work. Concurrently and independently of our work,  \citet{LuWanZhang23} studied online auctions, as well as multiple offline auctions settings, with predictions. However, the main technical and conceptual contribution of this work is in the offline setting; the online results are achieved through a reduction to the offline setting analogous to the one in~\cite{KP13}. Our results are, therefore, incomparable both from a technical and a conceptual standpoint.

\section{Preliminaries}
\label{sec:prelim}
We consider the problem of designing an auction to sell a single item to a set $N = \{1,2, \dots, n\}$ of $n$ bidders who arrive and depart over time. Each bidder $i$ arrives at some time $a_i$, departs at some time $d_i\geq a_i$, and has value $v_i$ for the item being sold. We refer to the interval $[a_i, d_i]$ as the \emph{active time} for bidder $i$. For simplicity, we assume that the bidders are indexed based on their order of departure (i.e., bidder $i$ is the $i$-th bidder to depart). We also let $\pi$ be an arbitrary total order over the set of bidders, which we use for tie-breaking, and let $i \pref j$ denote the fact that $i$ is ranked before bidder $j$ according to $\pi$. 
Our objective is to maximize the revenue from the sale.

The main obstacle is that all the relevant information of each bidder $i$, i.e., their ``type'' $\theta_i = (a_i, d_i, v_i)$, is private information that is unknown to the auction designer, so the auction needs to elicit it from each bidder. However, the bidders can misreport their types and the auction needs to be designed to ensure that they cannot benefit by doing so. Specifically, apart from misreporting her value $v_i$ for the item (which is the standard type of manipulation considered in mechanism design), a bidder can also misreport her arrival and departure times: adopting the original model introduced by \citet{HKP04}, we assume that each bidder $i$ can delay the announcement of her arrival (essentially reporting a delayed arrival time  $\hat{a}_i> a_i$), and she can report a false departure time $\hat{d}_i$ (either earlier or later than her true departure time, $d_i$).
Upon arrival, each bidder $i$ declares a type $\hat{\theta}_i$ (potentially different than $\theta_i$)
and the auction needs to determine who the winner is (i.e., which bidder will be allocated the item), at what time $t$ the item should be allocated to the winner, as well as the amount $p$ 
that the winner should pay.

Apart from the information limitations that the auction faces due to the private nature of the bidders' types, the auction also needs to be implemented in an \emph{online} fashion. This means that if it decides to allocate the item at some time $t$, then this decision is irrevocable, and both this allocation decision and the payment amount requested from the winner can depend only on information regarding bidders with arrival time $a_i\leq t$. In other words, the allocation and payment cannot in any way depend on the types of bidders that have not yet arrived.

If the auction allocates the item to some bidder $i^*$ at some time $t$ for a price of $p$, then this bidder's utility is equal to $v_{i^*}-p$, as long as $t\in [a_{i^*}, d_{i^*}]$, i.e., as long as $i^*$ is active at time $t$. Otherwise, if $i^*$ is allocated the item outside her (real) active time, then she receives no value from it, and her utility is $-p$. All other bidders receive no item and contribute no payment, so their utility is $0$.
An auction is \emph{strategyproof} if for every bidder $i$, truthfully reporting her type is a dominant strategy. This means that no matter what types $\hat{\Theta}_{-i} = (\hat{\theta}_1,\dots, \hat{\theta}_{i-1}, \hat{\theta}_{i+1},\dots , \hat{\theta}_n)$ the other bidders report, the utility of bidder $i$ is maximized if she reports her true type, $\theta_i$. 
To emphasize the added difficulty of achieving strategyproofness in online auctions, relative to static ones, prior work often distinguishes between \emph{value-strategyproofness}, which ensures that bidders will not want to misreport their value, and \emph{time-strategyproofness}, which is the additional requirement to ensure that bidders cannot benefit by misrepresenting their arrival or departure times either.

To evaluate the performance of online auctions with respect to the revenue they extract, prior work focused on a model where a set $I=\{[a_1, d_1], [a_2, d_2],... , [a_n, d_n]\}$ of $n$ arrival-departure intervals and a set $V$ of $n$ values are generated adversarially, and the values of $V$ are then matched to arrival-departure intervals from $I$ uniformly at random. Note that, if the intervals are all non-overlapping, this reduces to the classic \emph{random ordering} model, where the values of the bidders are determined adversarially and the order of their arrival is random. Therefore, our setting generalizes the classic setting of the  ``secretary problem''. We let $\mu(V,I)$ denote the random matching of values to intervals and $\E_{\Theta\sim \mu(V,I)}[\rev(M(\Theta))]$ denote the \emph{expected} revenue of an auction $M$ with respect to this random matching. Also, let $v_{(1)}$ and $v_{(2)}$ denote the highest and second-highest values in $V$, which are important benchmarks since the former is the highest feasible revenue (no bidder would pay more than that) and the latter is the ``offline Vickrey'' benchmark (this corresponds to the amount of revenue that is actually achievable via the classic Vickrey auction in offline settings).

In this work we adopt the learning-augmented framework and study online auctions that
are also equipped with a (potentially very inaccurate) prediction $\prediction$ regarding the highest value, $v_{(1)}$, in $V$. We denote the expected revenue of a auction, $M$, as $\E_{\Theta\sim \mu(V,I)}[\rev(M(\Theta, \prediction))]$ and 
we evaluate the performance of $M$ using its \emph{consistency} and \emph{robustness}. Consistency
refers to the competitive ratio of the expected revenue achieved by the algorithm when the prediction it is provided with is accurate, i.e., whenever $\prediction = \vmax$. The benchmark we use for consistency is the highest value in $V$, often referred to as the \emph{first-best} revenue. Formally:
\[\text{consistency}(M) = \min_{V,I} \frac{\E_{\Theta\sim \mu(V,I)}\left[\rev\left(M\left(\Theta, \vmax\right)\right)\right]}{\vmax}.\]
Robustness refers to the competitive ratio of the expected revenue given an adversarially chosen, inaccurate, prediction. 
The benchmark we use for robustness is the best revenue achievable via any (offline) strategyproof 
auction, i.e., the second highest value 
$v_{(2)}$,
often referred to as the \emph{second-best} revenue. Formally:
\[\text{robustness}(M) = \min_{V, I, \prediction} \frac{\E_{\Theta\sim \mu(V,I)}\left[\rev\left(M\left(\Theta, \prediction\right)\right)\right]}{v_{(2)}}.\]

\section{The $\mech$ Auction}
 We propose the $\mech$ auction, which is parameterized by a value $\alpha\in [0,1]$, with greater values corresponding to higher confidence in the accuracy of the prediction. Our main result in this section shows that for any choice of $\alpha$ this auction achieves $\alpha$-consistency and $(1 - \alpha^2)/4$-robustness, while simultaneously guaranteeing both value-strategyproofness and time-strategyproofness.
 
 The $\mech$ auction considers the bidders based on the order of their departure (i.e., the order of their indices) and comprises three separate phases. 

\begin{enumerate}[leftmargin=*]
    \item During the \textbf{first phase}, the auction observes the values of the first $\lceil\frac{1-\alpha}{2}n\rceil$ bidders to depart (without allocating the item to any of them), aiming to ``learn'' an estimate regarding what a reasonable price for the item may be. If, during this first phase, the auction observes a value that exceeds the predicted maximum, $\prediction$ (implying that the prediction is inaccurate), then, after the first phase is complete, the auction essentially skips the second phase and moves directly onto the third phase.
    If, on the other hand, the first phase does not prove the prediction to be inaccurate, then the auction proceeds to the second phase.
    \item During the \textbf{second phase}, the auction ``tests'' the prediction. Specifically, during this phase (which terminates after $\lfloor \alpha n \rfloor$ more bidders have departed) it asks all active bidders whether they would be willing to pay a price equal to the prediction. If any active bidder is willing to pay this price, then they secure the item and they are guaranteed to pay a price no more than that. The exact payment of bidders who secure the item during the second phase, however, may need to be lower than that to guarantee strategyproofness; we discuss this important subtlety later on. Finally, if none of the $\lfloor \alpha n \rfloor$ bidders is willing to pay a price equal to the prediction during the second phase, then the auction enters its third phase.
    \item During the \textbf{third phase}, the auction offers a take-it-or-leave-it price equal to the highest value observed over all the bidders that have previously departed, and any active bidder can claim the item at that price.
\end{enumerate}

Before going into more detail regarding each of the phases, we note that the auction has a simple description for the special case where no two bidders overlap with respect to their active intervals (i.e., there is just one active bidder at a time). In this case, the auction is a posted price mechanism that posts price $\infty$ to the first $\lceil\frac{1-\alpha}{2}n\rceil$ bidders, then posts price $\max\{\vbest, \prediction\}$ to the next $\lfloor\alpha n\rfloor$ bidders and, finally, if the item remains unsold, it posts price  $\vbest$ to the remaining bidders, where $\vbest$ is the maximum value of bidders who have previously departed. We note that the allocation rule induced by these posted prices is a generalization of the threshold-based algorithm for the classic secretary problem. 

The main challenge, and the main technical portion of our auction, is to handle the cases where there is an overlap between bidders. The time at which the item is allocated, the bidder who receives the item, and the item's price must all be carefully designed to handle bidders who might be active during multiple phases (in particular the second and third phases) and are competing against other bidders. Irrespective of the stage of the auction where the winner $i^*$ is determined,  the item is allocated to $i^*$ at the time of her (reported) departure, $d_{i^*}$, to guarantee time-strategyproofness. If the winner is determined during the third phase, then her final price is the take-it-or-leave-it price that they accepted during this phase. If, on the other hand, the winner is determined during the second phase, the final price needs to be carefully determined in order to guarantee the strategyproofness of the auction. Specifically, if the winner remains active after the transition into the third phase and no other bidder would have claimed the item during the second phase, then the $\mech$ auction may need to reduce  the winner's payment to be equal to the take-it-or-leave-it price that would have been offered during the third phase if we were to remove $i^*$ and simulate the outcome of the auction without them. 

For clarity, we formally present the allocation and the payment rule of the auction separately. Process~\ref{alg:alloc} is the execution of the \textbf{allocation rule}, i.e., it determines who should receive the item. This process maintains a value $\vbest$, corresponding to the maximum value observed among the bidders that have departed so far, and a threshold value $\tau$. If any active bidder has value at least $\tau$, then they can secure the item (tie-breaking using $\pi$ if there are multiple such active bidders). The threshold $\tau$ is $\infty$ during the first phase, then $\max\{\vbest, \prediction\}$ during the second phase, and finally $\vbest$ during the third phase. This process returns the winner $i^*$, if any, and the threshold $\tau$ at which $i^*$ secured the item. Furthermore, to make the formal definition of the payment rule easier, we also let this process return a Boolean variable, ``active-winner,'' which is true only if $i^*$ secured the item right after the transition between two phases. Specifically, this Boolean variable is set to true if the item was secured after the departure of an agent rather than the arrival of one, which implies that the departure caused the transition from one phase to another, leading to a drop in the threshold value, $\tau$, and the winner was already active. \newline

\renewcommand*{\algorithmcfname}{Process}
\begin{algorithm}[H]
\setstretch{1.25}
	\SetKwInOut{Input}{Input}
	\Input{ types $\Theta$ of the $n$ bidders, consistency parameter $\alpha \in [0, 1]$, prediction $\prediction \geq 0$}
$A\leftarrow \emptyset$ \tcp*{the set of active bidders} $L\leftarrow \emptyset$ \tcp*{the set of bidders who have departed}
$\vbest\leftarrow 0$ \tcp*{the maximum value observed so far}
$\tau \leftarrow \infty$ \tcp*{the threshold value to win the item}
$i^* \leftarrow 0$ \tcp*{the agent who receives the item}
active-winner $\gets $ false\; 
\While{$L\neq N$}
{
    \If{some bidder $i$ arrives (tie-break using $\pi$)}
    {
        $A\leftarrow A\cup\{i\}$\;
        \If{$v_i\geq \tau$}
            {$i^* \gets i$ and then break from while-loop}
        
    }
    \ElseIf{some bidder $i$ departs  (tie-break using $\pi$)}
    {
        $A\leftarrow A\setminus\{i\}$ and $L\leftarrow L \cup \{i\}$\;
        $\vbest\leftarrow\max\{\vbest, v_i\}$   \;
        \If{$|L| = \lceil\frac{1-\alpha}{2} n \rceil$  }
        {
            $\tau \leftarrow \max\{\vbest, \prediction\}$ \tcp*{threshold update when entering second phase}  \label{line:first update} 
        }
        \ElseIf{$|L|= \lfloor\frac{1+\alpha}{2} n\rfloor$ }
        {
            $\tau \leftarrow \vbest$ \tcp*{threshold update when entering third phase} \label{line:second update}
        }
        \If{there exists $i\in A$ such that $v_i\geq \tau$}
        {active-winner $\gets$ true\;
        $i^* \gets i$ (tie-break using $\pi$ if needed in choosing $i$) and then break from while-loop \label{line:tiebreaking}\;
        }
    }
}
\Return $i^*$, $\tau$, active-winner
\caption{\textsc{Alloc}: the allocation rule of the $\mech$ auction}
\label{alg:alloc}
\end{algorithm}

\vspace{.5cm}

 Process~\ref{alg:payment} is the execution of the \textbf{payment rule}, i.e., it determines how much the winner, if any, should pay for the item. The price is initially set to be equal to the threshold $\tau$ at which the item was secured and the final price will be no more than that. However, under some circumstances, the price is reduced to guarantee strategyproofness. Specifically, if the winner secured the item during the second phase and remains active during the third phase, they may receive a lower price. In this case, the price is determined by simulating the allocation process without the winning bidder, $i^*$. If the new winner $i'$, in the absence of $i^*$, either i) is not active during the transition into the third phase or ii) loses to $i^*$ in tie-breaking, then the price $p$ is lowered to the threshold $\tau'$ at which $i'$ would have secured the item. Intuitively, if neither of these two conditions holds and we did not offer $i^*$ the reduced price, then $i^*$ could report a value of $\tau'$ instead of her true value and secure the item at that lower price right after the transition into the third phase. \newline

\begin{algorithm}[H]
\setstretch{1.1}
\SetKwInOut{Input}{Input}
\Input{the output of the allocation rule: $i^*$, $\tau$, active-winner}
  $p \gets \tau$ \tcp*{initialize the price to $\tau$, the threshold at which $i^*$ claimed the item}
\tcp{If the winner is determined in second phase and remains active in third phase}
\If{$\tau = \prediction$ and $i^* > \lfloor\frac{1+\alpha}{2}n\rfloor$}{
     $i', \tau', \text{active-winner}' \leftarrow  \textsc{Alloc}(n-1, \Theta_{-i^*}, \alpha, \prediction)$\label{line:rerun}\tcp*{simulate allocation without $i^*$} 
    \tcp{If $i'$ is inactive when entering third phase or loses in tie-breaking to $i^*$}
    \If{$\text{active-winner}' =$ false or $i^* \pref i'$ }{
        $p \gets \tau'$ \tcp*{update the price to $\tau'\leq \tau$} \label{line:loserdefine}
    }
}

\Return{$p$}
\caption{the payment rule of the $\mech$ auction}
\label{alg:payment}
\end{algorithm}

\vspace{.2cm}

\begin{observation}
The $\mech$ auction can be implemented in an online fashion.
\end{observation}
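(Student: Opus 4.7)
The plan is to verify separately that the allocation rule (Process~\ref{alg:alloc}) and the payment rule (Process~\ref{alg:payment}) can each be executed using only information available at the moment their decisions must be finalized.

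For the allocation rule the claim is essentially immediate from the pseudocode: it is written as a single chronological sweep over arrival and departure events, and every state variable ($A$, $L$, $\vbest$, $\tau$, $i^*$, active-winner) is updated using only events that have already occurred. The winner $i^*$ is locked in either at an arrival with $v_i \geq \tau$ or at a departure that triggers a phase transition whose new $\tau$ is already met by some currently active bidder, and the item itself is not allocated until time $d_{i^*}$, which is never earlier than the moment $i^*$ was determined.

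The delicate point is the payment rule, which calls the simulation $\textsc{Alloc}(n-1, \Theta_{-i^*}, \alpha, \prediction)$ on line~\ref{line:rerun}. I need to argue that the quantities actually consulted by line~\ref{line:loserdefine}—namely $\tau'$ and, when relevant, active-winner' and $i'$—are pinned down by information available by $d_{i^*}$, the time at which the payment is finalized. The key observation is that the simulation is invoked only when $i^* > \lfloor\frac{1+\alpha}{2}n\rfloor$, which means $d_{i^*}$ occurs at or after the original auction's phase-3 transition. Because the simulation runs on $n-1$ bidders, its own phase-3 transition takes place after the $\lfloor\frac{1+\alpha}{2}(n-1)\rfloor$-th departure of $\Theta_{-i^*}$, and the inequality $\lfloor\frac{1+\alpha}{2}(n-1)\rfloor \leq \lfloor\frac{1+\alpha}{2}n\rfloor$ ensures that this transition has already happened in real time by $d_{i^*}$.

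The remainder is a short case split. If some active bidder in the simulation satisfies a newly lowered threshold at one of the simulation's phase transitions, then active-winner' becomes true at that transition—an event occurring no later than $d_{i^*}$—so $i'$ and $\tau'$ are known, and the comparison $i^* \pref i'$ can be evaluated at $d_{i^*}$. Otherwise, by $d_{i^*}$ every phase transition in the simulation has been processed without triggering active-winner', so that flag is permanently false regardless of what happens later; the payment rule then simply reduces $p$ to $\tau'$, which is pinned down by the last threshold update in the simulation (either $\prediction$ from its phase-2 entry or the simulation's $\vbest$ recorded at the start of its phase 3), both of which depend only on events preceding $d_{i^*}$. The main obstacle I anticipate is precisely this timing analysis of the internal simulation call, since without it the payment rule could naively appear to depend on bidders who arrive after the item has already been allocated.
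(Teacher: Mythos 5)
Your proposal is correct and follows essentially the same route as the paper's proof: split into allocation and payment rules, note that the item is allocated (and the price finalized) at the winner's departure, and argue that the simulated run $\textsc{Alloc}(n-1,\Theta_{-i^*},\alpha,\prediction)$ needs no information beyond $d_{i^*}$ because its phase transitions---the only points where active-winner$'$ can be set and where $\tau'$ is fixed---occur by the time $i^*$ departs. Your explicit timing argument (at least $\lfloor\frac{1+\alpha}{2}n\rfloor \geq \lfloor\frac{1+\alpha}{2}(n-1)\rfloor$ departures of $\Theta_{-i^*}$ precede $d_{i^*}$) is just a more detailed rendering of the paper's one-line claim that the auction can determine $i'$, $\tau'$, and active-winner$'$ without simulating past the departure of $i^*$.
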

\begin{proof}
It is easy to verify that the allocation rule is online implementable since the auction maintains a threshold $\tau$ at any point and decides the winner when some active bidder's value is above the threshold; this requires no future information. We now argue that the payment rule is online implementable as well. Crucially, note that the winner is allocated the item at the time of their departure, so all we need to argue is that the price that they need to pay can be determined at that point. To verify this fact, note that if $i^*$ is not active during the transition from the second phase to the third phase, then her price is just $\tau$. If, on the other hand, $i^*$ is active during that transition, then the auction can also check the value of any other bidder that is also active up to that transition to determine $i', \tau'$ and active-winner$'$, without needing to simulate any portion of the allocation rule beyond the departure of $i^*$.
\end{proof}

Our main result in this section shows that the $\mech$ not only guarantees value- and time-strategyproofness, but it also achieve a non-trivial tradeoff between robustness and consistency.

\begin{theorem} \label{thm:mechanismmaster}
\mech\ is a value-strategyproof and time-strategyproof online auction that, given any parameter $\alpha\in \range$, simultaneously guarantees $\alpha$-consistency and $\frac{1 - \alpha^2}{4}$-robustness.\end{theorem}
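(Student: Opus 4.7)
I would split the proof into value-strategyproofness and time-strategyproofness. For the former I invoke a Myerson-style threshold-payment characterization: holding the other bidders' reports fixed, bidder $i$'s allocation outcome is monotone in her reported value, and the price produced by Process~\ref{alg:payment} equals the infimum value at which $i$ would still win. The delicate case is when $i$ secures the item during phase 2 and remains active at the transition into phase 3; the rerun of \textsc{Alloc} without $i$ is designed so that $\tau'$ is precisely $i$'s critical value, i.e.\ below it either the shadow winner $i'$ snatches the item at the transition or beats $i$ in tie-breaking. For time-strategyproofness I would handle delayed arrival, earlier reported departure, and later reported departure in turn. The first two can only shrink $i$'s window for meeting a threshold; a later reported departure either changes nothing or pushes the allocation past $[a_i,d_i]$, yielding utility $-p \leq 0$. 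The one subtle deviation is a phase-2 winner who delays her departure to trigger the payment reduction, and the definition $p=\tau'$ is exactly what forecloses it. I expect this to be the main technical hurdle and to require a careful case analysis organized by the phase in which $i^*$ wins under truth versus under the considered deviation.

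\textbf{Consistency.} Assume $\prediction = v_{(1)}$ and let $E$ denote the event that the bidder to whom the random matching assigns $v_{(1)}$ has departure index in the phase-2 range $\{\tfrac{1-\alpha}{2}n+1,\dots,\tfrac{1+\alpha}{2}n\}$. By the integrality assumption on $\alpha \in \range$, this range has size $\alpha n$, so uniformity of the matching yields $\Pr[E] = \alpha$. On $E$, no phase-1 value exceeds $\prediction$, so the auction enters phase 2 with threshold $\tau = \prediction = v_{(1)}$; only the bidder holding $v_{(1)}$ can meet this threshold, and her active interval necessarily overlaps phase 2 (since her departure lies in it), so she secures the item at price $\prediction$. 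The payment-reduction rule does not fire because she departs before phase 3 begins, hence $\E[\rev] \geq \alpha \cdot v_{(1)}$.

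\textbf{Robustness.} I would split on $\prediction \leq v_{(1)}$ versus $\prediction > v_{(1)}$ and in each regime identify a good event of probability at least $(1-\alpha^2)/4$ on which the revenue is at least $v_{(2)}$. When $\prediction \leq v_{(1)}$, the good event is that $v_{(2)}$ is matched to a phase-1 interval and $v_{(1)}$ to a phase-2 or phase-3 interval; uniformity gives probability $\tfrac{(1+\alpha)n/2}{n}\cdot\tfrac{(1-\alpha)n/2}{n-1} \geq (1-\alpha^2)/4$. On this event $\vbest$ at the end of phase 1 equals $v_{(2)}$, so $v_{(1)}$ meets the phase-2 threshold $\max(v_{(2)},\prediction)\in[v_{(2)},v_{(1)}]$ (or, if her interval only begins in phase 3, the phase-3 threshold $v_{(2)}$); even if the payment-reduction rule fires, re-executing \textsc{Alloc} without $v_{(1)}$ sells nothing, leaving $\tau' = v_{(2)}$. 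When $\prediction > v_{(1)}$, the phase-2 threshold exceeds every true value so nobody accepts in phase 2, and the auction collapses to an HKP-like two-phase scheme with $\tfrac{1+\alpha}{2}n$ observations followed by $\tfrac{1-\alpha}{2}n$ offers at $\vbest$. The symmetric good event---that $v_{(1)}$ departs during phase 3 while $v_{(2)}$ does not---again has probability at least $(1-\alpha^2)/4$ and yields revenue $v_{(2)}$, since the phase-3 threshold is $\vbest = v_{(2)}$.
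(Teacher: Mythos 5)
Your consistency and robustness arguments are correct and follow essentially the same route as the paper's Lemmas~\ref{lem:consistency} and~\ref{lem:robustness}: the same case split on over- versus under-prediction, the same ``good events,'' and the same probability computations $\frac{\first}{n}\cdot\frac{n-\first}{n-1}$ and $\frac{\second}{n}\cdot\frac{n-\second}{n-1}$. You are in fact slightly more careful than the paper in checking that the payment-reduction step of Process~\ref{alg:payment} cannot push the price below $v_{(2)}$ on the good event, which is a worthwhile observation.

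The strategyproofness part, however, is a plan rather than a proof, and that is where the real content of the theorem lies. You assert that the allocation is monotone in the reported value and that the price output by Process~\ref{alg:payment} equals bidder $i$'s critical value---in particular that the rerun in Line~\ref{line:rerun} produces exactly the right threshold $\tau'$ in the case where the winner straddles the phase-2/phase-3 transition---but these are precisely the claims that need verification, and you explicitly defer the case analysis (``I expect this to require a careful case analysis''). The paper's proof consists of exactly that case analysis: Lemma~\ref{lem:firstphase} and Lemma~\ref{obs:singlethreshold} for phase-1 bidders and the single-threshold case, Lemma~\ref{cla:valbidderfirst} for all bidders in the single-threshold case, and Lemmas~\ref{lem:winner2}, \ref{cla:valbidder2}, \ref{cla:midvalbidder2} for the two-threshold case, each of which checks the relevant deviations (e.g., a loser with $\tau_2 < v_i < \tau_1$ overbidding to grab the item at the transition, which is blocked only because the presence of $i'$ forces $p=\tau_1$ via Line~\ref{line:loserdefine}; a phase-2 winner delaying her reported departure, which is neutralized by allocating at the reported departure time). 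A further structural concern with your plan: splitting into ``value-strategyproofness via Myerson'' and ``time-strategyproofness by enumerating arrival/departure deviations'' does not by itself rule out joint misreports of $(a_i,d_i,v_i)$, whereas the paper's lemmas quantify over arbitrary misreports of the full type $\theta_i$ and organize the cases by the threshold structure induced by $\hat{\Theta}_{-i}$ and by where $v_i$ sits relative to $\tau_1,\tau_2$. As written, your proposal identifies the right delicate cases but leaves the decisive verifications---monotonicity, the exactness of $\tau'$ as the critical price under all combinations of active-winner$'$ and tie-breaking, and immunity to combined value/time deviations---unproven.
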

In Section~\ref{sec:revenue}, we prove the consistency and robustness guarantees achieved by our auction. In Section~\ref{sec:strategyproof}, we show that it is strategyproof. Finally, in Section~\ref{sec:error}, we give an extension of our auction that achieves revenue guarantees as a function of the prediction quality.  For presentation purposes we use $\first$ to denote $\frac{1 - \alpha}{2}n$ and $\second$ to denote $\frac{1 + \alpha}{2}n$ in the following analysis.

\subsection{Revenue Guarantees}
\label{sec:revenue}
In this section we analyze the performance of our auction in terms of consistency and robustness. We focus on the values of $\alpha$ in the set $\range=\{\alpha \in [0,1]: \alpha n \in \mathbb{N} \text{ and } \frac{1-\alpha}{2}n \in \mathbb{N}\}$ which make $\alpha n$ and  $\frac{1-\alpha}{2}n$ (the number of bidder departures in the first two phases) are integral. 

\begin{lemma}\label{lem:consistency}
For any $\alpha \in \range$, \mech\ is $\alpha$-consistent.
\end{lemma}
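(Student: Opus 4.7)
My plan is to lower-bound the expected revenue by isolating a single favorable event: the bidder holding the top value lands in the ``phase-2 window'' of departure positions. First I would make two structural observations about the accurate-prediction regime $\prediction = \vmax$. Since no bidder's value strictly exceeds $\prediction$, the phase-1 ``prediction disproved'' skip-to-phase-3 branch never fires, so the auction always enters phase~2. Moreover, throughout phase~2 the threshold is $\tau = \max\{\vbest, \prediction\} = \vmax$, so any bidder that clinches the item in phase~2 must have value equal to $\vmax$.

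Next I would use the random-matching assumption. Let $h$ be a bidder holding value $\vmax$; her departure index is uniformly distributed on $\{1,\ldots,n\}$. Because $\alpha \in \range$, both $\first = \tfrac{1-\alpha}{2}n$ and $\second = \tfrac{1+\alpha}{2}n$ are integers with $\second - \first = \alpha n$, so $h$'s departure index lies in $\{\first + 1, \ldots, \second\}$ with probability exactly $\alpha$. Conditioning on this event, I would verify that the allocation selects a bidder of value $\vmax$ during phase~2 and that the payment equals $\vmax$. For the allocation: if $h$ arrived during phase~1, she is still active when the $\first$-th departure triggers the threshold drop to $\vmax$ (line~\ref{line:first update}), and the subsequent check in Process~\ref{alg:alloc} selects an active bidder with $v_i \ge \tau$; if instead $h$ arrives during phase~2, the arrival-branch check fires the moment she arrives. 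The $\pi$-based tie-break might pick a different value-$\vmax$ bidder, but the winning value remains $\vmax$. For the payment: if the winner's departure index $i^*$ is at most $\second$, the reduction clause of Process~\ref{alg:payment} does not fire and the price equals $\tau = \vmax$; if instead $i^* > \second$ because a $\pi$-preferred value-tied bidder was chosen, then the re-run in line~\ref{line:rerun} still finds another value-$\vmax$ bidder (namely $h$) within the phase-2 window, so $\tau' = \vmax$ and the possibly reduced price is again $\vmax$. Combining, $\E[\rev] \ge \alpha \vmax$, which is the claimed $\alpha$-consistency.

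The main obstacle is bookkeeping rather than anything conceptually deep: I must verify that the check-after-threshold-update in Process~\ref{alg:alloc} really does fire at the phase-1-to-phase-2 boundary (so a phase-1-arriving $h$ wins immediately on that transition), and that value-tie handling, either via $\pi$ at the allocation step or via the removed-winner re-run in the payment step, cannot push the selling price below $\vmax$ so long as some value-$\vmax$ bidder sits in the phase-2 window. Beyond that, the event-based argument neatly separates the consistency bound from the more delicate payment-reduction logic that is only relevant when the winner remains active past phase~2.
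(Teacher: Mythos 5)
Your proposal is correct and follows essentially the same route as the paper: condition on the event that the top-value bidder's departure index falls in $\{\first+1,\dots,\second\}$ (probability exactly $\alpha$ under the random matching), and argue that on this event the item sells at price $\tau=\prediction=\vmax$ during phase~2. The extra bookkeeping you do about $\pi$-tie-breaking and the payment-reduction re-run is a more careful treatment of details the paper's proof leaves implicit, not a different argument.
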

\begin{proof}
Assume that the prediction is correct, i.e., $\prediction = \vmax$. First, observe that $\prediction \geq v_i$ for all $i\in [n]$, so $\tau = \prediction = \vmax$ during Phase 2, and no one would be above the threshold besides the highest value bidder. Additionally, note that no bidder is allocated the item in the first phase. Then the auction would be able to extract revenue $\vmax$ if the highest value bidder is allocated the item during Phase 2 and pays the price $\vmax$. Let $i^*$ be the bidder with the highest value; it is sufficient to guarantee the aforementioned outcome if the departure time is between the $\first$-th departure and the $\second$-th departure, i.e., $i^* \in [\first+1, \second]$, which, based on our random-ordering assumption, occurs with a probability of $\frac{\second - \first}{n} = \alpha$. 
\end{proof}
\begin{lemma}\label{lem:robustness}
For any $\alpha \in \range$, \mech\ is $\frac{(1-\alpha^2)}{4}$-robust.
\end{lemma}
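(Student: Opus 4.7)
}
The plan is to lower-bound the expected revenue by $v_{(2)} \cdot (1-\alpha^2)/4$ for any adversarial prediction by exhibiting a single event over the random matching in which the auction's revenue is at least $v_{(2)}$. Let $k$ and $l$ denote the (random) departure positions assigned $v_{(1)}$ and $v_{(2)}$, respectively, and consider the event
\[E = \{k > \second \text{ and } l \leq \second\},\]
i.e., the highest value ends up in phase 3 and the second-highest ends up in phases 1 or 2. Since the random matching induces a uniform distribution over ordered pairs $(k,l)$ of distinct positions in $[n]$, I would compute
\[\Pr[E] = \frac{\second(n-\second)}{n(n-1)} = \frac{(1+\alpha)(1-\alpha)\,n}{4(n-1)} \;\ge\; \frac{1-\alpha^2}{4}.\]

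The core of the argument is to show that, conditional on $E$, the auction extracts revenue at least $v_{(2)}$ regardless of $\prediction$. The clean intuition is: in $E$, by the time the $\second$-th departure has occurred, the maximum value seen so far equals exactly $v_{(2)}$ (because the $v_{(1)}$-bidder has not yet departed, while the $v_{(2)}$-bidder has). Therefore, upon entering phase 3, the threshold $\tau$ becomes $v_{(2)}$, and the $v_{(1)}$-bidder, who is still active and has value $v_{(1)} \ge v_{(2)}$, claims the item at price exactly $v_{(2)}$ whenever no one has claimed it during phase 2.

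The heart of the proof is therefore showing that, when a bidder \emph{does} win during phase 2, the effective payment (after any payment-rule reduction) is still at least $v_{(2)}$. I would split into three cases. \textbf{(a)} If $l \le \first$, then the phase-2 threshold is $\tau = \max\{\vbest,\prediction\} \ge v_{(2)}$, so the only bidder who can claim is the $v_{(1)}$-bidder; they pay at least $v_{(2)}$, and an invocation of the payment rule's simulation (which removes the $v_{(1)}$-bidder, leaves the $v_{(2)}$-bidder in the first $\second$ departures, and thus yields a fallback threshold $\tau' \ge v_{(2)}$ at entry to the simulated phase 3) preserves the lower bound. \textbf{(b)} If $\first < l \le \second$ and $\prediction > v_{(1)}$, the phase-2 threshold exceeds every value, so no phase-2 winner emerges and we fall back to the phase-3 argument above. \textbf{(c)} If $\first < l \le \second$ and $\prediction \le v_{(1)}$, we are in the delicate regime where the fixed phase-2 threshold may be smaller than $v_{(2)}$; here I would invoke the payment-rule simulation together with the tie-breaking order $\pi$ to show that removing the would-be winner allows the $v_{(2)}$-bidder to emerge as the simulation's active-winner and force a fallback price of at least $v_{(2)}$.

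The most challenging part is case \textbf{(c)}: the fixed phase-2 threshold can sit below $v_{(2)}$, and if a bidder whose value lies in the interval $[\tau_2, v_{(2)})$ claims in phase 2 before the payment rule is invoked (for example when $i^* \le \second$ so the rule never triggers), the revenue can slip below $v_{(2)}$ on that particular realization. Resolving this seems to require using either (i) a finer partition of $E$ paired with a careful bookkeeping argument showing that any shortfall in one realization is compensated by a surplus (e.g., a $v_{(1)}$-bidder forced to pay above $v_{(2)}$ in phase 2) in a coupled realization; or (ii) a reduction showing that the adversary's worst prediction is $\prediction > v_{(1)}$, which degenerates $\mech$ to the $\second$-sample posted-price mechanism whose revenue bound follows from the classical argument of \cite{HKP04}. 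Either route yields $\E[\rev] \ge v_{(2)} \cdot \Pr[E] \ge v_{(2)} \cdot (1-\alpha^2)/4$, which is exactly the desired robustness guarantee.
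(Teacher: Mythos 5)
Your central claim---that conditional on the single, prediction-independent event $E = \{k > \second,\ l \le \second\}$ the auction always extracts revenue at least $v_{(2)}$---is false, and your own case \textbf{(c)} is exactly where it breaks: if $\prediction$ is a severe under-prediction and the $v_{(2)}$-bidder departs only during phase 2 (i.e., $\first < l \le \second$), then the phase-2 threshold $\max\{\vbest,\prediction\}$ is determined by the first $\first$ departures and can be far below $v_{(2)}$, so a mid-value bidder active at the phase-1/2 transition can clinch the item at that low threshold (and the payment rule never raises prices, only lowers them). Neither of your two proposed patches is carried out, and route (ii) is not a proof: robustness is a minimum over all predictions, so you cannot simply declare over-prediction the worst case without showing, instance by instance, that the under-prediction revenue dominates---which is not obvious and is not how the paper argues. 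Moreover, you cannot just discard case (c): within $E$, the sub-event $l \le \first$ that your case (a) handles has probability only $\frac{(1-\alpha)^2}{4}\cdot\frac{n}{n-1} < \frac{1-\alpha^2}{4}$ for $\alpha > 0$, so the missing case carries necessary probability mass. The gap is therefore structural to the single-event approach, not a detail to be filled in.

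The paper's proof avoids this by choosing the ``good'' event as a function of which side the prediction errs on, which is legitimate precisely because robustness is a per-prediction minimum. When $\prediction > v_{(1)}$ (over-prediction), phase 2 is vacuous and it suffices that the $v_{(2)}$-bidder is among the first $\second$ departures while the $v_{(1)}$-bidder is among the last $n-\second$; then the phase-3 threshold equals $v_{(2)}$ and the item surely sells at that price. When $\prediction \le v_{(1)}$ (under-prediction), the paper instead requires the $v_{(2)}$-bidder to depart among the first $\first$ (during phase 1) and the $v_{(1)}$-bidder among the last $n-\first$; this pins $\vbest \ge v_{(2)}$ already at the phase-1/2 transition, so every threshold in phases 2 and 3 (and hence any price, including the reduced price $\tau'$ from the payment rule) lies in $[v_{(2)}, v_{(1)}]$, and the item surely sells since the $v_{(1)}$-bidder is active after phase 1. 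Both events have probability $\frac{1-\alpha^2}{4}\cdot\frac{n}{n-1} \ge \frac{1-\alpha^2}{4}$ thanks to the symmetry $\first = n - \second$. Your over-prediction branch (case (b) plus the probability computation) essentially matches the paper's first case, but to repair the argument you should replace your under-prediction branches (a) and (c) by the paper's stricter event $\{l \le \first,\ k > \first\}$ rather than trying to rescue the event $E$.
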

\begin{proof} Recall that robustness is measured relative to $v_{(2)}$. We now consider the cases where we under-predict and over-predict separately. In the first case, we have $\prediction > \vmax$. Since $\tau = \prediction > v_i$ for any bidder $i$, we have that bidders can only be above the threshold in phase 3. In this case, by the payment rule, it is sufficient to extract $v_{(2)}$ if $\tau = v_{(2)}$, which requires the second highest bidder to be amongst the first $\second$ bidders to depart and bidders with the highest value are amongst the last $n - \second$ to depart. Then with probability $\frac{\second}{n} \frac{n - \second}{n - 1} = \frac{1 - \alpha^2}{4} + O(\frac{1}{n})$, revenue $v_{(2)}$ is extracted.

In the second case, we have $\prediction \leq \vmax$. Observe that in the under-predicted case, the prices in phases 2 and 3 are guaranteed to fall in the range $[v_{(2)}, \vmax]$ if the second highest bidder is amongst the first $\first$ bidders to depart. Bidders with the highest value may be allocated the item if they are present at any point in phases 2 and 3, meaning they are amongst the last $n - \first$ to depart. Then with probability $\frac{\first}{n} \frac{n - \first}{n - 1} = \frac{1 - \alpha^2}{4} + O(\frac{1}{n})$, revenue $v_{(2)}$ is extracted.
\end{proof}
\subsection{Strategyproofness}\label{sec:strategyproof} 
In this section we show our auction is both value-strategyproof and time-strategyproof. All the missing proofs are deferred to Appendix~\ref{app:strategyproof}. We first show that the bidders with the first $\first$ departure times have no incentive to misreport.
\begin{lemma}\label{lem:firstphase}
Consider some bidder $i$ and any $\hat{\Theta}_{-i}$. If bidder $i$'s true departure time is in the first $\first$, then bidder $i$ has no incentive to misreport her type $\theta_i$.
\end{lemma}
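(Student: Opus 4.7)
The plan is to argue by cases on bidder $i$'s reported departure time $\hat{d}_i$, using as a pivot the fact that truthful reporting yields utility exactly $0$: since $i$'s true position in the departure order is among the first $\first$, truthfully reporting places her in Phase 1, where $\tau=\infty$, so no bidder can clinch the item while she is being processed and her utility is $0$. It therefore suffices to show that every misreport $\hat{\theta}_i=(\hat{a}_i,\hat{d}_i,\hat{v}_i)$ yields utility at most $0$.

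First, suppose $\hat{d}_i \leq d_i$. Because $i$'s position in the merged departure queue is determined by $\hat{d}_i$ against the fixed reports $\hat{\Theta}_{-i}$, moving $\hat{d}_i$ no later than $d_i$ can only move $i$ earlier in the queue; since $d_i$ already lies in the first $\first$ positions, so does $\hat{d}_i$. Thus $i$ is still processed in Phase 1, the threshold is $\infty$, and regardless of how she chooses $\hat{v}_i$ or $\hat{a}_i \geq a_i$ she cannot win; her utility is $0$.

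Next, suppose $\hat{d}_i > d_i$. Now $i$ might be pushed beyond position $\first$ and become potentially eligible in Phase 2 or Phase 3. If she still does not win, her utility is $0$. If she does win, then by the allocation rule the item is committed to her at her reported departure time $\hat{d}_i$; but $\hat{d}_i > d_i$ lies outside her true active interval $[a_i,d_i]$, so she derives no value from receiving the item. Her utility is therefore $-p$, where $p$ is the payment produced by Process~\ref{alg:payment}. Since $p$ is set either to $\tau \in \{\max\{\vbest,\prediction\},\vbest\}$ or to a simulated threshold $\tau'$ of the same form, $p$ is always non-negative, giving utility $-p \leq 0$. Combining the two cases, no deviation beats truth-telling, so bidder $i$ has no incentive to misreport.

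The main subtlety that this argument has to confront is the coupling between value misreports and departure misreports: one might worry that inflating $\hat{v}_i$ while pushing $\hat{d}_i$ past the Phase 1--2 transition could simultaneously let $i$ win and drive the payment down via the price-reduction rerun in Process~\ref{alg:payment}. The argument above dispatches this worry by leaning on the time-strategyproof design choice that allocation occurs at $\hat{d}_i$: once $\hat{d}_i > d_i$, the indicator that $i$ is active at allocation time is $0$, and no manipulation of $\hat{v}_i$ can make a non-negative price translate into positive utility. Writing the two cases out carefully, with attention to the fact that ``position in the first $\first$'' is defined relative to $\hat{\Theta}_{-i}$ rather than any truthful reference, is the only bookkeeping required.
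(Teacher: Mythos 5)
Your proof is correct and follows essentially the same route as the paper's: truthful reporting keeps bidder $i$ in the first phase where $\tau=\infty$ (utility $0$), and the only way to win is to report a departure time later than her true one, at which point the item is allocated at $\hat{d}_i > d_i$, outside her true active interval, so with a non-negative payment her utility is at most $0$. Your explicit split into $\hat{d}_i \le d_i$ versus $\hat{d}_i > d_i$ and the remark that $p \ge 0$ are just slightly more spelled-out versions of the paper's argument.
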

\begin{proof}
First note that if such bidder $i$ reports her type truthfully, she won't receive the item since $\tau = \infty$ before the departure of the $\first$-th bidder. Let $\hat{d}$ be the reported departure time of the $\first +1$-th bidder based on the departure schedule. The only way for bidder $i$ to possibly obtain the item is to report a later departure time, denoted as $\hat{d}_i$, such that $\hat{d}_i > \hat{d} \geq d_i$. However, the auction allocates at her reported departure time $\hat{d}_i$, which falls outside her active time. Based on our assumption, she receives no value from the item. Such bidder therefore has no incentive to change her type to obtain the item.
\end{proof}

We now make the following observation: if there exists a value in the first $\first$ bidders that is weakly more than the prediction, then the price the winner pays is fixed.
\begin{restatable}{rLem}{obssinglethreshold}
\label{obs:singlethreshold}
Let $\bestsofar{\first}$ be the maximum value of the first $\first$ departed bidders. If in Line~\ref{line:first update}, $\tau = \bestsofar{\first}$, then the price winners pays is $p = \bestsofar{\first}$. 
\end{restatable}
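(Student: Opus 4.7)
The plan hinges on the equivalence ``$\tau = \bestsofar{\first}$ at Line~\ref{line:first update}'' $\iff$ ``$\bestsofar{\first} \geq \prediction$'', which pins the Phase~2 threshold at exactly $\bestsofar{\first}$. My strategy is to show that the effective winning threshold collapses to $\bestsofar{\first}$ in both the primary run of \textsc{Alloc} and the simulated run used to compute the payment, so that both the initial price $p=\tau$ and any reduced price $p\gets\tau'$ in Process~\ref{alg:payment} equal $\bestsofar{\first}$.

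First I would analyze the threshold $\tau$ at the moment the winner $i^*$ clinches the item in the primary run. If $i^*$ wins in Phase~2, then $\tau$ is the value set at Line~\ref{line:first update}, which is $\bestsofar{\first}$. If $i^*$ wins in Phase~3, I claim $\vbest$ at the start of Phase~3 must also equal $\bestsofar{\first}$: otherwise some bidder $j$ with departure index in $[\first+1,\second]$ satisfies $v_j > \bestsofar{\first}$, and since $j$ is necessarily active during Phase~2 (at least at the moment of their own departure), $j$ would have been eligible at the Phase~2 threshold $\bestsofar{\first}$ --- either at the Phase~1--2 transition (if $j$ arrived by then) or upon their own arrival during Phase~2. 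Either case would force a Phase~2 winner, contradicting the assumption. Hence the initial price equals $\bestsofar{\first}$ in either case.

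It remains to rule out the payment reduction changing this, which only triggers when $\tau = \prediction$ (forcing $\bestsofar{\first} = \prediction$) and $i^* > \second$. In this regime the payment rule simulates \textsc{Alloc} on $\Theta_{-i^*}$. Because $i^* > \second > \first$, the first $\first$ departures coincide in the simulation and the original, so $\bestsofar{\first}$ is unchanged and the simulated Phase~2 threshold is again $\bestsofar{\first}$. Applying the same two-case analysis to the simulated winner $i'$ now gives $\tau' = \bestsofar{\first}$: in Phase~2 by construction; in Phase~3 by the would-have-clinched argument, now applied to indices $[\first+1, \second-1]$ since the simulation's Phase~2 ends at the $(\second-1)$-th departure because $\lfloor \tfrac{1+\alpha}{2}(n-1)\rfloor = \second-1$. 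Whether the reduction fires or not, the final price is $\bestsofar{\first}$.

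The main subtlety I expect is exactly this boundary shift from $\second$ to $\second-1$ when passing to the simulation: my would-have-clinched argument in the simulation only rules out high values in $[\first+1,\second-1]$ rather than $[\first+1,\second]$, but this is precisely the range of departures during the simulated Phase~2, so the argument still closes. The integrality hypothesis $\alpha \in W_n$ is what preserves the Phase~1 boundary at $\first$ in both executions, which is what lets $\bestsofar{\first}$ be literally the same quantity in the original and the simulation.
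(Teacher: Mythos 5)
Your proposal is correct, and its first half is the same argument as the paper's: the paper also splits on whether the winner clinches before or after the $\second$-th departure, and in the latter case argues $\bestsofar{\second}=\bestsofar{\first}$ because any bidder with a larger value departing in $[\first+1,\second]$ would have triggered a Phase~2 sale (your ``active at the transition or checked upon arrival'' reasoning is just a more explicit version of this). Where you genuinely diverge is the treatment of the payment rule: the paper disposes of Process~\ref{alg:payment} in one line by asserting that in the single-threshold case $\tau$ never equals $\prediction$, so the reduction branch is simply never entered and $p=\tau$. You instead allow for the boundary case $\prediction=\bestsofar{\first}$, in which the guard ``$\tau=\prediction$ and $i^*>\second$'' can literally hold, and you show that even then the simulated run on $\Theta_{-i^*}$ is again a single-threshold execution with the same $\bestsofar{\first}$ (since $i^*>\second>\first$ the early departures coincide), so $\tau'=\bestsofar{\first}$ and the price is unchanged whether or not the reduction fires; your bookkeeping of the simulated phase boundary moving from $\second$ to $\second-1$ is accurate under the reading that the recursive call uses $n-1$ bidders, and the argument goes through under the alternative reading as well. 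The paper's shortcut is shorter but implicitly relies on ties between $\prediction$ and $\bestsofar{\first}$ being excluded (or on interpreting the check as recording which branch of the $\max$ set $\tau$); your version buys robustness to that tie at the cost of one extra case analysis of the simulation, and both yield the stated conclusion $p=\bestsofar{\first}$.
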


We refer to the above scenario as the \emph{single-threshold case} since $\tau$ is effectively only updated once (from $\infty$ to $\bestsofar{\first}$), analogously, we refer to the other scenario as the \emph{two-threshold case}. We note that only the first $\first$ bidders can define $\bestsofar{\first}$ therefore they (together with the prediction) decide which case the rest of the bidders are in. We first show that the rest of the bidders have no incentive to lie in the single-threshold case. 

\begin{restatable}{rLem}{lemvalbidderfirst}
 \label{cla:valbidderfirst}
 Consider some bidder $i$, and any $\hat{\Theta}_{-i}$ that results in the single-threshold case and let $\tau$ be the thresholds defined in Line~\ref{line:first update}. Bidder $i$  has no incentive to misreport her type $\theta_i$.
\end{restatable}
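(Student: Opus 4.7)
The plan is to invoke Observation~\ref{obs:singlethreshold} so that in the single-threshold case the auction reduces, from the perspective of every bidder whose reported departure is outside the first $\first$, to a posted price of $\bestsofar{\first}$. Since Lemma~\ref{lem:firstphase} already handles bidders whose true departure is among the first $\first$, I would restrict attention to a bidder $i$ whose true departure $d_i$ is later than the $\first$-th departure in $\hat{\Theta}_{-i}$, so that her truthful utility is $\max\{0,\,v_i-\bestsofar{\first}\}$. I would then split all deviations $\hat{\theta}_i=(\hat{a}_i,\hat{d}_i,\hat{v}_i)$ into two families, according to whether $\hat{d}_i$ keeps $i$ outside the first $\first$ reported departures or places her inside.

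For the first family, the first $\first$ departures are still exactly those from $\hat{\Theta}_{-i}$, so $\bestsofar{\first}$ is unchanged and bidder $i$ faces a fixed price $\bestsofar{\first}$ whenever she wins. I would then dispose of the three manipulation types in turn. For value misreports, note that Process~\ref{alg:alloc} only compares $v_i$ to $\tau=\bestsofar{\first}$ and uses the fixed order $\pi$ for tie-breaking, so when $v_i\ge \bestsofar{\first}$ any $\hat{v}_i\ge \bestsofar{\first}$ produces the same allocation as truthful while any $\hat{v}_i<\bestsofar{\first}$ forfeits the item for utility $0\le v_i-\bestsofar{\first}$, and when $v_i<\bestsofar{\first}$ any over-report $\hat{v}_i\ge \bestsofar{\first}$ purchases at a strict loss. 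For a later-departure misreport $\hat{d}_i>d_i$, Process~\ref{alg:alloc} allocates at time $\hat{d}_i$, which lies outside $i$'s true active window, so if $i$ becomes the winner she derives value $0$ while still paying $\bestsofar{\first}$, for utility $-\bestsofar{\first}\le 0$. For an earlier-departure misreport $\hat{d}_i<d_i$ that still lands outside the first $\first$, or a delayed arrival $\hat{a}_i>a_i$, the reported active interval is a subset of $[a_i,d_i]$, so any winning opportunity under the misreport already existed under truthful reporting and yields the same utility.

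For the second family $\hat{d}_i$ is early enough to place bidder $i$ among the first $\first$ reported departures; she can now influence $\bestsofar{\first}$, but Process~\ref{alg:alloc} keeps $\tau=\infty$ throughout Phase~1 and never allocates the item before the $\first$-th departure, so she receives nothing at $\hat{d}_i$ and pays nothing, for utility $0$, which is at most her truthful utility.

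The main subtlety I anticipate sits inside the value-misreport step of the first family: I have to confirm that the price-reduction branch of Process~\ref{alg:payment} (Line~\ref{line:rerun}) does not secretly introduce a dependence of the winner's final price on $\hat{v}_i$. Because any winner $i^*$ that triggers this branch satisfies $i^*>\second>\first$, removing $i^*$ in the re-run leaves the first $\first$ departures unchanged, so the re-run still sits in the single-threshold regime with $\tau'=\bestsofar{\first}$ and the price stays at $\bestsofar{\first}$. With that check in place, combining the two families of deviations gives the lemma.
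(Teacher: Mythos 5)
Your proof is correct, and it reaches the lemma by a somewhat different decomposition than the paper. The paper's proof splits on the bidder's value relative to $\tau$ ($v_i<\tau$ versus $v_i\geq\tau$) and, in the latter case, on the outcome (winner versus loser, and for losers on \emph{why} they lose: an earlier above-threshold rival, which cannot be overcome since arrivals cannot be reported earlier, or a tie-break under $\pi$, which is report-independent). You instead split on whether the reported departure lands inside or outside the first $\first$ departures and then argue deviation-type by deviation-type, leaning more heavily on Observation~\ref{obs:singlethreshold} to pin the price at $\bestsofar{\first}$ and on the subset-interval observation (delayed arrival plus a departure report inside the true window can only shrink winning opportunities). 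The two routes use the same underlying facts, but yours makes explicit two points the paper leaves implicit: that a bidder whose true departure is late cannot gain by reporting an early departure so as to enter Phase~1 and distort $\bestsofar{\first}$ (she simply gets utility $0$), and that the price-reduction branch of Process~\ref{alg:payment} is inert in the single-threshold case because the re-run without $i^*$ leaves the first $\first$ departures, and hence $\tau'=\bestsofar{\first}$, unchanged. Two small cautions: your opening claim that the truthful utility \emph{equals} $\max\{0,\,v_i-\bestsofar{\first}\}$ is an overstatement (it is $0$ whenever a rival with value at least $\tau$ secures the item earlier or wins the tie-break), though this is not load-bearing since your later comparisons are made against the truthful allocation directly or against $0$; and since you organize by manipulation type, you should state explicitly that the three arguments compose for joint deviations (fixed price, binary threshold test with $\pi$-tie-breaking, allocation at the reported departure), which is exactly the composition your bullets already support.
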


We now discuss the more involved case, the two-threshold case. For the ease of presentation, we will denote the threshold defined in Line~\ref{line:first update} as $\tau_1$ and the threshold defined in Line~\ref{line:second update} as $\tau_2$. Note that $\tau_2 < \tau_1$ ($\tau_2 = \tau_1$ is equivalent to the single-threshold case). We first show that winners in this case cannot manipulate the price via misreporting.

\begin{lemma}\label{lem:winner2}
Consider some bidder $i$, and any $\hat{\Theta}_{-i}$ that results in the two-threshold case. If $i$ is the winner with her true type $\theta_i$, then bidder $i$ has no incentive to misreport her type $\theta_i$. 
\end{lemma}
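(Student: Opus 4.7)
My plan is to split by the phase in which bidder $i$ wins under truthful reporting and then rule out each type of deviation (value, departure time, arrival time) in turn. In the two-threshold case we have $\tau_1 = \prediction > \bestsofar{\first} = \tau_2$, and since $i$ is the truthful winner her value clears whichever threshold was in effect when she secured the item. Her actual payment is either $\tau_1$, $\tau_2$, or the reduced price $\tau'$ produced by the simulation in Line~\ref{line:rerun} on $\Theta_{-i^*}$; in all three subcases $p \leq v_i$, so her truthful utility is nonnegative, which is the benchmark we must beat.

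For misreporting the value, the key structural observation is that, conditional on $i$ winning, her price is a function of $\hat{\Theta}_{-i}$ alone and is independent of $\hat{v}_i$. Consequently, any $\hat{v}_i$ that still clears the same threshold produces the same allocation and payment; any $\hat{v}_i$ below that threshold either makes $i$ lose outright (utility $0$) or causes her to win at a later threshold in Phase~3. In that second case I would check via Process~\ref{alg:payment} that the new price is at least her original payment, so no strict improvement is possible. For time misreports, a late $\hat{d}_i > d_i$ causes the item to be allocated outside $i$'s true active window, giving utility $-p \leq 0$; an earlier $\hat{d}_i < d_i$ only moves $i$'s position in the departure order, which can eliminate her winning opportunity altogether or raise her price by changing which phase transition she sits astride; a delayed arrival $\hat{a}_i > a_i$ can only shrink her active window and may cause her to miss the phase in which she was winning. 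Each such deviation weakly lowers her utility.

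The main obstacle I anticipate is handling the payment reduction in Line~\ref{line:loserdefine}: when $i$ wins in Phase~2 but remains active at the transition into Phase~3, her price may drop from $\tau_1$ to the simulated $\tau'$. Because $\tau'$ depends only on $\Theta_{-i^*}$, bidder $i$ cannot change its numerical value through her own misreport, but she can influence \emph{whether} the reduction is triggered via her reported arrival and departure times. The subtle argument is that the conditions triggering the reduction (being above the Phase~2 threshold while remaining active in Phase~3, together with the active-winner/tie-breaking condition on $i'$) are already met at her truthful type whenever the reduction would benefit her, so any manipulation that alters the trigger either leaves the price unchanged or knocks her out of the winning position entirely. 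Making this precise will require a careful compare-outcomes analysis across the combinations of $(\hat{a}_i, \hat{d}_i)$ in the neighbourhood of the two phase boundaries, and I expect this to be the technically heaviest portion of the proof.
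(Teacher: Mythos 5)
There is a genuine gap: the heart of this lemma is showing that a bidder who truthfully wins at price $\tau_1$ cannot manipulate her report so as to win at the lower price $\tau_2$, and your proposal explicitly defers exactly this step (``a careful compare-outcomes analysis across the combinations of $(\hat{a}_i,\hat{d}_i)$ \dots the technically heaviest portion of the proof'') without carrying it out. Worse, the one concrete claim you do make about it is wrong: you assert that if she reports $\hat{v}_i$ below the Phase-2 threshold and ``wins at a later threshold in Phase~3,'' then the new price is at least her original payment. If that deviation actually resulted in her winning after the threshold drop, her price would be $\tau_2<\tau_1$, a strict improvement. The correct reason the deviation fails is different, and it is exactly the case analysis the paper performs on \emph{why} her truthful price stayed at $\tau_1$: either (i) she departs before the $\second$-th departure, so staying into Phase~3 requires reporting $\hat{d}_i>\hat{d}_{\second}\geq d_i$ and the item is then allocated outside her true active time (utility $-p\leq 0$); or (ii) the simulation of Line~\ref{line:rerun} yields $\tau'=\tau_1$, i.e.\ some other bidder clears $\tau_1$ before the drop, so if she underreports, that bidder takes the item during Phase~2 and she gets nothing; or (iii) $\checkp'=\text{true}$ with $i'\pref i$, so at the moment the threshold drops she loses the tie-break to $i'$ and again gets nothing. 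In every case the deviation yields utility at most $0$, not a win at a weakly higher price; and conversely, whenever winning at the reduced price after the drop \emph{is} feasible, the payment rule (Line~\ref{line:loserdefine}) already charges her $\tau'$ under truthful reporting. None of this is established in your write-up.

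Two smaller inaccuracies: in the two-threshold case $\tau_2$ is the maximum value among the first $\second$ departures (the value of $\vbest$ at the second update), not $\bestsofar{\first}$ as you state; and the claim that, conditional on winning, the price is a function of $\hat{\Theta}_{-i}$ alone is not quite right, since whether the simulation of Line~\ref{line:rerun} is even triggered depends on bidder $i$'s own reported departure time (the condition $i^*>\second$). Your treatment of the $\tau_2$-price winner and of late-departure reports is fine, but as it stands the proposal is a plan whose decisive step is missing and whose stated shortcut for that step does not work.
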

\begin{proof}
Consider the two possible prices the bidder is paying, $\tau_1$ and $\tau_2$. If bidder $i$ wins and pays $\tau_2$ (the cheaper one), she has no incentive to deviate as it is the best outcome. Consider the cases where bidder $i$ wins with price $\tau_1$. By the payment rule, she either left before the $\tau$ update (she is ranked before $\second$ with respect to departure time) or there must exist a bidder $i'$ that wins in Line~\ref{line:rerun} of the payment rule. Consider the three possible cases below:

\textbf{Case one:} $i \leq \second$. In this case the bidder $i$ departs before the threshold drops to $\tau_2$. To possibly get the lower price, she has to report a departure time $\hat{d}_i > \hat{d}_{\second} \geq d_i$. However, the auction allocates at her reported departure time $\hat{d}_i$ which falls outside her active time. Based on our assumption, she receives no utility from the item. 

\textbf{Case two:} $\tau' = \tau_1$, so some other bidder $i'$ is above the threshold $\tau_1$ before it drops. In this case, the winner cannot reduce the price since the existence of such a bidder is independent of her report.

\textbf{Case three:} $\checkp = \text{true}$ and $i' \pref i$. This case happens when $i'$ becomes available right after the threshold drops to $\tau_2$. Since both the tie-breaking rule and the existence of bidder $i'$ are independent of bidder $i$'s report, bidder $i$ can't get a better price via misreporting.
\end{proof}

We now show that the losing bidders with values below $\tau_2$ and above $\tau_1$ cannot benefit from misreporting as well.  The proof is almost identical to the proof of Lemma~\ref{cla:valbidderfirst}. 

\begin{restatable}{rLem}{lemvalbiddertwo}
\label{cla:valbidder2}
Consider some bidder $i$, and any $\hat{\Theta}_{-i}$ that results in the two-threshold case and let $\tau_1$ and $\tau_2$ be the thresholds defined in Line~\ref{line:first update} and Line~\ref{line:second update} respectively.  If bidder $i$ has a value $v_i \leq \tau_2$ or  $v_i \geq \tau_1$, she has no incentive to misreport her type $\theta_i$.
\end{restatable}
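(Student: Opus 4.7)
My plan is to mirror the proof of Lemma~\ref{cla:valbidderfirst}, adapting the case analysis to accommodate the two thresholds $\tau_1 > \tau_2$. Fix an arbitrary $\hat{\Theta}_{-i}$ that puts us in the two-threshold case. A useful initial remark: because $\tau_2 \geq \bestsofar{\first}$, the two-threshold condition $\tau_2 < \tau_1$ forces $\tau_1 = \prediction$, so $\tau_1$ is completely independent of $i$'s report; and $\tau_2$ can depend on $i$'s report only when $i$'s true departure rank is at most $\second$, in which case $i$ has already left the auction before phase~3 begins. I will then consider the two disjoint value ranges in the statement separately.

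For the low-value case $v_i \leq \tau_2$, under truthful reporting $i$ is strictly below both thresholds during her active window, so she never secures the item and her utility is $0$. I would argue that no deviation $\hat{\theta}_i$ can yield positive utility: if $i$ becomes a winner within her true active window, the price she pays is either $\tau_1$, $\tau_2$, or the reduced value produced by the simulation in Process~\ref{alg:payment}, and each of these is at least $\tau_2 \geq v_i$, making her utility at most $v_i - \tau_2 \leq 0$. The only remaining concern is that $i$ might try to lower $\tau_2$ by reporting a smaller $\hat{v}_i$, but this is only meaningful when her true departure rank is at most $\second$, in which case reaching phase~3 requires additionally inflating $\hat{d}_i > d_i$; the allocation then happens outside her active window, yielding utility $-p \leq 0$.

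For the high-value case $v_i \geq \tau_1$, I would first dispatch the subcase where $i$'s true departure falls in the first $\first$ positions by appealing directly to Lemma~\ref{lem:firstphase}. Otherwise $i$ is active during some portion of phase~2. If $i$ is the winner under truthful reporting, Lemma~\ref{lem:winner2} applies. The remaining subcase, which I expect to be the main obstacle, is when some other bidder $j$ with $v_j \geq \tau_1$ wins in phase~2 ahead of $i$ (either because $j$ is active strictly earlier, or because $j$ beats $i$ in the $\pi$ tie-break at the instant both become active). The tempting deviation is for $i$ to suppress her presence -- by delaying her arrival, reporting an early $\hat{d}_i$, or lowering $\hat{v}_i$ below $\tau_1$ -- in order to skip phase~2, reach phase~3, and then win at the strictly smaller price $\tau_2 < v_i$.

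The heart of the argument is to rule out that deviation by observing that $j$'s winning trigger, namely that $j$ is active with $v_j \geq \tau_1$ at the appropriate moment, depends only on $\hat{\Theta}_{-i}$; thus $j$ still wins in phase~2 under any unilateral deviation by $i$, so phase~3 is never reached and $i$ cannot realize the low price. Combined with the facts that $i$ cannot advance her arrival (arrivals can only be delayed), that her position in $\pi$ is fixed, and that any deviation that does make $i$ the winner sets her price to at least $\tau_1 \leq v_i$ within her active window or else forces allocation outside $[a_i, d_i]$, this shows that $i$'s utility under any report is at most $0$, matching her truthful utility. This completes the reduction to the two value ranges in the statement and proves the lemma.
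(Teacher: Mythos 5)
Your proposal is correct and takes essentially the same route as the paper's proof: you dismiss $v_i \le \tau_2$ because any winning price is at least $\tau_2 \ge v_i$ (with departure manipulation yielding allocation outside the true active window), and for $v_i \ge \tau_1$ you defer early departures to Lemma~\ref{lem:firstphase} and winners to Lemma~\ref{lem:winner2}, then argue that a truthful loser cannot become the winner since the rival's win and the $\pi$ tie-break are unaffected by her report and arrivals can only be delayed. Two minor write-up caveats: the losing subcase also includes the item being sold in phase three before $i$'s late arrival (covered by the same cannot-advance-arrival reasoning, which the paper states as its general first case), and your closing inference that a price of at least $\tau_1 \le v_i$ gives utility at most $0$ is a non sequitur as written; what your argument actually establishes, and what is needed, is that such a bidder cannot win under any deviation, so her utility stays $0$.
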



The next lemma shows that losing bidders with values  between the two thresholds have no incentive to lie.
\begin{lemma}\label{cla:midvalbidder2}
Consider some bidder $i$, and any $\hat{\Theta}_{-i}$ that results in the two-threshold case and let $\tau_1$ and $\tau_2$ be the thresholds defined in Line~\ref{line:first update} and Line~\ref{line:second update} respectively.  If bidder $i$ has a value $\tau_1 > v_i > \tau_2$ and she is not the winner, she has no incentive to misreport her type $\theta_i$.
\end{lemma}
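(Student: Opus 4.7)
The plan is to rule out every potential misreport by bidder $i$ by first understanding her truthful outcome and then arguing that no manipulation yields positive utility. As a preliminary observation, since $v_i > \tau_2$ and $\tau_2$ equals the maximum value among the first $\second$ departing bidders, bidder $i$'s true departure position must be strictly greater than $\second$. Moreover, since $v_i < \tau_1 = \prediction$, she cannot win phase 2 at her true value. Thus, in the truthful scenario, her loss implies that some other bidder $w$ beats her in one of the following ways: (i) $w$ wins phase 2 with $v_w \geq \tau_1$; (ii) $w$ wins phase 3 via the active-winner check at the phase 2-to-3 transition with $v_w \geq \tau_2$ and $w \pref i$ (possible only when bidder $i$ is also active at that transition); or (iii) $w$ arrives in phase 3 before bidder $i$ with $v_w \geq \tau_2$ (relevant only when bidder $i$'s true arrival is in phase 3).

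Next, I would show that misreporting the value cannot help. Reporting $\hat{v}_i < \tau_1$ leaves her losing, since $\tau_1$ and $\tau_2$ do not depend on her value when her reported departure position remains $> \second$, and she still cannot win phase 2. Reporting $\hat{v}_i \geq \tau_1$ could allow her to win phase 2 at the initial price $\tau_1 > v_i$. The only route to positive utility is then through the payment reduction of Line~\ref{line:loserdefine}, which requires the rerun winner $i'$ to satisfy $\checkp' = \text{false}$ or $i \pref i'$ and lowers the price to $\tau'$. I would argue that this reduction cannot produce a price below $v_i$ in any of the three cases: in case (i), the rerun still identifies $w$ as the phase 2 winner with $\tau' = \tau_1$, so no drop occurs; in case (ii), the rerun identifies $w$ as the phase 3 active-winner, so $\tau' = \tau_2$ but $\checkp' = \text{true}$ and $w \pref i$, causing the reduction condition to fail; in case (iii), bidder $i$ cannot have been active during phase 2 at all (her arrival is in phase 3), so she cannot win phase 2 in the first place. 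For time manipulations: delaying arrival only restricts her windows of candidacy and cannot enable a new winning opportunity; reporting a later departure causes allocation outside her true active interval and so yields no value from the item; reporting an earlier departure moves her to an earlier position---if into phase 1, she never wins (threshold $\infty$); if into phase 2, her true $v_i < \tau_1$ still disqualifies her from winning phase 2 while her reported departure removes her from phase 3 consideration. Combining time and value misreports reduces to the above cases.

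The main obstacle is the payment-reduction clause of Line~\ref{line:loserdefine}, which is designed to protect bona fide phase 2 winners from overpaying but could in principle create an incentive for a losing bidder like $i$ to inflate her value and secure the reduced price $\tau_2$. The delicate verification is that the same bidder $w$ who blocks $i$ in the truthful scenario remains present in the rerun (since the rerun only omits $i$) and does so in a way that either keeps the rerun threshold at $\tau_1$ or leaves the reduction condition provably unsatisfied, so that the price stays at $\tau_1 > v_i$ whenever bidder $i$ actually wins.
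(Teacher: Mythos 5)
Your argument is correct and follows essentially the same route as the paper's proof: you identify that the only profitable deviation is to inflate the reported value above $\tau_1$ and hope the payment-reduction clause lowers the price to $\tau_2$, and you show the truthful-run blocking bidder survives the rerun so that either $\tau' = \tau_1$ or the reduction condition fails (active-winner$'$ true with $i' \pref i$), while time manipulations are handled via allocation at the reported departure and the no-earlier-arrival assumption, exactly as in the paper. One small patch: your three-case list omits the configuration where the blocker secures the item at the phase 2-to-3 transition while bidder $i$'s true arrival falls later in phase 3, but the same ``cannot report an earlier arrival'' reasoning you already use (and which the paper invokes in its Case one) disposes of it immediately.
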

\begin{proof}
Consider a bidder $i$ 
with a value such that $\tau_1 > v_i > \tau_2$ who is not the winner. The only outcome that is strictly better for her is winning the item with a price of $\tau_2$. For the rest of the proof we show that such an outcome is not obtainable by such bidders through misreporting. Since $i$ is not the winner, there must be another bidder $i^*$ who either has a value above the threshold before $i$ or $i^*$ is above the threshold at the same time as $i$ but $i^* \pref i$. Since $i$'s value is less than $\tau_1$, she can only be above the threshold  
after the departure of the $\second$-th bidder if she reports her true type.

\textbf{Case one:} If bidder $i^*$ is above the threshold before bidder $i$, two scenarios are possible. First, bidder $i^*$ is above the threshold before the arrival time of bidder $i$, in which case there is no way bidder $i$ can misreport and win the item, as we assume bidders cannot report an arrival time earlier than their actual arrival time. If bidder $i^*$'s value is not above the threshold before bidder $i$'s arrival time but is earlier than when $v_i$ is above the threshold, it must be that bidder $i$'s arrival time is before the threshold drops to $\tau_2$, and bidder $i^*$ is above the threshold $\tau_1$. In this case, bidder $i$ indeed can report a value $\hat{v}_i \geq \tau_1$ to win the item. However, due to the presence of bidder $i^*$, the price she needs to pay is $p = \tau_1$ by Line~\ref{line:loserdefine} of the payment subroutine, implying that bidder $i$ would reduce her utility by misreporting in this way.

\textbf{Case two:} If bidder $i$ loses to bidder $i^*$ in the tie-breaking, this can only occur when the threshold drops to $\tau_2$\footnote{In all of the cases, we consider the bidders with respect to some ordering even if they arrive at the same time.}. The only way for bidder $i$ to win the item is if she has an arrival time before the threshold drop and reports a value $\hat{v}_i \geq \tau_1$ to avoid the tie-breaking. However, due to the presence of bidder $i^*$, we would get $\checkp = \text{true}$ and $i' \pref i$, making $p = \tau = \tau_1$. Even if bidder $i$ obtains the item, her utility is non-positive.
\end{proof}

We are now ready to show the main Lemma of the subsection
\begin{lemma}\label{lem:strategyproofness}
\mech\ is both value-strategyproof and time-strategyproof.
\end{lemma}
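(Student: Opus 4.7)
The plan is to combine the five preceding lemmas through an exhaustive case analysis. I would fix an arbitrary bidder $i$ with true type $\theta_i$ and an arbitrary profile $\hat{\Theta}_{-i}$ of reports from the other bidders; the goal is to show that no alternative report $\hat{\theta}_i$ can strictly increase bidder $i$'s utility.

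First I would split on bidder $i$'s true departure position in the truthful run. If bidder $i$'s true departure is among the first $\first$ departures, Lemma~\ref{lem:firstphase} applies directly. Otherwise, the identities of the first $\first$ departing bidders are entirely determined by $\hat{\Theta}_{-i}$, and so are $\bestsofar{\first}$ and the threshold defined in Line~\ref{line:first update}. I would then split on whether the run is in the single-threshold case (when $\bestsofar{\first} \geq \prediction$, so $\tau = \bestsofar{\first}$) or the two-threshold case (when $\prediction > \bestsofar{\first}$, so the thresholds $\tau_1 = \prediction$ and $\tau_2 = \bestsofar{\first}$ are distinct).

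In the single-threshold case I would invoke Lemma~\ref{cla:valbidderfirst} directly. In the two-threshold case, I would further split based on whether bidder $i$ is the winner under truthful reporting: if yes, Lemma~\ref{lem:winner2} finishes; if no, I would partition by her value, using Lemma~\ref{cla:valbidder2} when $v_i \leq \tau_2$ or $v_i \geq \tau_1$ and Lemma~\ref{cla:midvalbidder2} when $\tau_2 < v_i < \tau_1$. These cases are exhaustive.

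The subtlety to verify is that each invoked lemma really rules out \emph{every} misreport $\hat{\theta}_i$, including those that would cross phase boundaries (e.g.\ an earlier reported departure that moves $i$ among the first $\first$ departures, or an inflated value report that would have made $i$ a winner). Since each lemma is stated for a fixed $\hat{\Theta}_{-i}$ and quantifies universally over bidder $i$'s report with no restriction on $\hat{\theta}_i$, their conclusions concatenate to give both value-strategyproofness and time-strategyproofness for the mechanism as a whole. I expect this bookkeeping on exhaustiveness to be the main (though not technically deep) obstacle, since the substantive arguments all live inside the preceding lemmas.
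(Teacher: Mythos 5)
Your proposal is correct and follows essentially the same route as the paper: the paper's proof is exactly this case split, invoking Lemma~\ref{cla:valbidderfirst} for the single-threshold case and Lemmas~\ref{lem:winner2}, \ref{cla:valbidder2}, and \ref{cla:midvalbidder2} for the two-threshold case (with Lemma~\ref{lem:firstphase} absorbed into those sub-lemmas). Your extra bookkeeping about which case is determined by $\hat{\Theta}_{-i}$ alone mirrors the paper's remark that only the first $\first$ departing bidders (together with the prediction) fix the case, so nothing further is needed.
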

\begin{proof}
From Lemma~\ref{cla:valbidderfirst} we get that in the single-threshold case, no bidder $i$ has incentive to misreport her type $\theta_i$. Combining Lemma~\ref{lem:winner2}, \ref{cla:valbidder2}, and \ref{cla:midvalbidder2}, we get that in the two-threshold case, no bidder $i$ has incentive to misreport her type $\theta_i$.
\end{proof}
\subsection{The \emech\ Auction}\label{sec:error}

We now show that our auction can be easily extended to achieve an improved revenue guarantee not only when the prediction is perfectly accurate, but even when it is approximately accurate. 
Given a prediction $\prediction$ regarding the maximum bidder value $\vmax$, we use $\quality(\prediction, \vmax)$, or just $\quality$, to capture the prediction quality, defined as the relative under- or over-prediction: $\quality = \min \left\{\frac{\prediction}{\vmax}, \frac{\vmax}{\prediction}\right\}.$
Note that $\quality \in [0,1]$ and that higher values of $\quality$ correspond to better predictions. 

We start by describing  the $\emech$ \ auction, which is an extension of the $\mech$ \ auction. This auction takes as input an additional parameter  $\gamma \in [0,1]$ called the error-tolerance parameter, whose value is chosen by the auction designer. The only change from $\mech$ \ to $\emech$ \ is that  Line~\ref{line:first update} is changed from $\tau = \max\{\vbest, \prediction\}$ to $\tau = \max\{\vbest, \gamma \cdot \prediction\}$. The main result for the $\emech$ \ auction is that when the prediction quality $q$ is at least as high as the error-tolerance $\gamma$, then the auction achieves a revenue guarantee of  $\max\{\alpha\gamma\quality \cdot  v_{(1)}, \frac{1 - \alpha^2}{4} \cdot v_{(2)}\}$. Thus, in that case, a competitive ratio of $\alpha\gamma\quality$ is guaranteed against the first-best revenue benchmark  $v_{(1)}$, even if the prediction is not perfectly accurate. In addition, a competitive ratio of $\frac{1 - \alpha^2}{4}$ against the second-best revenue benchmark $v_{(2)}$ is always  maintained. Note that if we let $\gamma=1$, the auction reduces to $\mech$ and so do the corresponding robustness and consistency guarantees. We defer the proof of the following theorem to  Appendix~\ref{app:error}.

\vspace{-.1cm}

\begin{restatable}{rThm}{thmError}
\label{thm:error}
\emech\ is a value-strategyproof and time-strategyproof online auction that, given parameters $\alpha\in \range$ and $\gamma \in [0,1]$, and a prediction with unknown quality $\quality$, achieves
\[\E_{\Theta \sim \mu(V, I)} [\rev(M(\Theta, \prediction))]\geq \begin{cases}
    \max\left\{\alpha\gamma\quality \cdot v_{(1)}, \frac{1 - \alpha^2}{4} \cdot v_{(2)}\right\} & \text{if $\quality \geq \gamma$,} \\
    \frac{1 - \alpha^2}{4}  \cdot v_{(2)} & \text{if $\quality < \gamma$.}
\end{cases}\]
\end{restatable}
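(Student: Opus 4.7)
The plan is to prove Theorem~\ref{thm:error} by establishing three separate guarantees: strategyproofness, the unconditional robustness bound $\tfrac{1-\alpha^2}{4}\cdot v_{(2)}$, and the conditional bound $\alpha\gamma q\cdot v_{(1)}$ when $q\geq\gamma$. The strategyproofness component requires essentially no new work: the only structural difference between \mech\ and \emech\ is that Line~\ref{line:first update} uses $\max\{\vbest,\gamma\prediction\}$ in place of $\max\{\vbest,\prediction\}$. None of Lemmas~\ref{lem:firstphase}--\ref{cla:midvalbidder2} depend on the numerical value of the phase-2 threshold, only on its role as ``the higher of the two thresholds''; I would restate each with $\gamma\prediction$ replacing $\prediction$ and verify in one line each that every case of the arguments in Section~\ref{sec:strategyproof} and Appendix~\ref{app:strategyproof} goes through verbatim.

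For the unconditional $\tfrac{1-\alpha^2}{4}\cdot v_{(2)}$ bound I would adapt the case split of Lemma~\ref{lem:robustness}, using $\gamma\prediction$ in place of $\prediction$. If $\gamma\prediction>v_{(1)}$, no bidder can clear the phase-2 threshold, so a sale can only happen in phase 3 at price $\vbest$; with probability $\tfrac{i_2}{n}\cdot\tfrac{n-i_2}{n-1}=\tfrac{1-\alpha^2}{4}+O(1/n)$, the bidder with $v_{(2)}$ is among the first $i_2$ departures (so $\vbest\geq v_{(2)}$ entering phase 3) and the bidder with $v_{(1)}$ is among the last $n-i_2$, yielding revenue $v_{(2)}$. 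If instead $\gamma\prediction\leq v_{(1)}$, the under-prediction case of Lemma~\ref{lem:robustness} transfers directly: whenever $v_{(2)}$'s bidder is among the first $i_1$ and $v_{(1)}$'s bidder is among the last $n-i_1$, the phase-2/3 threshold lies in $[v_{(2)},v_{(1)}]$ and the highest value wins at a price at least $v_{(2)}$, again with probability $\tfrac{1-\alpha^2}{4}+O(1/n)$.

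For the conditional bound, I would first observe the key algebraic identity: in either prediction regime, $q\geq\gamma$ forces $\gamma\prediction\leq v_{(1)}$, and moreover $\gamma\prediction\geq\gamma q\cdot v_{(1)}$ (with equality under under-prediction, and by $\prediction\geq v_{(1)}$ under over-prediction). Let $E$ be the event that the bidder $i^*$ with value $v_{(1)}$ departs in phase 2, i.e.\ $i^*\in[i_1+1,i_2]$; by random matching $\Pr[E]=\alpha$. Conditional on $E$, in phase 2 the threshold is $\tau=\max\{\vbest,\gamma\prediction\}\leq v_{(1)}$, so $i^*$ is eligible while active, guaranteeing that some winner is declared by the end of phase 2 at a threshold $\tau\geq\gamma\prediction$. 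I would then split on whether the winner departs in phase 2 or phase 3. In the first sub-case, the payment rule does not invoke the simulation, so the price is exactly $\tau\geq\gamma\prediction$. In the second sub-case, the winner cannot be $i^*$ (since $i^*\leq i_2$), so $i^*$ participates in the simulation; $i^*$ is active during the simulation's phase 2 (after adjusting for the $(n{-}1)$-bidder phase boundaries) with $v_{(1)}\geq\gamma\prediction$, so the simulation's winner is determined no later than simulation phase 2, yielding $\tau'\geq\gamma\prediction$. Either way the price is $\geq\gamma\prediction\geq\gamma q\,v_{(1)}$, and combining with the robustness bound gives the claimed $\max$.

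The main obstacle is this last sub-case: I need to argue that the simulation's threshold $\tau'$ does not fall below $\gamma\prediction$ after $i^*$'s index shifts under the $(n{-}1)$-bidder phase boundaries $\lceil\tfrac{1-\alpha}{2}(n-1)\rceil$ and $\lfloor\tfrac{1+\alpha}{2}(n-1)\rfloor$. The shift is by at most one index, so a careful floor/ceiling computation shows that $i^*$ remains in simulation phase 2 in every case we need, except possibly at boundary values of $i^*$; these are handled by observing that if $i^*$ lands in simulation phase 3 instead, then the simulation's $\vbest$ already incorporates a value $\geq v_{(1)}\geq\gamma\prediction$, so $\tau'\geq\gamma\prediction$ still holds. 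Once this lower bound on $\tau'$ is in hand, everything else is routine.
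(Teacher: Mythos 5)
Your overall route is the same as the paper's: strategyproofness and the $\frac{1-\alpha^2}{4}\,v_{(2)}$ bound are inherited from the \mech\ analysis with $\gamma\prediction$ in place of $\prediction$, and the $\alpha\gamma\quality\, v_{(1)}$ bound is obtained by conditioning on the highest-value bidder departing in phase 2 (probability $\alpha$) and arguing that the sale price is at least $\gamma\prediction \geq \gamma\quality\, v_{(1)}$. In fact you are more careful than the paper, whose proof simply asserts that the highest-value bidder wins and pays $\tau$; once the prediction is imperfect, other bidders can also clear $\gamma\prediction$, win first, depart in phase 3, and trigger the simulation-based price reduction --- exactly the sub-case you isolate.

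However, your patch for the boundary sub-case is incorrect. When the highest-value bidder $h$ is exactly the $\second$-th departure, she becomes the $(i_2'+1)$-th departure of the $(n-1)$-bidder simulation, since $\lfloor\frac{1+\alpha}{2}(n-1)\rfloor=\second-1$; you then claim that the simulation's $\vbest$ at its phase-3 transition ``already incorporates a value $\geq v_{(1)}$.'' It does not: $h$ has not yet departed at that transition, so $v_{(1)}$ is not part of the simulation's $\vbest$. Concretely, suppose $h$'s interval arrives only after the $(\second-1)$-th departure, every bidder other than $h$ and the actual winner has a tiny value, and the winner (with value $\geq\gamma\prediction$) is active in phase 2 but departs in phase 3. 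Then in the simulation no bidder clears $\gamma\prediction$ during its phase 2, the item is secured in its phase 3 at threshold $\tau'$ equal to the tiny maximum of previously departed values, active-winner$'$ is false, and the payment drops to $\tau'\ll\gamma\prediction$. So the claim ``price $\geq\gamma\prediction$ whenever $h\in[\first+1,\second]$'' fails on this event; since the event requires $h$ to be exactly the $\second$-th departure it has probability at most $1/n$ inside your conditioning event, so the argument can only be salvaged with a different treatment of this case or by conceding an $O(1/n)$ loss in the $\alpha\gamma\quality\, v_{(1)}$ term. Note that the paper's own, terser proof silently skips this same case by asserting the highest bidder is the winner, which need not hold here; but as written, your step, which explicitly purports to close it, does not.
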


\section{Impossibility Results}

In this section, we complement our positive results with impossibility results both for the original online mechanism design setting (without predictions) and for the one with predictions. To provide some perspective,  we first briefly discuss two related problems: the classic secretary problem and the problem of designing online auctions to maximize social welfare.

In the secretary problem, a sequence of agents arrive in random order, and the goal is to maximize the probability that the highest-value one is chosen. Each time a new agent appears, the mechanism observes that agent's ranking relative to all the previous ones and then irrevocably decides whether to accept or reject that agent. Secretary algorithms have two significant restrictions compared to online auctions: (i) they are rewarded only if they accept the highest-value agent, whereas online auctions can extract value  from any agent with a positive value, and (ii) their decisions depend only on ordinal information regarding the agents, whereas online auctions observe the previous agents' cardinal values and the auction's pricing decisions can be an arbitrary function of the values it has observed. Therefore, the design space for online auctions is much richer (due to (ii)) and proving impossibility results regarding their performance is much more demanding (due to both (i) and (ii)).


A problem that is more closely related to ours is the design of online auctions for a single good aiming to maximize the social welfare, i.e., the value of the agent who wins the good. Just like the secretary problem, the ideal outcome would be to choose the highest-value agent, but the auction observes more information and can extract welfare even if other high-value agents are chosen. For this problem,~\citet{HKP04} provided a mechanism that achieves a $1/e$ approximation and proved it is impossible to achieve better than $1/2$. To the best of our knowledge, a tight impossibility result of $1/e+o(1)$ was only recently achieved in the EC 2019 best paper by \cite{correa}. This result is actually shown even in a setting where the values are generated i.i.d.\ and it is significantly more technical, relying on infinite hypergraphs and Ramsey's theorem.

Proving impossibility results for revenue maximization introduces additional challenges relative to social welfare maximization. First, note that all the tight impossibility results for social welfare did not need to leverage strategyproofness: they hold even for non-strategyproof mechanisms, so strategyproofness comes for free. However, this is not the case for revenue maximization, where the strategyproofness constraint needs to be used in order to achieve even the $2/3$ impossibility result in~\citet{HKP04}. Second, the performance of an auction with respect to revenue depends not only on who wins the good, but also on the price that the winner pays. This poses a significant challenge, since one cannot easily rule out the existence of (rather unrealistic) mechanisms that happen to ``guess'' a high price as an arbitrary function of the observed values and extract a lot of revenue from the highest-value bidder. As a result, there remains a gap between the best-known approximation of $1/4$ and the best-known impossibility of $2/3$~\citet{HKP04}.

\paragraph{Overview of this section.}
In Section~\ref{sec:lbrobust}, we prove that
the competitive ratio of $1/4$  is, in fact, optimal for a large family of auctions whose pricing rule can be
an arbitrary function of the values observed in the past, but no more than the maximum seen previously.
The proof relies on an LP formulation of the revenue maximization problem and duality. For the
learning-augmented setting, we prove in Section~\ref{sec:lbtrade-off}  that the robustness-consistency trade-off achieved by our
Three-Phase auction is optimal within a broad family of learning-augmented online auctions.  The previous LP approach requires a history-independence property that no longer holds
in the setting with predictions. For this second impossibility result, we instead use an interchange
argument to show the existence of an optimal auction with a three-phase structure identical to our
auction.

\paragraph{Some preliminaries.}
We prove our impossibility results by restricting our attention to instances with no overlapping active intervals, i.e., at each time step $i \in [n]$, there is a single active bidder $i$ such that $a_i = d_i = i$, and we refer to bidder $i$ as the bidder who arrives and departs at time $i \in [n]$). When the intervals are disjoint the agents cannot benefit from misreporting their arrival and departure times, so our impossibility results hold even for mechanisms that do not satisfy time-strategyproofness. We use $V = \{v_{(1)}, \ldots, v_{(n)}\}$ to denote an instance with $n$ values where $v_{(i)}$ is the $i$th highest value. A random matching $\mu \sim \mathcal{S}_n$, where  $\mathcal{S}_n$  denotes the set of permutations over $n$ indices, between values $V$ and intervals $I = [n]$ is a random ordering over $V$, which we denote by $\mathbf{v} = (v_1, \ldots, v_n)$. We abuse notation and let $v_i = v_{(\randorder^{-1}(i))}$ be the value of the active agent at time step $i.$   We define for any $i > 0$ the notation $v_{\max}^{<i} = \max\{v_1,\dots,v_{i-1}\}$ to represent the maximum value seen before $i$. Finally, we note that strategyproof online auction can be implemented as a posted price auction with prices that depend only on previously seen values~\cite{lavi2000competitive}.
\subsection{Tight Impossibility Result for Setting without Predictions}\label{sec:lbrobust}
In this section we present a $1/4$ impossibility result  for revenue maximization (without predictions)  for a broad family of auctions, which matches the $1/4$ competitive ratio achieved by \citet{HKP04} for revenue. The proof uses similar tools as the result in~\cite{correa} that implies the $1/e$ impossibility result for welfare maximization. The main difference is that, in~\cite{correa}, the performance of an allocation rule is measured by the probability of allocating to the maximum valued agent, whereas for revenue maximization, it is measured by the probability of allocating to the maximum valued agent, while also having  previously seen the second highest valued agent. In particular, our proof also uses linear programming and duality, but with a different linear program whose objective is this different performance metric that leads to a $1/4$, instead of $1/e$, bound. 

We note that \citet{gravin2023online} show that for any ordinal stopping problem (i.e. a problem whose objective only depends on ordinal information  about the agents) in the online random-order setting, there is only a small gap between the performance of the optimal  cardinal algorithm (i.e. an algorithm that can depend on the agents' values) and the performance of the optimal ordinal algorithm (i.e. an algorithm that  only depends on ordinal information  about the agents).  Since our revenue maximization problem is not an ordinal stopping problem (the objective itself heavily depends on cardinal preferences), their result does not apply. Although there are intermediary steps of our proof that consider an ordinal problem, their reduction cannot be used for these intermediate  steps either because our proof requires an arbitrarily large gap between the second and third largest value of a hard instance for this intermediate ordinal problem and it is unclear whether the construction in \cite{gravin2023online} satisfies this requirement.


All missing proofs are deferred to Appendix~\ref{app:lbrobust}.  Note that throughout this section, when we consider a set of values, we implicitly assume that they are pairwise distinct. Below is the family of auctions for which we prove this result.

\begin{definition}
    An  auction is in the family of Up to Max-Previously-Seen auctions $\mathcal{M}_{u}$ if for all  $i\in[n]$ and   bids $v_1,\dots,v_{i-1}$, it posts a price $p_i(v_1,\dots,v_{i-1}) \in [0,\max\{v_1,\dots,v_{i-1}\}]\cup\infty$ to agent $i$.
\end{definition}

 Up to Max-Previously-Seen auctions capture all strategyproof auctions except those that post bounded prices that are larger than the maximum value seen previously. 

\begin{theorem}\label{thm:atmostmsfbound}
    For any $n\geq 2$ and any Up to Max-Previously-Seen auction $M\in \mathcal{M}_{u}$, there is an instance $V = \{v_{(1)}, \dots, v_{(n)}\}$ such that $\E_{\mathbf{v} \sim \mu(V)}[\rev(M(\mathbf{v}))] \leq \left(\frac{1}{4}+\frac{7}{n}\right)v_{(2)}$. 
\end{theorem}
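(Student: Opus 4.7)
The plan is to combine an adversarial instance construction with an LP duality bound on an intermediate ordinal stopping problem.

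For the instance, I would take $V$ with three value levels: $v_{(1)}$, $v_{(2)}$, and $v_{(3)} = \cdots = v_{(n)} = \delta$ for some $\delta \ll v_{(2)}$ (say $\delta \leq v_{(2)}/n$), with the precise values of $v_{(1)}, v_{(2)}$ chosen adversarially based on $M$. The crucial feature is the large gap between $v_{(2)}$ and $v_{(3)}$: under the Up to Max-Previously-Seen restriction $p_i \in [0,\max\{v_1,\ldots,v_{i-1}\}]\cup\{\infty\}$, any posted price exceeding $\delta$ must follow the arrival of one of $v_{(1)}, v_{(2)}$, and the accepting bidder must have value above $\delta$, hence must also be in $\{v_{(1)}, v_{(2)}\}$. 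Thus any revenue exceeding $\delta$ can only arise from selling to the \emph{later}-arriving of $\{v_{(1)}, v_{(2)}\}$ at a price in $(\delta, v_{(2)}]$, giving $\E_\mu[\rev(M(\mathbf{v}))] \leq v_{(2)}\cdot P + \delta$, where $P$ denotes the probability of this ``sell-to-later'' event.

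Next, I would bound $P \leq 1/4 + O(1/n)$ via an LP for the intermediate ordinal problem: in a uniform random permutation of $[n]$, stop at the global max while having seen the global 2nd max previously. Using the mutual independence of ``record'' events (time $i$ being the max of $v_1,\ldots,v_i$) under uniform permutations, the optimal ordinal algorithm is WLOG memoryless, stopping at a record at time $i$ with probability $\alpha_i$ depending only on $i$. The success probability becomes $\sum_i \alpha_i q_i (i-1)/(n(n-1))$ with $q_1=1$ and $q_{i+1}=q_i(1-\alpha_i/i)$, and the substitution $y_i = \alpha_i q_i / i$ linearizes this to
\begin{align*}
\max \ \tfrac{1}{n(n-1)}\sum_i y_i \cdot i(i-1) \ \ \text{s.t.} \ \ y_i \geq 0,\ \ i\, y_i + \textstyle\sum_{j<i} y_j \leq 1 \ \ \forall i.
\end{align*}
By LP duality (or a direct closed-form analysis), the optimum equals $\max_k (k-1)(n-k+1)/(n(n-1))$, attained by the threshold-at-$k^*\!\approx\!(n{+}2)/2$ rule and bounded by $n/(4(n-1)) \leq 1/4 + O(1/n)$.

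The main obstacle is transferring the ordinal LP bound to the cardinal auction $M \in \mathcal{M}_u$: a naive attempt fails because on a fixed three-level $V$, a cardinal $M$ could identify $v_{(1)}$ and $v_{(2)}$ by their exact values and use a ``wait for first high, then post max-seen'' strategy achieving roughly $v_{(2)}/2$, above the $1/4$ target. I would overcome this by choosing $(v_{(1)}, v_{(2)})$ adversarially to the mechanism's pricing function $f$: for any such $f$, one picks $v_{(1)}, v_{(2)}$ so that either $f(v_{(1)}) > v_{(2)}$ (forcing rejection when $v_{(1)}$ arrives first) or $f(v_{(2)})$ is significantly below $v_{(2)}$, collapsing the cardinal advantage. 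Equivalently, one may apply Yao's principle against a small distribution of $(v_{(1)}, v_{(2)})$ pairs so that $M$'s cardinal information becomes ineffective and its expected revenue reduces to the ordinal LP value (up to additive $O(1/n)$). Combining with the reduction, $\E_\mu[\rev(M)] \leq (1/4 + O(1/n)) v_{(2)} + \delta \leq (1/4 + 7/n) v_{(2)}$, as claimed.
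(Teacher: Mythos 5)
Your overall architecture does mirror the paper's: a hard instance with a huge gap below $v_{(2)}$, the decomposition $\E[\mathrm{rev}]\le v_{(2)}\cdot P+\delta$, and an LP/duality bound of roughly $1/4$ for an ordinal record-stopping event (your LP is, after your substitution, essentially the paper's LP). However, there are two genuine gaps. First, the inequality $P\le 1/4+O(1/n)$ is wrong as stated: your event ``sell to the later-arriving of $\{v_{(1)},v_{(2)}\}$ at a price in $(\delta,v_{(2)}]$'' also contains the case of selling to the \emph{second}-highest bidder after the highest has been seen, which is not the LP event (stop at the maximum having previously seen the second maximum), and whose probability can be much larger than $1/4$ (e.g., a rule that stops at every record after time $n/2$). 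What makes that sub-event negligible is not its probability but the fact that, having observed only $v_{(1)}$ and low values, the auction cannot reliably post a price lying in $(\delta,v_{(2)}]$. Establishing this requires choosing $v_{(2)}$ adversarially from at least $n^2$ exponentially separated candidates and a pigeonhole argument over which candidate interval the auction's (history- and time-dependent) prices hit (the paper's Lemma~\ref{lem:seenhighest} and Lemma~\ref{lem:postbelowhighest}); your dichotomy ``either $f(v_{(1)})>v_{(2)}$ or $f(v_{(2)})$ is far below $v_{(2)}$'' treats the price as a single function of a single value and does not rule out the auction hedging different guesses at different times and histories.

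Second, the step you yourself flag as the main obstacle---transferring the ordinal LP bound to an arbitrary cardinal auction in $\mathcal{M}_u$---is never actually carried out. The stopping decisions, not just the prices, are arbitrary functions of the entire observed history, so even with only two ``informative'' values the rule can condition on their exact magnitudes; picking one adversarial pair $(v_{(1)},v_{(2)})$, or invoking Yao's principle against an unspecified ``small'' distribution, does not make the rule rank-based. The paper needs real machinery here: a reduction to order-oblivious stopping rules (Lemma~\ref{lem:orderoblivious}), the Ramsey-type lemma of Correa et al.\ yielding an infinite set $Y$ on which the rule is $\varepsilon$-value-oblivious (Lemma~\ref{lem:helper}), and hard instances drawn from $Y$ with $v_{(2)}\ge n^2 v_{(3)}$ so that the $\varepsilon$-slack and all non-top values contribute only lower-order terms (Lemma~\ref{lem:InV}). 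A Yao-style route could plausibly be made to work, but it would require a concrete multi-scale distribution over candidate top values (spread by factors of order $n^2$) together with a proof that observing a value does not reveal its rank---i.e., essentially re-deriving the value-obliviousness step---so as written the central difficulty is left unresolved. (Two smaller points: your claim that the optimal ordinal rule is WLOG memoryless is asserted, not proved---the paper's LP feasibility argument, Lemmas~\ref{lem:algtoLPfeas} and~\ref{lem:tauequalsi}, handles general rules without this reduction---and your instance has all low values tied at $\delta$, whereas the paper's arguments assume pairwise distinct values; the latter is cosmetic and fixable with distinct tiny values.)
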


This result matches the robustness achieved by the auction of~\cite{HKP04} and by our \textsc{Three-Phase} auction with $\alpha = 0$ for revenue maximization.

\subsubsection{The main lemma} The main lemma to prove Theorem~\ref{thm:atmostmsfbound} is regarding the allocation rule of any auction. With no overlapping active intervals, an allocation rule is simply a stopping rule.

\begin{definition}
    A stopping rule $\stoprule$ is a collection of $n$ functions $s_i(v_1,\dots,v_i)$, for $i \in [n]$, that define the probability of allocating the item to bidder $i$, conditioned on  $v_1,\dots,v_i$ being the values of the first $i$ bidders and not having allocated the item at time $j < i$. The stopping time $\stoptime$ of a stopping rule $\stoprule$ is the time at which the item is allocated. Thus,  $s_i(v_1,\dots,v_i) = \PP(\stoptime = i|v_1,\dots,v_i,\stoptime \geq i)$.
\end{definition}

Note that here, and throughout this section, expectations and probabilities are over the random order $\mathbf{v} \sim \mu(V)$ of values and the randomness in the stopping rule, unless otherwise stated or implied through conditioning. The main lemma shows that for any stopping rule with stopping time $\stoptime,$  the expected maximum value $\E[v_{\max}^{<\stoptime}]$ seen before time $\stoptime$ gets a contribution $\E[v_{\max}^{<\stoptime}|v_{\stoptime}\geq v_{\max}^{<\stoptime}]\cdot\PP(v_{\stoptime}\geq v_{\max}^{<\stoptime})$ from the event that  $v_{\stoptime}$ is the maximum value seen so far that  is a $1/4 + o(1)$ approximation to the second highest value $v_{(2)}.$  

\begin{lemma}\label{thm:1/4}
For any stopping rule $\stoprule$ with stopping time $\stoptime$ over $n \geq 2$ values, there is  an instance $V = \{v_{(1)},\dots,v_{(n)}\}$ such that $\E[v_{\max}^{<\stoptime}|v_{\stoptime}\geq v_{\max}^{<\stoptime}]\cdot\PP(v_{\stoptime}\geq v_{\max}^{<\stoptime}) \leq \left(\frac{1}{4} + \frac{4}{n}\right)v_{(2)}.$
\end{lemma}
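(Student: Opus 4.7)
The plan is to bound the quantity $\E[v_{\max}^{<\stoptime} \cdot \mathbb{1}[v_{\stoptime} \geq v_{\max}^{<\stoptime}]]$ by reducing it to an ordinal stopping problem and then invoking linear programming duality. I begin with the observation that the indicator is nonzero only when $\stoptime$ is a record (a position where $v_{\stoptime}$ is the running max), so without loss of generality the stopping rule only stops at records. Next, I would choose the hard instance $V$ so that $v_{(3)},\dots,v_{(n)}$ are arbitrarily small relative to $v_{(2)}$, i.e.\ $v_{(2)}/v_{(3)} \to \infty$. On such an instance, the only nonnegligible contribution to the quantity comes from the event $\{\stoptime = \text{pos}(v_{(1)}) \text{ and } v_{(2)} \text{ precedes } \stoptime\}$, and everything else is absorbed into an $o(v_{(2)})$ error term. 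Up to this error, the quantity equals $v_{(2)} \cdot \PP\bigl(\stoptime = \text{pos}(v_{(1)}),\; v_{(2)} \in \{v_1,\ldots,v_{\stoptime-1}\}\bigr)$.

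I would then write the remaining probability as a linear program. Let $\alpha_i = \PP(\stoptime = i,\; v_i \text{ is the running max})$. Because the record events at positions $1,\dots,n$ in a random permutation are independent with marginal probability $1/i$, a direct telescoping argument using $\alpha_j = (1/j)\tilde q_j \prod_{k<j}(1-\tilde q_k/k)$ gives the feasibility constraint $i\alpha_i + \sum_{j<i}\alpha_j \leq 1$ for each $i\in[n]$. Using the identity $\PP(v_i=v_{(1)},\; v_{(2)} \text{ at position } < i \mid v_i \text{ is running max}) = \tfrac{i(i-1)}{n(n-1)}$, the target probability equals $\sum_i \alpha_i \cdot \tfrac{i(i-1)}{n(n-1)}$, so the LP is
\begin{align*}
\max \quad & \sum_{i=1}^n \frac{i(i-1)}{n(n-1)}\, \alpha_i \\
\text{s.t.} \quad & i\,\alpha_i + \sum_{j<i} \alpha_j \leq 1 \ \ \forall i \in [n], \qquad \alpha_i \geq 0.
\end{align*}
The primal admits the ``threshold'' solution $\alpha_i = (t-1)/(i(i-1))$ for $i \geq t$ and $0$ otherwise, which achieves objective $(t-1)(n-t+1)/(n(n-1))$, maximized at $t \approx n/2$ with value $n/(4(n-1)) = 1/4 + O(1/n)$. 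A matching dual certificate---dual variables $y_i$ supported on $i \geq t$ with $y_{i+1}-y_i = 2/(n(n-1))$ and boundary $n\,y_n = 1$---yields $\sum_i y_i = 1/4 + O(1/n)$, closing the gap.

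The main obstacle is the reduction from arbitrary (cardinal) stopping rules to the ordinal LP. A naive hard instance of the form $\{M, 1, \epsilon, \ldots, \epsilon\}$ with \emph{fixed} $M$ can be defeated by the cardinal rule ``wait for the value $M$,'' which achieves probability $1/2$ of stopping at $\text{pos}(v_{(1)})$ with $v_{(2)}$ before---strictly beating the ordinal $1/4$ bound. To overcome this, I expect the proof to use a randomization over instances (Yao's principle) that prevents a cardinal rule from identifying $v_{(1)}$ on the fly, for instance by sampling the pair $(v_{(1)}, v_{(2)})$ from a joint distribution in which observing a single ``large'' value does not reveal whether it is $v_{(1)}$ or $v_{(2)}$. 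The ``arbitrarily large gap between $v_{(2)}$ and $v_{(3)}$'' highlighted in the paper is central here: it ensures that only the top two values carry useful information, so the rule's effective decisions depend only on the ordinal positions of $v_{(1)}, v_{(2)}$ among the observed prefix, which are then governed exactly by the LP above. Making this reduction rigorous---while keeping the bound tight at $1/4 + O(1/n)$---is the technical crux of the argument.
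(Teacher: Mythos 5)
Your LP core is essentially the paper's: the same feasibility constraints $i\,x_i + \sum_{j<i}x_j \leq 1$, the same objective coefficients $\tfrac{i(i-1)}{n(n-1)}$, and the same dual certificate supported on $i \geq n/2$ with increments $\tfrac{2}{n(n-1)}$, giving $1/4 + O(1/n)$ (Lemmas~\ref{lem:algtoLPfeas}, \ref{lem:algtoLPobj}, \ref{lem:LPbound}); you also correctly identify that the hard instance must have $v_{(2)}/v_{(3)}$ huge so that everything except the event ``stop on $v_{(1)}$ having already seen $v_{(2)}$'' is negligible. However, there is a genuine gap at exactly the point you flag as the crux: the reduction from arbitrary cardinal stopping rules to the ordinal LP is never carried out. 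You only conjecture that it should go through Yao's principle with a cleverly correlated distribution over $(v_{(1)},v_{(2)})$, and you yourself exhibit why a naive instance fails (the ``wait for $M$'' rule). Without a concrete construction defeating cardinal rules at the tight $1/4$ level, the proof of the lemma is incomplete; note also that any finite, fixed menu of candidate instances can in principle be partially ``learned'' by a cardinal rule, so the distributional route is not obviously salvageable without new ideas.

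The paper closes this gap by a different mechanism, not Yao's principle. First, an averaging argument (Lemma~\ref{lem:orderoblivious}) shows one may assume the stopping rule is order-oblivious, i.e.\ depends only on the unordered set of previously seen values, without changing the target quantity on any instance. Second, a Ramsey-type lemma from Correa et al.\ (Lemma~\ref{lem:helper}) produces an \emph{infinite} set $Y \subseteq \mathbb{N}$ on which the order-oblivious rule is $\varepsilon$-value-oblivious with $\varepsilon = 1/n^2$; the adversary then draws the instance from $Y$ with $v_{(2)} \geq n^2 v_{(3)}$ (the family $\infset$), so the rule provably cannot exploit cardinal information beyond an additive $n\varepsilon = 1/n$ union-bound error plus a $1/n^2$ term from the small values (Lemma~\ref{lem:InV}). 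A further subtlety you gloss over is your identity ``target probability $= \sum_i \alpha_i\,\tfrac{i(i-1)}{n(n-1)}$'': this requires showing that, for value-oblivious rules, the stopping event at a record time is conditionally independent of the absolute-rank event $\{\randorder(1)=i,\randorder(2)<i\}$, which the paper proves by induction in Lemma~\ref{lem:tauequalsi}. So while your ordinal analysis matches the paper, the cardinal-to-ordinal passage---the actual content of the lemma beyond the LP---is missing, and the route you propose for it diverges from (and is weaker than) the one the paper uses.
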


Note that, for a stopping rule $\stoprule$, $\E[v_{\max}^{<\stoptime}|v_{\stoptime}\geq v_{\max}^{<\stoptime}]\cdot\PP(v_{\stoptime}\geq v_{\max}^{<\stoptime})$ is equal to  the revenue achieved by  posting, at every time step $i$, price $v_{\max}^{<i}$ with probability $s_i(v_1,\dots,v_i)$ and price $\infty$ with probability $1- s_i(v_1,\dots,v_i)$ to agent $i$.
 We will then prove that for any auction in $\mathcal{M}_{u}$, there exists an instance for which the revenue gained by either stopping  not on the highest value or stopping while having not seen the second highest value is small. 
 
\paragraph{Overview of the proof of Lemma~\ref{thm:1/4}.} 
We  prove the main lemma as follows:
\begin{enumerate}
    \item  For any stopping rule $\stoprule$, we show in Section~\ref{sec:orderoblivious} that there is an order-oblivious stopping rule, i.e., a stopping rule that is independent of the order of the values previously seen, with equal $\E[v_{\max}^{<\stoptime}|v_{\stoptime}\geq v_{\max}^{<\stoptime}]\cdot\PP(v_{\stoptime}\geq v_{\max}^{<\stoptime})$ on any instance (Lemma~\ref{lem:orderoblivious}).
    \item For the new order-oblivious stopping rule, we prove the existence of an infinitely large support of natural numbers on which the stopping rule is approximately value-oblivious (Lemma~\ref{lem:valoblivfromorder}), meaning decisions only depend on the relative rank of values.
    \item In Section~\ref{sec:LP}, we bound   the expected maximum value  seen before the stopping time of all such stopping rules using an LP and duality (Lemma~\ref{lem:LPbound}).
    \item Finally, we combine these results to complete the proof of this result in Section~\ref{sec:finishlemma}.
\end{enumerate}


\subsubsection{Reduction to Approximately Value-oblivious Stopping Rules}\label{sec:orderoblivious} 
\begin{definition}
    A stopping rule $\stoprule$ is \emph{order-oblivious} 
 if for all time $i\in [n]$, values $w_1,\dots,w_i\in \mathbb{R}_{\geq 0}$, and permutations $\randorder\in \mathcal{S}_{i-1}$, $s_i(w_1,\dots,w_i) = s_i(w_{\randorder(1)},\dots,w_{\randorder(i-1)},w_i)$.
 \end{definition}
The following lemma finds for any stopping rule $\stoprule$, a corresponding order-oblivious stopping rule $\stoprule'$ that achieves the same expected maximum value seen before the stopping time as $\stoprule$. This allows us to restrict our attention to order-oblivious stopping rules.
\begin{restatable}{rLem}{lemorderoblivious}
\label{lem:orderoblivious}
    For any  $\stoprule$ with stopping time $\stoptime$, there exists an $\stoprule'$ with stopping time $\stoptime'$ such that (1) $\stoprule'$ is order-oblivious, (2) for any instance $V$ and step $i \in [n]$,  if $\PP(\stoptime = i) = 0$, then $\PP(\stoptime' = i) = 0$, and (3) for any instance $V$, $\E[v_{\max}^{<\stoptime'}|v_{\stoptime'}\geq v_{\max}^{<\stoptime'}]\cdot\PP(v_{\stoptime'} \geq v_{\max}^{<\stoptime'}) = \E[v_{\max}^{<\stoptime}|v_{\stoptime}\geq v_{\max}^{<\stoptime}]\cdot\PP(v_{\stoptime}\geq v_{\max}^{<\stoptime}).$
\end{restatable}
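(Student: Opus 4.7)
Plan: I would construct $\stoprule'$ by symmetrizing $\stoprule$'s conditional stopping probabilities over orderings of the observed multiset. Let $\rho_i(v_1, \ldots, v_i) = \prod_{j<i}(1 - s_j(v_1,\ldots,v_j))\cdot s_i(v_1,\ldots,v_i)$ be $\stoprule$'s probability of stopping exactly at time $i$ on a given ordering, and $q_i = \prod_{j \leq i}(1 - s_j)$ the probability of surviving past time $i$. For any multiset $S$ of size $j$, let $\bar{q}_j(S)$ and $\bar{\rho}_{j+1}(S, v)$ denote the averages of $q_j$ and $\rho_{j+1}(\cdot, v)$ over uniform random orderings of $S$ placed in the first $j$ coordinates. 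Then define $\stoprule'$ by
\[s'_i(v_1, \ldots, v_i) \;=\; \frac{\bar{\rho}_i(\{v_1, \ldots, v_{i-1}\}, v_i)}{\bar{q}_{i-1}(\{v_1, \ldots, v_{i-1}\})},\]
with the convention $s'_i = 0$ when the denominator vanishes. Since both numerator and denominator depend only on the multiset $\{v_1, \ldots, v_{i-1}\}$ and on $v_i$, the rule $\stoprule'$ is order-oblivious, giving property (1); and since $\rho_i \leq q_{i-1}$ pointwise, the ratio lies in $[0,1]$, so $\stoprule'$ is a valid stopping rule.

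The core technical step is to prove by induction on $i$ the survival identity: for every multiset $S$ of size $i$, $\E_\tau[q'_i(\tau(S))] = \bar{q}_i(S)$, where $q'_i = \prod_{j \leq i}(1 - s'_j)$ and $\tau$ is a uniform ordering of $S$. The base case $i = 0$ is immediate. For the inductive step, I would expand $q'_i = q'_{i-1}(1 - s'_i)$, marginalize over the last element $v_\tau$ of $\tau(S)$ (which is uniform over $S$), apply the inductive hypothesis to reduce the inner expectation of $q'_{i-1}$ to $\bar{q}_{i-1}(S \setminus \{v_\tau\})$, and use the built-in cancellation $\bar{q}_{i-1}(S') s'_i(S', v) = \bar{\rho}_i(S', v)$ to obtain $\E_\tau[q'_i(\tau(S))] = \E_{v_\tau}[\bar{q}_{i-1}(S \setminus \{v_\tau\}) - \bar{\rho}_i(S \setminus \{v_\tau\}, v_\tau)]$. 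The same decomposition applied to $q_i = q_{i-1} - \rho_i$ yields the identical expression for $\bar{q}_i(S)$, closing the induction.

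Given this identity, properties (2) and (3) drop out. Conditioning on the multiset $M_i := \{v_1, \ldots, v_{i-1}\}$ and current value $V_i := v_i$ under the random ordering $\mathbf{v} \sim \mu(V)$, one obtains $\E[\rho'_i \mid M_i = S, V_i = v] = s'_i(S, v) \cdot \bar{q}_{i-1}(S) = \bar{\rho}_i(S, v) = \E[\rho_i \mid M_i = S, V_i = v]$, so $\PP(\stoptime' = i) = \PP(\stoptime = i)$ for all $i$, which is strictly stronger than property (2). For property (3), observe that $f_i(v_1, \ldots, v_i) := \vmaxless{i}\cdot \mathbbm{1}\{v_i \geq \vmaxless{i}\}$ is symmetric in its first $i - 1$ arguments, so $\E_\mathbf{v}[f_i \rho'_i] = \sum_{S,v}\PP(M_i = S, V_i = v) f_i(S,v) \bar{\rho}_i(S,v) = \E_\mathbf{v}[f_i \rho_i]$; summing over $i$ gives property (3). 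The main obstacle is the inductive survival identity above, which requires carefully aligning the two different ``averaging over orderings'' operations on $\rho_i$ (which keeps the last value fixed) and $q_i$ (where the last value varies); the degenerate case $\bar{q}_{i-1}(S) = 0$ is harmless since $\bar{\rho}_i(S, v)$ vanishes there as well.
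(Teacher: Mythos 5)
Your construction of $\stoprule'$ is exactly the paper's (Definition~\ref{def:rprime}): the ratio $\bar{\rho}_i/\bar{q}_{i-1}$ is precisely $\PP(\stoptime = i \mid V_{i-1}=\{v_1,\dots,v_{i-1}\}, v_i, \stoptime \geq i)$, and your derivations of properties (1)--(3) — including the stronger claim $\PP(\stoptime'=i)=\PP(\stoptime=i)$, which indeed implies (2) — follow the same conditioning-on-the-unordered-prefix decomposition as the paper's proof. The only substantive difference is that the survival identity you prove by induction is exactly Lemma~\ref{lem:correa}, which the paper imports from \citet{correa} rather than reproving; your inductive argument for it is correct, so your write-up is simply a more self-contained version of the same approach.
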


We then wish to find supports on which they behave approximately like pairwise comparison rules, which depend only on relative rank of the values seen before.
\begin{definition}
    Let $\varepsilon>0$ and $Y\subseteq \mathbb{N}$. A stopping rule $\stoprule$ is $\varepsilon$-value-oblivious on an instance $Y$ if for all $i\in[n]$, there exists $y_i\in[0,1]$ such that, for all values $v_1,\dots,v_i\in Y$ with $v_i>\max\{v_1,\dots,v_{i-1}\}$, it holds that $s_i(v_1,\dots,v_i)\in[y_i-\varepsilon,y_i+\varepsilon)$.
\end{definition}

The following lemma shows, for any order-oblivious stopping rule $\stoprule$ and arbitrarily small $\varepsilon > 0$, the existence of an infinitely large support on which $\stoprule$ is $\varepsilon$-value-oblivious. 
The proof of Lemma~\ref{lem:helper} is inside the proof of Lemma 2 in~\citet{correa}.

\begin{lemma}[\citet{correa}]\label{lem:helper}
    For any order-oblivious stopping rule $\stoprule$ and any $\varepsilon > 0$, there exists an infinite set $Y\subseteq \mathbb{N}$ such that $\stoprule$ is $\varepsilon$-value-oblivious on Y.
\end{lemma}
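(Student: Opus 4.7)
The plan is to prove this by an iterated application of the infinite Ramsey theorem for $i$-uniform hypergraphs, one step at a time for $i = 1, 2, \ldots, n$, producing a nested chain of infinite subsets of $\mathbb{N}$ on which $\stoprule$ is approximately independent of the specific values seen.

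The first step is to exploit order-obliviousness to turn $s_i$ into a well-defined function of unordered $i$-element subsets in the regime where $v_i$ is the new maximum. Concretely, I would argue that for any $i$-subset $S = \{a_1 < a_2 < \cdots < a_i\} \subseteq \mathbb{N}$, the quantity $\tilde{s}_i(S) := s_i(a_{\sigma(1)}, \ldots, a_{\sigma(i-1)}, a_i)$ is independent of the permutation $\sigma$ on $\{1, \ldots, i-1\}$, because $\stoprule$ is order-oblivious and $a_i = \max S > \max\{a_1, \ldots, a_{i-1}\}$. This collapses the relevant data at step $i$ to a single real number per $i$-subset.

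Next, I would partition $[0,1]$ into the $K := \lceil 1/\varepsilon \rceil$ half-open intervals $J_k := [(k-1)\varepsilon, k\varepsilon)$ (truncated at $1$), and color each $i$-subset $S$ by the unique index $k$ with $\tilde{s}_i(S) \in J_k$. This is a finite coloring of the $i$-subsets of $\mathbb{N}$, so the infinite Ramsey theorem for $i$-uniform hypergraphs yields an infinite monochromatic $Z \subseteq \mathbb{N}$; setting $y_i$ to be the midpoint of the chosen interval then witnesses $\varepsilon$-value-obliviousness at level $i$ on $Z$. To obtain a single $Y$ that works for every $i \in [n]$ simultaneously, I would iterate: define $Y_0 = \mathbb{N}$, and for each $i = 1, \ldots, n$ apply the above inside $Y_{i-1}$ (which remains infinite by the previous step) to extract an infinite $Y_i \subseteq Y_{i-1}$ monochromatic for the $i$-uniform coloring. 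Because Ramsey monochromaticity of $j$-subsets is inherited by any subset of $Y_j$, the final set $Y := Y_n$ is simultaneously $\varepsilon$-value-oblivious at every level $i \in [n]$ with the midpoint witnesses $y_1, \ldots, y_n$.

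The only subtle point is the first reduction: one must verify that order-obliviousness truly justifies defining $\tilde{s}_i$ on unordered subsets, so that the Ramsey coloring is well-posed on $i$-element subsets rather than on ordered $i$-tuples (where the appropriate Ramsey statement would be substantially weaker or require extra care with the canonical Ramsey theorem). Everything else is a routine iterated application of the infinite hypergraph Ramsey theorem: since $n$ is finite and each Ramsey extraction preserves infiniteness, the chain $Y_0 \supseteq Y_1 \supseteq \cdots \supseteq Y_n$ terminates in finitely many steps with an infinite $Y$, as required.
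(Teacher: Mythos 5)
Your proof is correct, and it follows essentially the same route as the argument this paper relies on: the paper does not prove Lemma~\ref{lem:helper} itself but cites the proof of Lemma 2 of \citet{correa}, which is exactly this Ramsey-theoretic extraction (using order-obliviousness to view $s_i$, in the regime where the last value is the running maximum, as a function of unordered $i$-subsets, finitely coloring by $\varepsilon$-width intervals, and iterating the infinite hypergraph Ramsey theorem over $i=1,\dots,n$ with nested infinite sets). The only point worth noting is that well-definedness of $\tilde{s}_i$ on subsets uses the paper's standing assumption that values are pairwise distinct, which you implicitly respect.
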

By combining Lemmas~\ref{lem:orderoblivious} and~\ref{lem:helper}, we show that restricting our attention to $\varepsilon$-value-oblivious stopping rules is sufficient to bound $\E[v_{\max}^{<\stoptime}|v_{\stoptime}\geq v_{\max}^{<\stoptime}]\cdot\PP(v_{\stoptime}\geq v_{\max}^{<\stoptime})$ for all stopping rules.
  \begin{lemma} 
  \label{lem:valoblivfromorder}
     Let $\varepsilon > 0$. For any stopping rule $\stoprule$ with stopping time $\stoptime$, there exists  stopping rule $\stoprule'$ with stopping time $\stoptime'$ that is $\varepsilon$-value-oblivious on an infinite set $Y \subseteq \mathbb{N}$ such that, over any instance $V$,
     $\E[v_{\max}^{<\stoptime'}|v_{\stoptime'}\geq v_{\max}^{<\stoptime'}]\cdot\PP(v_{\stoptime'}\geq v_{\max}^{<\stoptime'}) = \E[v_{\max}^{<\stoptime}|v_{\stoptime}\geq v_{\max}^{<\stoptime}]\cdot\PP(v_{\stoptime}\geq v_{\max}^{<\stoptime}).$
  \end{lemma}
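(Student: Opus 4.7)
The plan is to simply chain together the two preceding lemmas. First I would invoke Lemma~\ref{lem:orderoblivious} on the given stopping rule $\stoprule$ to obtain an order-oblivious stopping rule $\stoprule''$ with stopping time $\stoptime''$ such that, for every instance $V$, the quantity $\E[v_{\max}^{<\stoptime''}\mid v_{\stoptime''}\geq v_{\max}^{<\stoptime''}]\cdot\PP(v_{\stoptime''}\geq v_{\max}^{<\stoptime''})$ agrees with the corresponding quantity for $\stoprule$. Then I would apply Lemma~\ref{lem:helper} of \citet{correa} to $\stoprule''$ with the given $\varepsilon>0$, producing an infinite set $Y\subseteq\mathbb{N}$ on which $\stoprule''$ is $\varepsilon$-value-oblivious. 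Setting $\stoprule':=\stoprule''$ and $\stoptime':=\stoptime''$ then yields a rule satisfying all the claimed properties.

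The only subtlety worth spelling out in the writeup is that the two properties produced by the two lemmas are compatible: the order-obliviousness guaranteed by Lemma~\ref{lem:orderoblivious} is a property of the rule $\stoprule''$ itself (independent of which instance it acts on), and Lemma~\ref{lem:helper} only extracts an infinite subset $Y$ on which the same rule is further $\varepsilon$-value-oblivious. So no interference arises; applying the second lemma does not destroy the equality established by the first. Likewise, the instance-wise equality of $\E[v_{\max}^{<\stoptime''}\mid v_{\stoptime''}\geq v_{\max}^{<\stoptime''}]\cdot\PP(v_{\stoptime''}\geq v_{\max}^{<\stoptime''})$ with the analogous quantity for $\stoprule$ holds for every instance $V$, and in particular for any instance whose values lie in $Y$, which is what we will want for the downstream LP argument in Lemma~\ref{lem:LPbound}.

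There is essentially no obstacle here: the lemma is a two-line composition. The real work lives in Lemmas~\ref{lem:orderoblivious} and~\ref{lem:helper}, which are either proved elsewhere in the paper or imported from \cite{correa}. The point of isolating Lemma~\ref{lem:valoblivfromorder} as a separate statement is purely to package the reduction so that subsequent sections can restrict attention to $\varepsilon$-value-oblivious stopping rules on an infinite index set, while preserving the objective $\E[v_{\max}^{<\stoptime}\mid v_{\stoptime}\geq v_{\max}^{<\stoptime}]\cdot\PP(v_{\stoptime}\geq v_{\max}^{<\stoptime})$ that Lemma~\ref{thm:1/4} seeks to upper bound.
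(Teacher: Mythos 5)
Your proposal is correct and is essentially identical to the paper's proof: both apply Lemma~\ref{lem:orderoblivious} to obtain an order-oblivious rule preserving $\E[v_{\max}^{<\stoptime}\mid v_{\stoptime}\geq v_{\max}^{<\stoptime}]\cdot\PP(v_{\stoptime}\geq v_{\max}^{<\stoptime})$ on every instance, and then apply Lemma~\ref{lem:helper} to extract the infinite set $Y$ on which that same rule is $\varepsilon$-value-oblivious. No further comment is needed.
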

\begin{proof}
Let $\stoprule$ be an arbitrary stopping rule with stopping time $\stoptime$. By Lemma~\ref{lem:orderoblivious} we know there exists an order-oblivious stopping rule $\stoprule'$ with stopping time $\stoptime'$ such that $\E[v_{\max}^{<\stoptime'}|v_{\stoptime'}\geq v_{\max}^{<\stoptime'}]\cdot\PP(v_{\stoptime'}\geq v_{\max}^{<\stoptime'}) = \E[v_{\max}^{<\stoptime}|v_{\stoptime}\geq v_{\max}^{<\stoptime}]\cdot\PP(v_{\stoptime}\geq v_{\max}^{<\stoptime})$ for every instance, and by Lemma~\ref{lem:helper} there exists infinite $Y\subseteq \mathbb{N}$ such that $\stoprule'$ is $\varepsilon$-value-oblivious on $Y$.
\end{proof}

\subsubsection{LP Bound} \label{sec:LP}
For the  stopping time $\stoptime$ of  stopping rules that are $0$-value-oblivious over some $Y\subseteq \mathbb{N}$, we  bound  $\PP(\randorder(1) = \stoptime, \randorder(2) < \stoptime)$ by using the following linear program and its dual.
\begin{equation*}
    \begin{split}
        LP: \max \; & \sum_{i = 1}^n\frac{(i - 1)i}{(n-1)n}x_i \\
        s.t. \; & i\cdot x_i \leq 1-\sum_{j = 1}^{i - 1}x_j \; \forall i\in[n] \\
        & x_i \geq 0 \; \forall i \in[n]
    \end{split}
    \hspace{2cm}
    \begin{split}
        DP: \min \; & \sum_{i = 1}^ny_i \\
        s.t. \; & i\cdot y_i \geq \frac{(i - 1)i}{(n - 1)n} - \sum_{j = i + 1}^ny_j \; \forall i\in[n] \\
        & y_i \geq 0 \; \forall i\in[n]
    \end{split}
\end{equation*}

This construction is conceptually similar to the one in~\cite{BJS08}, which bounds the probability of stopping on the highest value. The primal variables $x_i$ represent $\PP(\stoptime = i, v_i \geq v_{\max}^{<i})$ for each time step $i\in[n]$ and we wish to find the maximum value of $\PP(\randorder(1) = \stoptime,\randorder(2) < \stoptime)$. The first constraint thus enforces that $x_i$ cannot exceed the probability that $v_i$ is the highest so far and the stopping rule has not already stopped prior to $i$. The second constraint enforces nonnegativity. Then in the objective function, the coefficient of each $x_i$ represents the probability that $v_{(1)}$ and $v_{(2)}$ are amongst the first $i$ values. Thus, stopping at $i$, given that $v_i$ is the highest so far, corresponds precisely to the event $\{\randorder(1) = \stoptime, \randorder(2) < \stoptime\}$.

For a $0$-value-oblivious stopping on instance $Y$, we let $s_i = s_i(v_1, \ldots, v_i)$,  be the probability of stopping on $v_i$ assuming that $v_i$ is the maximum so far, i.e., $v_i \geq \max\{v_1, \ldots, v_{i-1}\}$, and that $v_1, \ldots, v_{i} \in Y.$ Observe that stopping when $v_i$ is not the maximum so far does not increase $\PP(\randorder(1) = \stoptime, \randorder(2) < \stoptime)$. Then we can restrict our attention to stopping rules that do not stop unless the current value is the maximum so far. We can then define for any such 0-value-oblivious stopping rule variables $x_i = \PP(\stoptime = i)$, and we first show that these form a solution to the LP.

\begin{restatable}{rLem}{lemalgtoLPfeas}\label{lem:algtoLPfeas}
    Let $\stoprule$ be a $0$-value-oblivious stopping rule with stopping time $\stoptime$ on an instance $Y$, and $s_i(v_1,\dots,v_i) = 0$ if $v_i < \max\{v_1,\dots,v_{i-1}\}$. Then if $x_i = \PP(\stoptime = i)$ for all $i\in [n]$, $x_1,\dots,x_n$ is a feasible solution to the LP.
\end{restatable}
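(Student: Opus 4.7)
The plan is to verify the two LP constraints directly. Nonnegativity is immediate since $x_i = \PP(\stoptime = i)$ is a probability. The substantive content is the constraint $i\cdot x_i \leq 1 - \sum_{j<i} x_j$. Noting that $1 - \sum_{j<i} \PP(\stoptime = j) = \PP(\stoptime \geq i)$, this is equivalent to
\[\PP(\stoptime = i \mid \stoptime \geq i) \leq 1/i.\]

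To establish this conditional probability bound, I would invoke the classical fact that for a uniformly random ordering of any fixed set of $n$ distinct reals, the record indicators $X_j := \mathbf{1}\{v_j = \max(v_1,\dots,v_j)\}$ are mutually independent with $\PP(X_j = 1) = 1/j$. The hypothesis that $s_j(v_1,\dots,v_j) = 0$ whenever $v_j$ is not a running maximum means the stopping rule only fires at record times, and $0$-value-obliviousness on $Y$ forces $s_i(v_1,\dots,v_i) = y_i$ for the constant $y_i$ from the definition, as long as $v_i$ is a record and $v_1,\dots,v_i\in Y$. It follows that the event $\{\stoptime \geq i\}$ depends on $v_1,\dots,v_{i-1}$ only through $X_1,\dots,X_{i-1}$ (together with the rule's internal coin flips), and is therefore independent of $X_i$.

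Combining these ingredients gives $\PP(\stoptime = i \mid \stoptime \geq i) = \PP(X_i = 1)\cdot y_i \leq 1/i$, and multiplying through by $\PP(\stoptime \geq i)$ delivers the claimed inequality $i \cdot x_i \leq 1 - \sum_{j<i} x_j$. The step I expect to require the most care is the independence argument: one must make explicit that $0$-value-obliviousness on $Y$ is exactly what collapses the cardinal dependence of the stopping decisions at time $j$ into a single constant $y_j$, so that the history interacts with the present only through the record pattern, and then invoke the full mutual (not merely pairwise) independence of the record indicators $X_1,\dots,X_n$ under a uniformly random permutation of the instance.
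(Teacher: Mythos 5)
Your proposal is correct and follows essentially the same route as the paper's proof: both reduce the constraint to $\PP(\stoptime = i \mid \stoptime \geq i) = s_i/i \leq 1/i$, using that under $0$-value-obliviousness the pre-$i$ stopping behavior depends only on the record pattern of $v_1,\dots,v_{i-1}$ (plus internal coins), which is independent of whether $v_i$ is a record. The paper phrases this via a direct conditional-probability factorization with the remark that the relative rank of $v_i$ is independent of the ordering of the earlier values, whereas you package the same fact as the independence of the record indicators; the two are equivalent.
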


 The following  lemma allows us to remove the conditioning on the absolute order event $\{\randorder(1) = i, \randorder(2) < i\}$ from the probability of stopping at $i$ if $v_i$ is already known to be the maximum so far.

\begin{restatable}{rLem}{lemtauequalsi}\label{lem:tauequalsi}
    Let $\stoprule$ be a $0$-value-oblivious stopping rule with stopping time $\stoptime$ on an instance $Y$ such that, for all $i \in [n]$, $s_i(v_1,\dots,v_i) = 0$ if $v_i < \max\{v_1,\dots,v_i\}$. Over instance $Y$, we have that, for any $i\in [n]$, $\PP(\stoptime = i|v_i \geq v_{\max}^{<i}, \randorder(1) = i, \randorder(2) < i) = \PP(\stoptime = i|v_i \geq v_{\max}^{<i})$.
\end{restatable}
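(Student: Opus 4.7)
The plan is to exploit the classical independence structure of ``records'' in a uniformly random permutation. Define $M_j := \mathbbm{1}[v_j \geq v_{\max}^{<j}]$. By $0$-value-obliviousness of $\stoprule$ on $Y$, together with the hypothesis that $s_j = 0$ whenever $v_j$ is not the max so far, the conditional probability of stopping at time $i$ given $(M_1, \ldots, M_i)$ is the deterministic quantity
\[
\PP(\stoptime = i \mid M_1, \ldots, M_i) \;=\; \mathbbm{1}[M_i = 1] \cdot y_i \prod_{\substack{j < i \\ M_j = 1}} (1 - y_j),
\]
where $y_j$ is the $0$-value-oblivious stopping probability at step $j$ when $v_j$ is the max so far. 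Crucially, this depends on the realized values only through the indicators $M_1, \ldots, M_i$.

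Both conditioning events in the lemma statement imply $M_i = 1$, so it suffices to show that the joint distribution of $(M_1, \ldots, M_{i-1})$ is identical under (i) $\{v_i \geq v_{\max}^{<i}\} = \{M_i = 1\}$ and (ii) $\{\randorder(1) = i,\ \randorder(2) < i\}$. For (i), I would invoke the classical fact that, under a uniformly random permutation of distinct values, $M_1, \ldots, M_n$ are mutually independent with $\PP(M_j = 1) = 1/j$, so conditioning on $M_i = 1$ leaves the joint law of $(M_1, \ldots, M_{i-1})$ unchanged.

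For (ii), the key observation is that $M_1, \ldots, M_{i-1}$ depend only on the relative ranking of $v_1, \ldots, v_{i-1}$. Conditional on $\randorder(1) = i$ and $\randorder(2) < i$, the first $i-1$ positions are filled by $v_{(2)}$ together with a uniformly random $(i-2)$-subset of $\{v_{(3)}, \ldots, v_{(n)}\}$, arranged in a uniformly random order over the $(i-1)!$ possibilities; since the $M_j$'s are rank-only statistics of the first $i-1$ positions, their joint distribution coincides with the one induced by an unconditional uniform permutation of $i-1$ distinct values. Averaging the display above over this (common) conditional law of $(M_1, \ldots, M_{i-1})$ then yields equality of the two conditional stopping probabilities.

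The main, and mild, obstacle I anticipate is cleanly justifying the uniformity claim in~(ii): the conditioning event constrains two positions simultaneously, so I would need a symmetry argument to confirm that this only fixes which value is the maximum among positions $1, \ldots, i-1$ and leaves the arrangement uniform there. Once that symmetry is verified, the lemma follows from a routine law-of-total-probability computation against the display above.
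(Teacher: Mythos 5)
Your proposal is correct, but it reaches the identity by a different route than the paper. You reduce everything to the record indicators $M_j=\mathbbm{1}[v_j\geq v_{\max}^{<j}]$: since $s_j=0$ off records and $s_j=y_j$ on records (by $0$-value-obliviousness), $\PP(\stoptime=i\mid v_1,\dots,v_i)=\mathbbm{1}[M_i=1]\,y_i\prod_{j<i:M_j=1}(1-y_j)$ is a function of $(M_1,\dots,M_i)$ alone, and both conditioning events imply $M_i=1$; so it suffices that the law of $(M_1,\dots,M_{i-1})$ is the same given $\{v_i\geq v_{\max}^{<i}\}$ as given $\{\randorder(1)=i,\randorder(2)<i\}$, which you get from R\'enyi's mutual independence of records for the former and an exchangeability argument for the latter. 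The symmetry step you flag does go through: the event $\{\randorder(1)=i,\randorder(2)<i\}$ is measurable with respect to the value placed at position $i$ and the \emph{set} of values occupying positions $1,\dots,i-1$, so conditionally the arrangement within the prefix is uniform, its rank pattern is a uniform element of $\mathcal{S}_{i-1}$, and the prefix records have their unconditional law. The paper instead proves the statement by induction on $j\leq i$, showing $\PP(\stoptime\geq j\mid v_i\geq v_{\max}^{<i},A)=\PP(\stoptime\geq j\mid v_i\geq v_{\max}^{<i})$ via the observation that conditioning on any event independent of the relative order of the prefix leaves the per-step hazard equal to $s_j\cdot\PP(v_j\geq v_{\max}^{<j})$; this uses only elementary conditional-probability manipulations and never invokes the record-independence theorem, at the cost of a longer chain of calculations. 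Your version is shorter and somewhat stronger—it shows the entire conditional distribution of the prefix records is undistorted, so the same one-line conclusion holds for any conditioning event determined by the position-$i$ value and the prefix set—while the paper's induction is more self-contained. Both arguments rest on the same underlying symmetry, and both implicitly use the standard product form $\PP(\stoptime=i\mid v_{1:i})=s_i\prod_{j<i}(1-s_j)$, i.e., that the stop decision at step $j$ depends only on $v_{1:j}$ and fresh randomness, which is how the paper uses stopping rules throughout.
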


Lemma~\ref{lem:algtoLPobj}  proves that the objective value of the LP with respect to our variables $x_i$ bounds the probability of stopping on the highest value $v_{(1)}$ having seen the second highest $v_{(2)}$.

\begin{lemma}\label{lem:algtoLPobj}
    Let $\stoprule$ be a $0$-value-oblivious stopping rule with stopping time $\stoptime$ on an instance $Y$, and $s_i(v_1,\dots,v_i) = 0$ if $v_i < \max\{v_1,\dots,v_{i-1}\}$. Then if $x_i = \PP(\stoptime = i)$ for all $i\in [n]$, the probability of $\stoprule$ stopping on $v_{(1)}$ having seen $v_{(2)}$ is at most $\sum_{i = 1}^n\frac{(i - 1)i}{(n - 1)n}x_i$.
\end{lemma}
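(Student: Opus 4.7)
The plan is to compute the desired probability exactly (yielding in fact equality with the claimed sum) by a two-step conditioning argument on the position of $v_{(1)}$ and $v_{(2)}$. Write the event of interest as $\{\stoptime = \randorder^{-1}(1),\ \randorder^{-1}(2) < \stoptime\}$ and decompose by the stopping time:
\[\PP(\stoptime = \randorder^{-1}(1),\, \randorder^{-1}(2) < \stoptime) \;=\; \sum_{i=1}^n \PP(\stoptime = i,\, \randorder^{-1}(1) = i,\, \randorder^{-1}(2) < i).\]
For each $i$, I would factor the summand as
\[\PP(\randorder^{-1}(1) = i,\, \randorder^{-1}(2) < i)\cdot \PP(\stoptime = i\mid \randorder^{-1}(1) = i,\, \randorder^{-1}(2) < i).\]
The first factor is purely combinatorial on the uniform matching $\randorder$: $v_{(1)}$ lies at position $i$ with probability $1/n$, and conditional on that, $v_{(2)}$ lies in $\{1,\dots,i-1\}$ with probability $(i-1)/(n-1)$, giving $\tfrac{i-1}{n(n-1)}$.

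For the conditional probability factor, I would invoke Lemma~\ref{lem:tauequalsi}. Since $\randorder^{-1}(1) = i$ forces $v_i = v_{(1)} \geq v_{\max}^{<i}$, the conditioning event already sits inside $\{v_i \geq v_{\max}^{<i}\}$, so we may further condition on this record event without changing the probability. Lemma~\ref{lem:tauequalsi} then collapses the absolute-rank conditioning, leaving simply $\PP(\stoptime = i \mid v_i \geq v_{\max}^{<i})$. This is the key conceptual step and the only part of the argument that uses the hypotheses substantively: the $0$-value-obliviousness of $\stoprule$ ensures that once the rule has seen that $v_i$ is a new record, its behavior does not further depend on whether $v_i$ is the global maximum or on where $v_{(2)}$ appeared.

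Finally, the hypothesis that $s_j(\cdot) = 0$ on non-records gives $\{\stoptime = i\} \subseteq \{v_i \geq v_{\max}^{<i}\}$, so
\[\PP(\stoptime = i \mid v_i \geq v_{\max}^{<i}) \;=\; \frac{\PP(\stoptime = i)}{\PP(v_i \geq v_{\max}^{<i})} \;=\; \frac{x_i}{1/i} \;=\; i\,x_i,\]
using that by symmetry the maximum of $v_1,\dots,v_i$ lies at each of the $i$ positions with equal probability $1/i$. Multiplying the two factors yields $\tfrac{(i-1)i}{n(n-1)}\,x_i$ per summand, and summing over $i$ gives the claimed bound. The main obstacle is setting up the conditioning cleanly so that Lemma~\ref{lem:tauequalsi} can be applied directly; once that is done, the remainder is routine accounting on the uniform random permutation.
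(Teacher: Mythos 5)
Your proof is correct and follows essentially the same route as the paper: both decompose the event $\{\randorder(1)=\stoptime,\ \randorder(2)<\stoptime\}$ over the stopping time, use Lemma~\ref{lem:tauequalsi} to strip the absolute-rank conditioning down to the record event $\{v_i \geq v_{\max}^{<i}\}$, and use the only-stop-on-records hypothesis together with the uniform random order to get the factor $\frac{(i-1)i}{(n-1)n}x_i$. The only differences are cosmetic: you factor the joint probability by conditioning on the rank event first (the paper conditions on $\stoptime=i$ first, which is the same computation via Bayes), and your $\randorder^{-1}(1)$ should read $\randorder(1)$ under the paper's convention.
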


\begin{proof}
First, note that $\PP(\randorder(1) = \stoptime,\randorder(2)<\stoptime) =  \sum_{i=1}^n \PP(\randorder(1) = i,\randorder(2) < i|\stoptime = i) \cdot \PP(\stoptime = i)$. Next,
\begin{align*}
     \PP(\randorder(1)=i, \randorder(2) < i| \stoptime = i) 
    = & \frac{\PP(\stoptime = i, v_i \geq v_{\max}^{<i}, \randorder(1) = i, \randorder(2) < i)}{\PP(\stoptime = i, v_i \geq v_{\max}^{<i})}\\
    = & \frac{\PP(\stoptime = i|v_i \geq v_{\max}^{<i}, \randorder(1) = i, \randorder(2) < i) \cdot \PP(v_i \geq v_{\max}^{<i},\randorder(1) = i,\randorder(2) < i)}{\PP(\stoptime = i|v_i \geq v_{\max}^{<i})\cdot \PP(v_i \geq v_{\max}^{<i})}\\
    = & \frac{\PP(v_i \geq v_{\max}^{<i},\randorder(1) = i, \randorder(2) < i)}{\PP(v_i \geq v_{\max}^{<i})} \tag{By Lemma~\ref{lem:tauequalsi}}\\
    = & \PP(\randorder(1) = i, \randorder(2) < i| v_i \geq v_{\max}^{<i}).
\end{align*}
The first equality is because stopping at $i$ when $v_i$ is not the highest so far does not increase the probability. Then we get
\begin{align*}
    \PP(\randorder(1) = \stoptime,\randorder(2) < \stoptime) 
    = & \sum_{i=1}^n\PP(\randorder(1) = i, \randorder(2) < i|v_i \geq v_{\max}^{<i})x_i \\
        = & \sum_{i=1}^n\PP(\randorder(2) < i|v_i \geq v_{\max}^{<i}, \randorder(1) = i) \cdot\PP(\randorder(1) = i|v_i \geq v_{\max}^{<i})x_i \\
    = & \sum_{i=1}^n\frac{(i-1)i}{(n-1)n}x_i.\qedhere
\end{align*}
\end{proof}
Now we can bound the objective value of the LP using duality.

\begin{restatable}{rLem}{lemLPbound}\label{lem:LPbound}
For any $n \geq 2$, the optimal value of the LP is at most $\frac{1}{4}+\frac{2}{n}$.
\end{restatable}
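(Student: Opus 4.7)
The plan is to invoke weak LP duality and exhibit a feasible dual solution whose objective value is at most $\frac{1}{4}+\frac{2}{n}$. The intuition comes from the continuous relaxation: if the stopping density $g(t)$ renders the budget constraint $t\,g(t) = 1 - \int_0^t g(s)\,ds$ tight on $[t^\star,n]$, then $g(t) = t^\star/t^2$ there, and the continuous objective $t^\star(n-t^\star)/n^2$ is maximized at $t^\star = n/2$ with value exactly $1/4$. This predicts that both the optimal primal support and the support of the tight dual constraints should begin near $n/2$.

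First I will set $i^\star = \lceil (n+1)/2 \rceil$, declare $y_i = 0$ for $i < i^\star$, and determine $y_i$ for $i \geq i^\star$ by forcing the remaining dual constraints to bind. Letting $Z_i := \sum_{j \geq i} y_j$, tightness yields the recurrence $i\,Z_i - (i-1)\,Z_{i+1} = \frac{i(i-1)}{n(n-1)}$ with boundary $Z_{n+1}=0$, which admits the closed form $Z_i = \frac{(i-1)(n-i+1)}{n(n-1)}$ (checked by direct substitution). Differencing gives $y_i = Z_i - Z_{i+1} = \frac{2i-n-1}{n(n-1)}$ for $i \geq i^\star$, which is nonnegative precisely because $i \geq i^\star \geq (n+1)/2$.

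For feasibility, dual constraints at $i \geq i^\star$ hold with equality by construction, while for $i < i^\star$ they reduce to $\sum_{j > i} y_j \geq \frac{i(i-1)}{n(n-1)}$. Since the left side equals the constant $Z_{i^\star}$ on this entire range (all earlier $y_j$ vanish) and the right side is increasing in $i$, it suffices to verify the case $i = i^\star - 1$, which simplifies to $i^\star \leq (n+3)/2$, true for both parities of $n$. The dual objective is then $\sum_i y_i = Z_{i^\star} = \frac{(i^\star-1)(n-i^\star+1)}{n(n-1)}$, evaluating to $\frac{n}{4(n-1)}$ for even $n$ and $\frac{n+1}{4n}$ for odd $n$; a short arithmetic check confirms each is at most $\frac{1}{4}+\frac{2}{n}$ for $n \geq 2$, and weak duality concludes the proof. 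The main subtle point is calibrating $i^\star$ at the sweet spot where $y_{i^\star}$ just barely becomes nonnegative while the slack constraints below $i^\star$ stay satisfied; the continuous analysis above identifies exactly this threshold.
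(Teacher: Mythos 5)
Your proof is correct and takes essentially the same approach as the paper: both exhibit the dual solution $y_i = \frac{2i-n-1}{n(n-1)}$ on a support starting near $n/2$ (zero below), verify feasibility by noting the constraints bind on the support and reduce to a single tail-sum comparison below it, and conclude by weak duality. Your only deviation is placing the threshold at $\lceil (n+1)/2\rceil$ instead of the paper's $\lceil n/2\rceil$, which in fact cleanly avoids the slight nonnegativity issue the paper's stated solution has at $i=n/2$ for even $n$.
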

\begin{proof}[Proof sketch, full proof in Appendix~\ref{app:lbrobust}] By weak duality, it is sufficient to find a feasible dual solution $y$ with a dual objective
value at most $\frac{1}{4}+\frac{2}{n}$, which we do with
    \[y_i=\begin{cases}
        0 & \text{if} i < \lceil \frac{n}{2}\rceil \\
        \frac{1}{n}(1-\frac{2(n-i)}{n-1}) & \text{if}  \lceil \frac{n}{2}\rceil\leq i\leq n 
    \end{cases} \qedhere \] 
\end{proof}

Finally we put it all together to get the desired upper bound for $\PP(\randorder(1) = \stoptime, \randorder(2) < \stoptime)$ for any 0-value-oblivious stopping rule.
\begin{lemma}\label{thm:LPbound}
    For any $n \geq 2$ and any $0$-value-oblivious stopping rule with stopping time $\stoptime$ on some $Y$, then for all values $v_1,\dots,v_i \in Y$, $\PP(\randorder(1) = \stoptime, \randorder(2) < \stoptime) \leq \frac{1}{4} + \frac{2}{n}.$
\end{lemma}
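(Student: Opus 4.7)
The plan is to chain together the three LP-based lemmas already in hand. Given an arbitrary $0$-value-oblivious stopping rule $\stoprule$ on some $Y \subseteq \mathbb{N}$, I would first argue via a standard \emph{without loss of generality} reduction that we may assume $s_i(v_1,\dots,v_i) = 0$ whenever $v_i < \max\{v_1,\dots,v_{i-1}\}$. This is because the event $\{\randorder(1)=\stoptime, \randorder(2)<\stoptime\}$ requires $v_\stoptime$ to be the maximum seen so far, so stopping on a non-max value can only decrease (never increase) the quantity $\PP(\randorder(1) = \stoptime, \randorder(2) < \stoptime)$. Replacing any such $\stoprule$ by its ``no-stop-below-max'' variant thus preserves or improves the quantity we want to upper bound.

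Next, I would set $x_i = \PP(\stoptime = i)$ for each $i \in [n]$. By Lemma~\ref{lem:algtoLPfeas}, the vector $(x_1,\dots,x_n)$ is a feasible solution to the primal LP defined in Section~\ref{sec:LP}. Then by Lemma~\ref{lem:algtoLPobj}, the quantity we care about is bounded by the LP objective evaluated at this feasible point:
\[
\PP(\randorder(1) = \stoptime,\, \randorder(2) < \stoptime) \;\leq\; \sum_{i=1}^{n} \frac{(i-1)i}{(n-1)n}\, x_i.
\]

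Finally, since $(x_1,\dots,x_n)$ is feasible, the right-hand side is at most the optimal LP value, which by Lemma~\ref{lem:LPbound} is at most $\tfrac{1}{4} + \tfrac{2}{n}$. Chaining these inequalities yields
\[
\PP(\randorder(1) = \stoptime,\, \randorder(2) < \stoptime) \;\leq\; \frac{1}{4} + \frac{2}{n},
\]
as desired. There is no genuine technical obstacle here: the heavy lifting was done in Lemmas~\ref{lem:algtoLPfeas}, \ref{lem:algtoLPobj}, and \ref{lem:LPbound}. The only subtle point worth flagging in the write-up is the WLOG reduction to stopping rules that never stop on a non-max value, since the LP's interpretation of $x_i$ as $\PP(\stoptime=i)$ collapses to $\PP(\stoptime=i, v_i \geq v_{\max}^{<i})$ precisely because of that assumption, and without it the feasibility argument in Lemma~\ref{lem:algtoLPfeas} would not directly apply.
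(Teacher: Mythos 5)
Your proposal is correct and follows the paper's proof essentially verbatim: the same reduction to stopping rules that never stop on a non-maximum value, the same identification $x_i = \PP(\stoptime = i)$ as a feasible LP solution via Lemma~\ref{lem:algtoLPfeas}, and the same chaining through Lemma~\ref{lem:algtoLPobj} and the duality bound of Lemma~\ref{lem:LPbound}. Nothing further is needed.
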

\begin{proof}
    Observe that stopping on any $v_i$ if $v_i < \max\{v_1,\dots,v_{i-1}\}$ does not increase the desired probability. Then it is sufficient to consider $0$-value-oblivious stopping rules such that $s_i(v_1,\dots,v_i) =  0$ if $v_i < \max\{v_1,\dots,v_{i-1}\}$. If we let $x_i = \PP(\stoptime = i)$ for every $i\in[n]$, which we know by Lemma~\ref{lem:algtoLPfeas} to be a feasible solution to the LP, and $OPT(LP)$ represent the optimal (maximum) possible objective value we have $\PP(\randorder(1) = \stoptime, \randorder(2) < \stoptime) \leq OPT(LP) \leq \frac{1}{4} + \frac{2}{n}$
    by Lemma~\ref{lem:algtoLPobj} and Lemma~\ref{lem:LPbound}. 
\end{proof}

\subsubsection{Finishing the proof of the main lemma}\label{sec:finishlemma}

We  construct instances with large gaps between the second and third highest values.

\begin{definition}
    For an instance $Y \subseteq \mathbb{N}$, let $\infset$ be the family of instances  such that $v_{(2)} \geq n^2 v_{(3)}$.
\end{definition}

 Note that $\infset$ is infinitely large for any $n > 0$ and any infinitely large $Y$. The following lemma completes the proof of Lemma~\ref{thm:1/4}.

\begin{restatable}{rLem}{lemInV} \label{lem:InV}
    Let $\stoprule$ be a stopping rule over $n\geq 2$ values with stopping time $\stoptime$. If $\stoprule'$ is the corresponding stopping rule with stopping time $\stoptime'$ that is $\frac{1}{n^2}$-value-oblivious on infinite set $Y\subseteq \mathbb{N}$, then on any instance $V \in \infset$, $\E[v_{\max}^{<\stoptime}|v_\stoptime \geq v_{\max}^{<\stoptime}]\cdot \PP(v_\stoptime \geq v_{\max}^{<\stoptime})\leq \left(\frac{1}{4} + \frac{4}{n}\right)v_{(2)}.$
\end{restatable}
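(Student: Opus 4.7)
The plan is to exploit the fact that on instances $V\in\infset$ the gap $v_{(2)}\geq n^{2}v_{(3)}$ forces almost all of the target expectation to come from one specific event, which can be bounded via Lemma~\ref{thm:LPbound}. I would first use Lemma~\ref{lem:valoblivfromorder} (implicit in the hypothesis on $\stoprule'$) to replace the quantity for $\stoptime$ with the identical quantity for $\stoptime'$, and then work entirely with the $\frac{1}{n^{2}}$-value-oblivious rule $\stoprule'$ on its infinite support $Y$.

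\emph{Decomposition of the expectation.} Rewriting the target as $\E[\vmaxless{\stoptime'}\cdot\mathbf{1}[v_{\stoptime'}\geq \vmaxless{\stoptime'}]]$, I would split on whether $\vmaxless{\stoptime'}\geq v_{(2)}$. On the first event, since $v_{(1)}$ is the unique value strictly above $v_{(2)}$ and since $\vmaxless{\stoptime'}$ and $v_{\stoptime'}$ occur at distinct positions, one is forced to conclude $\vmaxless{\stoptime'}=v_{(2)}$ and $v_{\stoptime'}=v_{(1)}$; in other words, this event coincides exactly with $\{\randorder(1)=\stoptime',\randorder(2)<\stoptime'\}$ and contributes precisely $v_{(2)}\cdot\PP(\randorder(1)=\stoptime',\randorder(2)<\stoptime')$. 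On the complementary event $\vmaxless{\stoptime'}<v_{(2)}$, we have $\vmaxless{\stoptime'}\leq v_{(3)}\leq v_{(2)}/n^{2}$, so this part of the expectation contributes at most $v_{(2)}/n^{2}\leq v_{(2)}/n$.

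\emph{Transporting the LP bound across $\varepsilon$-value-obliviousness.} Lemma~\ref{thm:LPbound} is only stated for $0$-value-oblivious rules, so I would construct a companion rule $\stoprule''$ whose per-step stopping probability is the constant $y_i$ witnessing $\frac{1}{n^{2}}$-value-obliviousness for $\stoprule'$, and then couple $\stoprule'$ with $\stoprule''$ by sharing the random order $\randorder$ and a common family of independent uniform thresholds $U_1,\dots,U_n$ driving the stopping decisions. At step $i$ the two rules disagree only when $U_i$ lies in the interval of length $|s_i'-y_i|\leq 1/n^{2}$ separating their thresholds, and once they diverge their stopping times may differ; a union bound therefore gives $\PP(\stoptime'\neq\stoptime'')\leq n\cdot 1/n^{2}=1/n$. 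Consequently the two rules assign probabilities differing by at most $1/n$ to any event; combining this with $\PP(\randorder(1)=\stoptime'',\randorder(2)<\stoptime'')\leq 1/4+2/n$ from Lemma~\ref{thm:LPbound} yields $\PP(\randorder(1)=\stoptime',\randorder(2)<\stoptime')\leq 1/4+3/n$.

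\emph{Combining and main obstacle.} Adding the two contributions bounds the target by $v_{(2)}(1/4+3/n)+v_{(2)}/n=v_{(2)}(1/4+4/n)$, as required. A minor preliminary observation I would make at the outset is that stopping on non-max values can only decrease the target expectation, so I may assume $s_i'=0$ whenever $v_i<\max\{v_1,\dots,v_{i-1}\}$ and apply Lemma~\ref{thm:LPbound} to $\stoprule''$ legitimately. The main obstacle is the bridging step: $\varepsilon$-value-obliviousness only gives pointwise closeness of per-step stopping probabilities and the coupling error compounds linearly in $n$, which is precisely why the preparatory lemma must be invoked with slack $\varepsilon=1/n^{2}$ so that the cumulative discrepancy remains $O(1/n)$ rather than $\Omega(1)$.
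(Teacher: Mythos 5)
Your proposal is correct and follows essentially the same route as the paper's proof: switch to $\stoprule'$, isolate the event $\{\randorder(1)=\stoptime',\randorder(2)<\stoptime'\}$ (bounding the remainder by $v_{(3)}\leq v_{(2)}/n^2$ using the $\infset$ gap), and transfer the LP bound of Lemma~\ref{thm:LPbound} from a $0$-value-oblivious rule at a cost of $n\varepsilon=1/n$. The only cosmetic difference is that you spell out the coupling behind that $n\varepsilon$ loss, whereas the paper invokes the union-bound argument from \citet{correa} directly.
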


\subsubsection{Finishing the proof of the main lower bound result}
In order to apply results on stopping rules to an auction $M \in \mathcal{M}_u$, we  express the probability of stopping at any timestep as $s_i(v_1,\dots,v_i) = \PP(v_i\geq p_i(v_1,\dots,v_{i-1}) | v_j < p_j(v_1,\dots,v_{j-1}) \text{ for all }j < i).$ 
Next, we bound the revenue gained  from any bidder that does not have the highest value so far.
\begin{restatable}{rLem}{lempostbelowhighest}
\label{lem:postbelowhighest}
Given auction $M\in \mathcal{M}_u$ over $n$ agents with stopping rule $\stoprule$ and stopping time $\stoptime$, if the corresponding stopping rule $\stoprule'$ is $\frac{1}{n^2}$-value-oblivious on infinite set $Y \subseteq \mathbb{N}$, then there exists an instance $V \in \infset$ such that $\E[p_\stoptime(\textbf{v}_{:\stoptime - 1})|v_\stoptime \in [p_\stoptime(\textbf{v}_{:\stoptime - 1}),v_{\max}^{<\stoptime})]\cdot\PP(v_\stoptime \in [p_\stoptime(\textbf{v}_{:\stoptime - 1}),v_{\max}^{<\stoptime})) \leq \frac{3}{n^2}v_{(2)}$.
\end{restatable}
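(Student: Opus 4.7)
The plan is to pick an instance $V \in \infset$ with values in $Y$ (which exists because $Y$ is infinite), and decompose the event $E = \{v_\stoptime \in [p_\stoptime(\mathbf{v}_{:\stoptime - 1}), v_{\max}^{<\stoptime})\}$ according to the rank of $v_\stoptime$. On $E$ we have $p_\stoptime \leq v_\stoptime < v_{\max}^{<\stoptime}$, so in particular $v_\stoptime \neq v_{(1)}$. I split $E = E_1 \sqcup E_2$ with $E_2 = E \cap \{v_\stoptime \leq v_{(3)}\}$ and $E_1 = E \cap \{v_\stoptime = v_{(2)}\}$.

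For $E_2$, the gap condition $v_{(2)} \geq n^2 v_{(3)}$ from $\infset$ gives $p_\stoptime \leq v_\stoptime \leq v_{(3)} \leq v_{(2)}/n^2$ pointwise on $E_2$, so $\E[p_\stoptime \cdot \mathbb{1}_{E_2}] \leq v_{(2)}/n^2$. For $E_1$, since $v_\stoptime = v_{(2)} < v_{\max}^{<\stoptime}$ forces $v_{\max}^{<\stoptime} = v_{(1)}$, the bidder with value $v_{(1)}$ must appear at some position $j$ strictly before $\stoptime$ (the position of $v_{(2)}$). Crucially, since $M \in \mathcal{M}_u$, any finite price $p_j(\mathbf{v}_{:j-1}) \leq v_{\max}^{<j} < v_{(1)} = v_j$ would be accepted by the bidder at position $j$, so not stopping at $j$ forces $p_j(\mathbf{v}_{:j-1}) = \infty$. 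The revenue on $E_1$ is at most $v_\stoptime = v_{(2)}$, hence $\E[p_\stoptime \cdot \mathbb{1}_{E_1}] \leq v_{(2)} \cdot \PP(E_1)$.

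The main obstacle is to show $\PP(E_1) \leq 2/n^2$ for a suitable choice of $V \in \infset$. I plan to mirror the LP/duality machinery of Lemmas~\ref{lem:algtoLPfeas}--\ref{thm:LPbound}, but applied to the \emph{inverted} event of stopping on $v_{(2)}$ after having posted $\infty$ at $v_{(1)}$'s arrival: the $1/n^2$-value-obliviousness of $\stoprule'$ on $Y$, together with the linkage between $\stoprule$ and $\stoprule'$ from Lemma~\ref{lem:orderoblivious} (which preserves zero-probability stopping positions and the analogous rank-based averaging), lets one pass to an essentially rank-based analysis on instances in $\infset$. The extra constraint that the auction must post $\infty$ when $v_{(1)}$ arrives leaves little ``room'' to simultaneously stop precisely at $v_{(2)}$'s position, which should cost an additional factor of order $1/n$ relative to the $O(1/n)$ bound that the duality argument of Lemma~\ref{thm:LPbound} would directly yield for ``stop on the maximum seen so far at position $i$.'' Combining the two subcases then gives $\E[p_\stoptime \cdot \mathbb{1}_E] \leq v_{(2)}/n^2 + 2 v_{(2)}/n^2 = 3 v_{(2)}/n^2$, as claimed.
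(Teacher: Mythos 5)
Your decomposition of $E=\{v_\stoptime \in [p_\stoptime(\textbf{v}_{:\stoptime-1}),v_{\max}^{<\stoptime})\}$ into $E_2=E\cap\{v_\stoptime\leq v_{(3)}\}$ and $E_1=E\cap\{v_\stoptime=v_{(2)}\}$, and the bound $p_\stoptime\leq v_{(3)}\leq v_{(2)}/n^2$ on $E_2$, match the paper. The gap is in how you handle $E_1$: the claim that some $V\in\infset$ makes $\PP(E_1)\leq 2/n^2$ is false, so the route via $\E[p_\stoptime\mid E_1]\cdot\PP(E_1)\leq v_{(2)}\cdot\PP(E_1)$ cannot be repaired. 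Concretely, consider the auction in $\mathcal{M}_u$ that posts $\infty$ to the first $\lfloor n/2\rfloor$ bidders and then posts to each later bidder $i$ the \emph{second}-highest value among $v_1,\dots,v_{i-1}$. On the event that the holders of $v_{(1)}$ and $v_{(3)}$ arrive in the first half and the holder of $v_{(2)}$ arrives in the second half (probability bounded below by an absolute constant, roughly $1/8$, for \emph{every} instance with distinct values), each second-half bidder arriving before $v_{(2)}$ has value at most $v_{(4)}$, which is below the posted price (at least $v_{(3)}$), so all reject; then the bidder with value $v_{(2)}$ is offered $v_{(3)}$ and accepts, while $v_{\max}^{<\stoptime}=v_{(1)}$. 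Hence $\PP(E_1)=\Omega(1)$ for every instance, and no choice of $V\in\infset$ — nor the LP/duality machinery, which by Lemma~\ref{thm:LPbound} only yields constant bounds of the form $\frac14+\frac2n$ for such stopping probabilities — can give $\PP(E_1)=O(1/n^2)$. Your intuition that posting $\infty$ at $v_{(1)}$'s arrival ``costs an extra factor of $1/n$'' is also unsubstantiated and would not suffice even if true.

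What makes the lemma true is not that the event $E_1$ is rare, but that the \emph{price collected} on $E_1$ cannot be close to $v_{(2)}$ (in the counterexample above the auction stops on $E_1$ with constant probability yet collects only $v_{(3)}\leq v_{(2)}/n^2$). This is exactly what the paper's helper Lemma~\ref{lem:seenhighest} establishes, via a pigeonhole over $l\geq n^2$ candidate instances $V^j\in\infset$ that differ only in the second-highest value, with $v_{(2)}^1,\dots,v_{(2)}^l\in Y$ separated by factors of $n^2$. Conditioned on $\{v_\stoptime=v_{(2)},\,v_{\max}^{<\stoptime}=v_{(1)}\}$, the history before $\stoptime$ (and hence the distribution of the posted price) does not depend on which $v_{(2)}^j$ the instance uses, since that value only appears at time $\stoptime$. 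Partitioning the price range into the intervals $I_k$ determined by these candidates, for any instance $V^j$ a price landing in $I_k$ with $k\neq j$ is either rejected (it exceeds $v_{(2)}^j$) or is at most $v_{(2)}^j/n^2$; by pigeonhole some $j'$ has $\PP(p_\stoptime\in I_{j'}\mid \cdot)\leq 1/l\leq 1/n^2$, giving the $\frac{2}{n^2}v_{(2)}$ bound on the conditional expected price times the acceptance probability. Combining that with your (correct) $E_2$ bound yields $\frac{3}{n^2}v_{(2)}$. So the missing idea is this multi-instance, price-interval pigeonhole that bounds the revenue on $E_1$ directly; bounding $\PP(E_1)$ is the wrong quantity.
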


Now we can prove the main theorem.
\begin{proof}[Proof of Theorem~\ref{thm:atmostmsfbound}] 
 Let $\stoptime$ represent the stopping time of the stopping rule $\stoprule$ corresponding to $M_u$, which by definition of $\mathcal{M}$ means that the item is sold, then we get the following:
    \begin{align*}
        \E[\rev(M(\mathbf{v}))] =  \E[p_\stoptime(\mathbf{v}_{:\stoptime - 1})]\nonumber  
        = & \E[p_\stoptime(\mathbf{v}_{:\stoptime - 1})|v_\stoptime\geq p_\stoptime(\mathbf{v}_{:\stoptime-1})]\cdot\PP(v_\stoptime\geq p_\stoptime(\mathbf{v}_{:\stoptime-1})) \nonumber\\
        = & \E[p_\stoptime(\mathbf{v}_{:\stoptime - 1})|v_\stoptime \geq v_{\max}^{<\stoptime}]\cdot\PP(v_\stoptime \geq v_{\max}^{<\stoptime}) + \nonumber\\
        &\E[p_\stoptime(\mathbf{v}_{:\stoptime - 1})|v_\stoptime \in [p_\stoptime(\mathbf{v}_{:\stoptime - 1}),v_{\max}^{<\stoptime})]\cdot\PP(v_\stoptime \in [p_\stoptime(\mathbf{v}_{:\stoptime - 1}),v_{\max}^{<\stoptime}))\nonumber\\
        \leq & \E[v_{\max}^{<\stoptime}|v_\stoptime \geq v_{\max}^{<\stoptime}]\cdot\PP(v_\stoptime \geq v_{\max}^{<\stoptime}) + \nonumber\\
        &\E[p_\stoptime(\mathbf{v}_{:\stoptime - 1})|v_\stoptime \in [p_\stoptime(\mathbf{v}_{:\stoptime - 1}),v_{\max}^{<\stoptime})]\cdot\PP(v_\stoptime \in [p_\stoptime(\mathbf{v}_{:\stoptime - 1}),v_{\max}^{<\stoptime})).
    \end{align*}
    The inequality comes from the fact that all finite prices $\mathcal{M}_{u}$ auctions post at time $i$ cannot exceed the best value seen before $i$.  For $\stoprule'$ corresponding to $\stoprule$, let $Y \subseteq N$ be the infinite set on which it is $\frac{1}{n^2}$-value-oblivious, which we know to exist by Lemma~\ref{lem:valoblivfromorder}. Then if we consider the instance $V \in \infset$ from Lemma~\ref{lem:postbelowhighest} and apply Lemma~\ref{lem:InV} we get
\begin{align*}
    \E_{\mathbf{v} \sim \mu(V)}[\rev(M(\mathbf{v}))] &  \leq \E[v_{\max}^{<\stoptime}|v_\stoptime \geq v_{\max}^{<\stoptime}]\cdot\PP(v_\stoptime \geq v_{\max}^{<\stoptime}) \\
    & + \E[p_\stoptime(\textbf{v}_{:\stoptime - 1})|v_\stoptime \in [p_\stoptime(\textbf{v}_{:\stoptime - 1}),v_{\max}^{<\stoptime})]\cdot\PP(v_\stoptime \in [p_\stoptime(\textbf{v}_{:\stoptime - 1}),v_{\max}^{<\stoptime}))\\
    & \leq \left(\frac{1}{4} + \frac{4}{n} \right) v_{(2)} + \frac{3}{n^2} v_{(2)} \leq \left(\frac{1}{4} + \frac{7}{n}\right) v_{(2)}. \qedhere
\end{align*}

\end{proof}

\paragraph{Beyond Up to Max-Previously-Seen auctions.} The challenge in extending our $1/4$ bound to general auctions is that the infinite size set of values over which  value-obliviousness holds  can have arbitrarily large gaps between the values in that set. Thus, instead of posting a price equal to the maximum value seen so far, which intuitively is the price an auction should post when there are no time overlaps between the agents, an auction could potentially post twice that price and achieve a higher revenue for such instances with large gaps between values. We do, however, conjecture that $1/4$ is the right answer for the setting without predictions.

\subsection{Tight Impossibility Result for Consistency-Robustness Trade-off}\label{sec:lbtrade-off}
 Optimality results for secretary and online auction problems are often obtained through LP duality arguments \citep{BJS08,BJS10,AGKK23}.
The LP formulations for these problems, including ours in Section~\ref{sec:lbrobust}, rely on  history-independence properties. In the setting with predictions, these history-independence properties do not hold because, for example, the probability of a bidder accepting
a price equal to the prediction crucially depends on how many bidders have previously rejected
an offered price equal to the prediction. Such dependencies make it challenging to give an LP
formulation of our problem with predictions. Nevertheless,
 we show that the robustness-consistency trade-off achieved by our auction is optimal for the following family of auctions.
 


\begin{definition}
\label{def:mechfamily}
Consider the following allocation rules for bidder $i$:
\begin{itemize}
    \item $a_1^i$ never allocates the item to bidder $i$,
    \item for all $j \in [i-1]$, $a_{2, j}^i$ allocates to $i$ if $v_i \geq \max\{\prediction, v^{<i}_{(j)}\}$, and
    \item for all $j \in [i-1]$, $a_{3, j}^i$ allocates to $i$ if $v_i \geq v^{<i}_{(j)}$.
\end{itemize}
Let $A^i = \{a_1^i\} \cup \{a_{2,j}^i\}_{j \in [i-1]} \cup \{a_{3,j}^i\}_{j \in [i-1]}$.  An auction $M$ is in the family of  Prediction or Any-Previously-Seen (PA) auctions $\mathcal{M}_{a}$ if, for every bidder $i \in [n]$, there is an allocation rule $a^i \in A^i$ such that, if the item is not allocated to a bidder $j < i$, $M$ allocates to $i$ according to $a^i$.
\end{definition}

We note that our~\mech\ Auction, as well as the auctions from~\citet{BJS10} and~\citet{HKP04}, are in this family of auctions (but not our error-tolerant auction). 
 The main result in this section is the following.

\begin{restatable}{rThm}{thmHardness}\label{thm:hardness}
For any $\alpha \in [0,1]$, there is no  auction $M$  in the PA family of auctions $\mathcal{M}_{a}$ that is $\alpha$-consistent and   $(\frac{1-\alpha^2}{4} + \omega(\frac{1}{n}))$-robust.
\end{restatable}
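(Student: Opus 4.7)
The plan is a two-stage argument: an interchange-style reduction that puts any PA auction in a three-phase skeleton matching \mech, followed by a one-dimensional constrained optimization over the phase lengths.

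For the reduction step, given $M\in\mathcal{M}_a$, I partition the positions $[n]$ into $T_1,T_2,T_3$ according to which family ($a_1^i$, $\{a_{2,j}^i\}_j$, or $\{a_{3,j}^i\}_j$) the rule at position $i$ belongs to. I first show that taking $j=1$ everywhere is without loss: replacing $v_{(j)}^{<i}$ by $v_{\max}^{<i}$ in the threshold weakly raises the collected price whenever the rule allocates, and whenever the higher threshold blocks allocation, the item merely passes to a later position that has strictly more information. Next, I use adjacent-position swaps to push $T_1$ to a prefix, $T_2$ to a middle block, and $T_3$ to a suffix. Under the random-ordering model, moving an $a_1^i$ to just before an adjacent $a_{2,1}$ or $a_{3,1}$ position weakly helps because the later position sees strictly more data, and moving an $a_{2,1}$ to just before an adjacent $a_{3,1}$ position weakly helps because the $a_{2,1}$ threshold $\max\{\prediction,v_{\max}^{<i}\}$ dominates $v_{\max}^{<i}$, so the per-realization collected price can only go up when $a_{2,1}$ is applied earlier (and any realization in which the swap blocks an $a_{2,1}$ allocation is handled by the now-later $a_{3,1}$ rule with strictly more observation).

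Once $M$ is in three-phase form with phase sizes $(n_1,n_2,n_3)$, the consistency bound comes from the extreme instance with one value $M$, with $n-1$ values $\epsilon\ll M$, and prediction $\prediction=M$: revenue close to $M$ is collected only when the high-value bidder lands in $T_2$, so $\alpha$-consistency forces $n_2\ge \alpha n - o(1)$. Robustness is then upper-bounded by the worse of the two prediction regimes of Lemma~\ref{lem:robustness}. With $\prediction\to\infty$, every $T_2$ threshold exceeds all values, so $v_{(2)}$ is extracted only when the second-highest lands in positions $1,\ldots,n_1+n_2$ and the highest in positions $n_1+n_2+1,\ldots,n$, probability $\tfrac{(n_1+n_2)(n-n_1-n_2)}{n(n-1)}$. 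With $\prediction=0$, Phase 2 thresholds collapse onto Phase 3, so $v_{(2)}$ is extracted only when the split happens at position $n_1$, probability $\tfrac{n_1(n-n_1)}{n(n-1)}$. Writing $a=n_1$, $b=n_1+n_2$ with $b-a\ge \alpha n$, the concavity and symmetry of $z\mapsto z(n-z)$ about $z=n/2$ force the $\min$ to be maximized at the balanced choice $a=\tfrac{1-\alpha}{2}n$, $b=\tfrac{1+\alpha}{2}n$, yielding the bound $\tfrac{1-\alpha^2}{4}+O(1/n)$.

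The main obstacle is the interchange step. The LP/duality approach of Section~\ref{sec:lbrobust} exploited history-independence of posted prices, which fails here because the Phase 2 threshold depends on how many earlier bidders cleared $\prediction$, so the random-ordering symmetry used in the no-prediction analysis breaks. Each adjacent swap must be shown to not hurt simultaneously the consistency benchmark (where $\prediction=v_{(1)}$) and the worst-case robustness benchmark across both the under- and over-prediction adversarial instances; carefully tracking the coupled conditional revenue contributions, and ruling out a swap that helps one benchmark while secretly hurting the other, is the delicate technical content of the reduction.
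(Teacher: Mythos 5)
Your overall skeleton matches the paper's: reduce an arbitrary PA auction to a three-phase structure by interchange arguments, then optimize the two phase boundaries, and your final optimization step (consistency forces the middle block to have length at least $\alpha n$; robustness is capped by $\min\bigl\{\tfrac{a(n-a)}{n(n-1)},\tfrac{b(n-b)}{n(n-1)}\bigr\}$ over the two prediction regimes, maximized at the balanced split) is essentially the paper's Section~\ref{sec:optthresholds} and is fine. The gap is in the reduction, which you yourself flag as ``the delicate technical content'' but do not actually supply, and the heuristics you offer in its place do not hold. The claim that replacing $v^{<i}_{(j)}$ by $v_{\max}^{<i}$ is harmless because ``the item merely passes to a later position that has strictly more information'' is not an argument: the later rule may be $a_1^i$ (never sell), or the later bidder may reject, so raising a threshold can strictly lose revenue on a fixed instance. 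Likewise, the per-realization monotonicity you invoke for the adjacent swaps is false under a fixed arrival order: moving the $\max\{\prediction, v_{\max}^{<i}\}$ threshold one step earlier can block a sale to a high-valued bidder at step $i$ who then departs, after which the step-$(i+1)$ bidder (and everyone later) may buy nothing, so revenue drops on that realization. Since consistency and robustness are worst-case over instances and over $\prediction$, ``helps one benchmark'' versus ``hurts the other'' cannot be waved away realization by realization.

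What makes the paper's proof go through, and what is missing from your proposal, are two devices. First, a counting characterization (Lemma~\ref{lem:CR}): on two specific hard instances (top value $1$ and the rest $0$ with $\prediction=v_{(1)}$; top values $1,\varepsilon$ and the rest $0$ with $\prediction>v_{(1)}$), the only revenue-generating event is posting exactly $\prediction$ (respectively $v_{(2)}$) to the position of the highest bidder, so consistency and robustness of any PA auction equal $|C^M|/n!$ and $|R^M|/n!$ for explicit sets of permutations. This is what lets the PA$\to$PM step (your $j=1$ reduction, Lemma~\ref{lem:PAtoPM}) and every swap be verified purely combinatorially, without reasoning about expected revenue on arbitrary instances. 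Second, the swaps are analyzed under a coupling: the paper compares $M$ on order $\randorder$ with the swapped auction $M'$ on the value-swapped order $f_i(\randorder)$, and shows $f_i(C^M)\subseteq C^{M'}$ and $f_i(R^M)\subseteq R^{M'}$ by a case analysis on whether the highest bidder arrives before, at, or after positions $i,i+1$, and on the prediction regime ($\prediction>v_{(1)}$, $\prediction<v_{(2)}$, or in between) for robustness (Lemmas~\ref{lem:swap2and3}--\ref{lem:swap1and3}). Without the counting lemma and the bijective coupling, your interchange step is an unproven assertion, and since the threshold optimization presupposes the three-phase form, the proof as proposed does not yet establish the theorem.
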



\paragraph{Overview of the proof.} We say that an $\alpha$-consistent auction $M$ is robustness-optimal among $\mathcal{M}$ if $M \in \mathcal{M}$ and there is no $\alpha$-consistent auction $M' \in \mathcal{M}$ that achieves strictly better robustness than $M$. The full proof can be found in Appendix~\ref{app:lbtrade-off}.
\begin{enumerate}
\item We first show that, for any PA auction $M \in \mathcal{M}_{a}$, there exists an auction from a simpler family of auctions that achieves consistency and robustness that are no worse than those achieved by $M$ (Section~\ref{sec:reduction}). Thus, impossibility results for this simpler family extend to PA auctions.
\item We then show that, for any $\alpha \in [0,1]$, there exist $\first, \second \in [n]$ and an $\alpha$-consistent, robustness-optimal auction among this simpler family of auctions that posts price $\infty$ at each time $i \in [1, \first]$, then price $\max\{\prediction, v^{\leq\first}_{(1)}\}$ at each time $i \in [\first + 1, \second]$, and finally price $v^{\leq\second}_{(1)}$ at each time $i \in [\second + 1, n]$ (Section~\ref{sec:mainlemma}). This is the main part of the proof. 
\item Finally, we show that, for the auction structure described in the second step, the optimal thresholds for maximizing robustness are $\first = \frac{1 -  \alpha}{2}n$ and $\second = \frac{\alpha + 1}{2}n$, achieving robustness at most $\frac{1 - \alpha^2}{4} + \omega(\frac{1}{n})$
(Section~\ref{sec:optthresholds}). 
\end{enumerate}



\bibliographystyle{abbrvnat}
\bibliography{biblio}

\begin{thebibliography}{38}
\providecommand{\natexlab}[1]{#1}
\providecommand{\url}[1]{\texttt{#1}}
\expandafter\ifx\csname urlstyle\endcsname\relax
  \providecommand{\doi}[1]{doi: #1}\else
  \providecommand{\doi}{doi: \begingroup \urlstyle{rm}\Url}\fi

\bibitem[Agrawal et~al.(2022)Agrawal, Balkanski, Gkatzelis, Ou, and Tan]{ABGOT22}
P.~Agrawal, E.~Balkanski, V.~Gkatzelis, T.~Ou, and X.~Tan.
\newblock Learning-augmented mechanism design: Leveraging predictions for facility location.
\newblock In \emph{{EC} '22: The 23rd {ACM} Conference on Economics and Computation, Boulder, CO, USA, July 11 - 15, 2022}, pages 497--528. {ACM}, 2022.

\bibitem[Antoniadis et~al.(2023)Antoniadis, Gouleakis, Kleer, and Kolev]{AGKK23}
A.~Antoniadis, T.~Gouleakis, P.~Kleer, and P.~Kolev.
\newblock Secretary and online matching problems with machine learned advice.
\newblock \emph{Discret. Optim.}, 48\penalty0 (Part 2):\penalty0 100778, 2023.

\bibitem[Azar et~al.(2022)Azar, Panigrahi, and Touitou]{azar2022online}
Y.~Azar, D.~Panigrahi, and N.~Touitou.
\newblock Online graph algorithms with predictions.
\newblock \emph{Proceedings of the Thirty-Third Annual ACM-SIAM Symposium on Discrete Algorithms}, 2022.

\bibitem[Balcan et~al.(2023)Balcan, Prasad, and Sandholm]{BPS23}
M.~Balcan, S.~Prasad, and T.~Sandholm.
\newblock Bicriteria multidimensional mechanism design with side information.
\newblock \emph{CoRR}, abs/2302.14234, 2023.
\newblock \doi{10.48550/arXiv.2302.14234}.
\newblock URL \url{https://doi.org/10.48550/arXiv.2302.14234}.

\bibitem[Balkanksi et~al.(2023)Balkanksi, Gkatzelis, and Tan]{BGT23}
E.~Balkanksi, V.~Gkatzelis, and X.~Tan.
\newblock Mechanism design with predictions: An annotated reading list.
\newblock \emph{SIGecom Exchanges}, 21\penalty0 (1):\penalty0 54--57, 2023.

\bibitem[Balkanski et~al.(2023)Balkanski, Gkatzelis, and Tan]{BGT223}
E.~Balkanski, V.~Gkatzelis, and X.~Tan.
\newblock Strategyproof scheduling with predictions.
\newblock In Y.~T. Kalai, editor, \emph{14th Innovations in Theoretical Computer Science Conference, {ITCS} 2023, January 10-13, 2023, MIT, Cambridge, Massachusetts, {USA}}, volume 251 of \emph{LIPIcs}, pages 11:1--11:22. Schloss Dagstuhl - Leibniz-Zentrum f{\"{u}}r Informatik, 2023.
\newblock \doi{10.4230/LIPIcs.ITCS.2023.11}.
\newblock URL \url{https://doi.org/10.4230/LIPIcs.ITCS.2023.11}.

\bibitem[Bamas et~al.(2020)Bamas, Maggiori, and Svensson]{BMS20}
{\'{E}}.~Bamas, A.~Maggiori, and O.~Svensson.
\newblock The primal-dual method for learning augmented algorithms.
\newblock In \emph{Advances in Neural Information Processing Systems 33: Annual Conference on Neural Information Processing Systems 2020, NeurIPS 2020, December 6-12, 2020, virtual}, 2020.

\bibitem[Banerjee et~al.(2022)Banerjee, Gkatzelis, Gorokh, and Jin]{banerjee2020online}
S.~Banerjee, V.~Gkatzelis, A.~Gorokh, and B.~Jin.
\newblock Online nash social welfare maximization with predictions.
\newblock In \emph{Proceedings of the 2022 {ACM-SIAM} Symposium on Discrete Algorithms, {SODA} 2022}. {SIAM}, 2022.

\bibitem[Bar{-}Yossef et~al.(2002)Bar{-}Yossef, Hildrum, and Wu]{BYHW02}
Z.~Bar{-}Yossef, K.~Hildrum, and F.~Wu.
\newblock Incentive-compatible online auctions for digital goods.
\newblock In D.~Eppstein, editor, \emph{Proceedings of the Thirteenth Annual {ACM-SIAM} Symposium on Discrete Algorithms, January 6-8, 2002, San Francisco, CA, {USA}}, pages 964--970. {ACM/SIAM}, 2002.
\newblock URL \url{http://dl.acm.org/citation.cfm?id=545381.545506}.

\bibitem[Blum and Hartline(2005)]{BH05}
A.~Blum and J.~D. Hartline.
\newblock Near-optimal online auctions.
\newblock In \emph{Proceedings of the Sixteenth Annual {ACM-SIAM} Symposium on Discrete Algorithms, {SODA} 2005, Vancouver, British Columbia, Canada, January 23-25, 2005}, pages 1156--1163. {SIAM}, 2005.
\newblock URL \url{http://dl.acm.org/citation.cfm?id=1070432.1070597}.

\bibitem[Blum et~al.(2003)Blum, Kumar, Rudra, and Wu]{BRW03}
A.~Blum, V.~Kumar, A.~Rudra, and F.~Wu.
\newblock Online learning in online auctions.
\newblock In \emph{Proceedings of the Fourteenth Annual {ACM-SIAM} Symposium on Discrete Algorithms, January 12-14, 2003, Baltimore, Maryland, {USA}}, pages 202--204. {ACM/SIAM}, 2003.
\newblock URL \url{http://dl.acm.org/citation.cfm?id=644108.644143}.

\bibitem[Blum et~al.(2006)Blum, Sandholm, and Zinkevich]{BSZ06}
A.~Blum, T.~Sandholm, and M.~Zinkevich.
\newblock Online algorithms for market clearing.
\newblock \emph{J. {ACM}}, 53\penalty0 (5):\penalty0 845--879, 2006.
\newblock \doi{10.1145/1183907.1183913}.
\newblock URL \url{https://doi.org/10.1145/1183907.1183913}.

\bibitem[Bredin and Parkes(2012)]{BP12}
J.~Bredin and D.~C. Parkes.
\newblock Models for truthful online double auctions.
\newblock \emph{CoRR}, abs/1207.1360, 2012.
\newblock URL \url{http://arxiv.org/abs/1207.1360}.

\bibitem[Buchbinder et~al.(2010)Buchbinder, Jain, and Singh]{BJS10}
N.~Buchbinder, K.~Jain, and M.~Singh.
\newblock Incentives in online auctions via linear programming.
\newblock In \emph{International Workshop on Internet and Network Economics}, pages 106--117. Springer, 2010.

\bibitem[Buchbinder et~al.(2014)Buchbinder, Jain, and Singh]{BJS08}
N.~Buchbinder, K.~Jain, and M.~Singh.
\newblock Secretary problems via linear programming.
\newblock \emph{Mathematics of Operations Research}, 39\penalty0 (1):\penalty0 190--206, 2014.
\newblock URL \url{http://www.jstor.org/stable/24540892}.

\bibitem[Caragiannis and Kalantzis(2024)]{caragiannis2024randomized}
I.~Caragiannis and G.~Kalantzis.
\newblock Randomized learning-augmented auctions with revenue guarantees.
\newblock \emph{arXiv preprint arXiv:2401.13384}, 2024.

\bibitem[Constantin et~al.(2007)Constantin, Ito, and Parkes]{CIP07}
F.~Constantin, T.~Ito, and D.~C. Parkes.
\newblock Online auctions for bidders with interdependent values.
\newblock In E.~H. Durfee, M.~Yokoo, M.~N. Huhns, and O.~Shehory, editors, \emph{6th International Joint Conference on Autonomous Agents and Multiagent Systems {(AAMAS} 2007), Honolulu, Hawaii, USA, May 14-18, 2007}, page 110. {IFAAMAS}, 2007.
\newblock \doi{10.1145/1329125.1329260}.
\newblock URL \url{https://doi.org/10.1145/1329125.1329260}.

\bibitem[Correa et~al.(2019)Correa, D{\"u}tting, Fischer, and Schewior]{correa}
J.~Correa, P.~D{\"u}tting, F.~Fischer, and K.~Schewior.
\newblock Prophet inequalities for iid random variables from an unknown distribution.
\newblock In \emph{Proceedings of the 2019 ACM Conference on Economics and Computation}, pages 3--17, 2019.

\bibitem[D{\"{u}}tting et~al.(2021)D{\"{u}}tting, Lattanzi, Leme, and Vassilvitskii]{DLLV21}
P.~D{\"{u}}tting, S.~Lattanzi, R.~P. Leme, and S.~Vassilvitskii.
\newblock Secretaries with advice.
\newblock In P.~Bir{\'{o}}, S.~Chawla, and F.~Echenique, editors, \emph{{EC} '21: The 22nd {ACM} Conference on Economics and Computation, Budapest, Hungary, July 18-23, 2021}, pages 409--429. {ACM}, 2021.

\bibitem[Fujii and Yoshida(2023)]{KY23}
K.~Fujii and Y.~Yoshida.
\newblock The secretary problem with predictions.
\newblock \emph{CoRR}, abs/2306.08340, 2023.

\bibitem[Gkatzelis et~al.(2022)Gkatzelis, Kollias, Sgouritsa, and Tan]{GKST22}
V.~Gkatzelis, K.~Kollias, A.~Sgouritsa, and X.~Tan.
\newblock Improved price of anarchy via predictions.
\newblock In \emph{Proceedings of the 23rd ACM Conference on Economics and Computation}, pages 529--557, 2022.

\bibitem[Gravin et~al.(2023)Gravin, Sun, and Tang]{gravin2023online}
N.~Gravin, E.~Sun, and Z.~G. Tang.
\newblock Online ordinal problems: Optimality of comparison-based algorithms and their cardinal complexity.
\newblock In \emph{2023 IEEE 64th Annual Symposium on Foundations of Computer Science (FOCS)}, pages 1863--1876. IEEE, 2023.

\bibitem[Hajiaghayi et~al.(2004)Hajiaghayi, Kleinberg, and Parkes]{HKP04}
M.~T. Hajiaghayi, R.~Kleinberg, and D.~C. Parkes.
\newblock Adaptive limited-supply online auctions.
\newblock In \emph{Proceedings of the 5th ACM Conference on Electronic Commerce}, pages 71--80, 2004.

\bibitem[Im et~al.(2021)Im, Kumar, Qaem, and Purohit]{IKQP21}
S.~Im, R.~Kumar, M.~M. Qaem, and M.~Purohit.
\newblock Online knapsack with frequency predictions.
\newblock In \emph{Advances in Neural Information Processing Systems 34: Annual Conference on Neural Information Processing Systems 2021, NeurIPS 2021, December 6-14, 2021, virtual}, pages 2733--2743, 2021.

\bibitem[Istrate and Bonchis(2022)]{IB22}
G.~Istrate and C.~Bonchis.
\newblock Mechanism design with predictions for obnoxious facility location.
\newblock \emph{CoRR}, abs/2212.09521, 2022.

\bibitem[Koutsoupias and Pierrakos(2013)]{KP13}
E.~Koutsoupias and G.~Pierrakos.
\newblock On the competitive ratio of online sampling auctions.
\newblock \emph{ACM Transactions on Economics and Computation (TEAC)}, 1\penalty0 (2):\penalty0 1--10, 2013.

\bibitem[Krysta and Telelis(2012)]{KT12}
P.~Krysta and O.~Telelis.
\newblock Limited supply online auctions for revenue maximization.
\newblock In \emph{International Workshop on Internet and Network Economics}, pages 519--525. Springer, 2012.

\bibitem[Lavi and Nisan(2000)]{lavi2000competitive}
R.~Lavi and N.~Nisan.
\newblock Competitive analysis of incentive compatible on-line auctions.
\newblock In \emph{Proceedings of the 2nd ACM Conference on Electronic Commerce}, pages 233--241, 2000.

\bibitem[Lindermayr and Megow(2024)]{alps}
A.~Lindermayr and N.~Megow.
\newblock Alps, 2024.
\newblock URL \url{https://algorithms-with-predictions.github.io/}.

\bibitem[Lu et~al.(2023)Lu, Wan, and Zhang]{LuWanZhang23}
P.~Lu, Z.~Wan, and J.~Zhang.
\newblock Competitive auctions with imperfect predictions.
\newblock \emph{CoRR}, abs/2309.15414, 2023.

\bibitem[Lykouris and Vassilvtskii(2018)]{lykouris2018competitive}
T.~Lykouris and S.~Vassilvtskii.
\newblock Competitive caching with machine learned advice.
\newblock In \emph{International Conference on Machine Learning}, pages 3296--3305. PMLR, 2018.

\bibitem[Medina and Vassilvitskii(2017)]{MV17}
A.~M. Medina and S.~Vassilvitskii.
\newblock Revenue optimization with approximate bid predictions.
\newblock In I.~Guyon, U.~von Luxburg, S.~Bengio, H.~M. Wallach, R.~Fergus, S.~V.~N. Vishwanathan, and R.~Garnett, editors, \emph{Advances in Neural Information Processing Systems 30: Annual Conference on Neural Information Processing Systems 2017, December 4-9, 2017, Long Beach, CA, {USA}}, pages 1858--1866, 2017.
\newblock URL \url{https://proceedings.neurips.cc/paper/2017/hash/884d79963bd8bc0ae9b13a1aa71add73-Abstract.html}.

\bibitem[Mitzenmacher and Vassilvitskii(2021)]{mitzenmacher_vassilvitskii_2021}
M.~Mitzenmacher and S.~Vassilvitskii.
\newblock Algorithms with predictions.
\newblock In T.~Roughgarden, editor, \emph{Beyond the Worst-Case Analysis of Algorithms}, page 646–662. Cambridge University Press, 2021.
\newblock \doi{10.1017/9781108637435.037}.

\bibitem[Mitzenmacher and Vassilvitskii(2022)]{MV22}
M.~Mitzenmacher and S.~Vassilvitskii.
\newblock Algorithms with predictions.
\newblock \emph{Commun. {ACM}}, 65\penalty0 (7):\penalty0 33--35, 2022.

\bibitem[Nisan et~al.(2007)Nisan, Roughgarden, Tardos, and Vazirani]{NRTV07}
N.~Nisan, T.~Roughgarden, E.~Tardos, and V.~V. Vazirani.
\newblock \emph{Algorithmic Game Theory}.
\newblock Cambridge University Press, New York, NY, USA, 2007.

\bibitem[Parkes(2007)]{parkes_2007}
D.~C. Parkes.
\newblock Online mechanisms.
\newblock In N.~Nisan, T.~Roughgarden, E.~Tardos, and V.~V. Vazirani, editors, \emph{Algorithmic Game Theory}, page 411–440. Cambridge University Press, 2007.
\newblock \doi{10.1017/CBO9780511800481.018}.

\bibitem[Purohit et~al.(2018)Purohit, Svitkina, and Kumar]{PSK18}
M.~Purohit, Z.~Svitkina, and R.~Kumar.
\newblock Improving online algorithms via {ML} predictions.
\newblock In \emph{Advances in Neural Information Processing Systems 31: Annual Conference on Neural Information Processing Systems 2018, NeurIPS 2018, December 3-8, 2018, Montr{\'{e}}al, Canada}, pages 9684--9693, 2018.

\bibitem[Xu and Lu(2022)]{XL22}
C.~Xu and P.~Lu.
\newblock Mechanism design with predictions.
\newblock In L.~D. Raedt, editor, \emph{Proceedings of the Thirty-First International Joint Conference on Artificial Intelligence, {IJCAI} 2022, Vienna, Austria, 23-29 July 2022}, pages 571--577. ijcai.org, 2022.

\end{thebibliography}

\appendix
\section{Missing proofs from Section~\ref{sec:strategyproof} (Strategyproofness)}\label{app:strategyproof}

\obssinglethreshold*
\begin{proof}
If $\tau = \bestsofar{\first}$ in Line~\ref{line:first update}, it means that $\bestsofar{\first} \geq \prediction$. First note that $\tau$ will never equal to $\prediction$, implying $p = \tau$. Consider the case where the winner is above the threshold before the $\second$-th bidder departs, then the allocation is terminated before the possible update of $\tau$, making $p = \tau = \bestsofar{\first}$. Consider the case where the winner, 
 denoted as $i^*$ is above the threshold after the departure of the $\second$-th bidder, then $\bestsofar{\second} = \bestsofar{\first}$, since there are no bidder with value higher than $\tau = \bestsofar{\first}$ before bidder $i^*$, in this case the payment is also $\bestsofar{\first}$.
\end{proof}

\lemvalbidderfirst*
\begin{proof}
We first consider the case where $v_i < \tau$. Note that if such a bidder reports her type truthfully, she won't receive the item. The only way she can possibly win the item is by reporting a value $\hat{v}_i \geq \tau$. However, since, as observed in Lemma~\ref{obs:singlethreshold}, in the case of a single threshold, the price $p = \tau \geq v_i$, bidder $i$ would incur non-positive utility if she were to obtain the item.

Now consider the case where $v_i > \tau$. By Lemma~\ref{lem:firstphase}, we already have that the bidders with the first $\first$ departures has no incentive to misreport. Note that the threshold $\tau$, and therefore the payment $p$, is independent of the rest of the bidders' reports. First if bidder $i$, with her true type, wins the auction, then she has no incentive to misreport since the price is independent of her report. We now consider some bidder $i$ with $v_i \geq \tau$ but bidder $i$ is not the winner. This can only result from one of the following two scenarios: 1. Some other bidder $j$ is above the threshold before bidder $i$ is above the threshold and 2. bidder $i$ and some bidder $j$ are above the threshold at the same time, but bidder $i$ loses in tie-breaking.
We will now address these two cases separately.

\textbf{Case one:} Since in the single-threshold case $\tau$ is not updated after Line~\ref{line:first update}, it means that bidder $j$ is above the threshold before the arrival time of bidder $i$. For bidder $i$ to claim the item, she would need to report an earlier arrival time $\hat{a}_i < a_i$ to be above the threshold weakly earlier than bidder $j$. However, this is not a feasible outcome since we assume bidders cannot report $\hat{a}_i < a_i$.

\textbf{Case two:} First, note that by the definition of our auction and the fact that $v_i \geq \tau$,
the only possible tie-breaking happens at Line~\ref{line:tiebreaking} after $\tau$ changes from $\infty$ to $\bestsofar{\first}$. Since the tie-breaking rules are independent of bidders' reports, there is no way for bidder $i$ to change the results of the tie-breaking. Additionally, since the threshold is $\infty$ before this point, there is also no way for bidder $i$ to misreport her type to win the item.
\end{proof}

\lemvalbiddertwo*

\begin{proof}
We first consider the case where $v_i \leq \tau_2$. Note that if such a bidder reports her type truthfully, she won't receive the item. The only way she can possibly win the item is by reporting a value $\hat{v}_i \geq \tau_2$. However, since the price $p \geq \tau_2 > v_i$, bidder $i$ would incur negative utility if she were to obtain the item.

We now consider the case where $v_i \geq \tau_1$. First consider such a bidder that is not the winner. First note that if $i \leq \first$ by Lemma~\ref{lem:firstphase} she has no incentive to misreport. Consider any $i \geq \first$ but $i$ is not the winner. Then it can  only result from one of the following two scenarios: 1. Some other bidder $j$ is above the threshold before bidder $i$ is above the threshold. 2. bidder $i$ and some bidder $j$ are above the threshold at the same time, but bidder $i$ loses in tie-breaking.
We will now address these two cases separately.

\textbf{Case one:} Since $v_i \geq \tau_1$, it means that bidder $j$ is above the threshold before the arrival time of bidder $i$. For bidder $i$ to claim the item, she would need to report an earlier arrival time $\hat{a}_i < a_i$ to be above the threshold weakly earlier than bidder $j$. However, this is not a feasible outcome since we assume bidders cannot report $\hat{a}_i < a_i$.

\textbf{Case two:} First, note that by the definition of our auction and the fact that $v \geq \tau_1$, the only possible tie-breaking happens at Line~\ref{line:tiebreaking} after $\tau$ changes from $\infty$ to $\bestsofar{\first}$. Since the tie-breaking rules are independent of bidders' reports, there is no way for bidder $i$ to change the results of the tie-breaking. Additionally, since the threshold is $\infty$ before this point, there is also no way for bidder $i$ to misreport her arrival as earlier to win the item.
\end{proof}

\section{Missing Proofs from Section~\ref{sec:error} (Error Tolerant Auction)}\label{app:error}
\thmError*
\begin{proof}
We first prove that the worst-case expected revenue is always bounded by $\frac{1-\alpha^2}{4}v_{(2)}$. The argument is identical to the proof of Lemma~\ref{lem:robustness}, but changing the case conditions from $\prediction > \vmax$ and $\prediction \leq \vmax$ to $\gamma\prediction > \vmax$ and $\gamma\prediction \leq \vmax$, respectively. 

We now prove the $\alpha\gamma\quality v_{(1)}$ bound when the prediction is relatively accurate, i.e., $\quality \geq \gamma$. First by the definition of $\quality$ we have $\quality \prediction \leq \vmax \leq \frac{\prediction}{\quality}$. Suppose $\quality \geq \gamma$, then $\gamma \prediction \leq \quality \prediction \leq \vmax$. Consider the same instance we consider in Lemma~\ref{lem:consistency}, where the highest value agent $i^* \in [\first+1, \second]$, i.e., agent $i^*$ in the second phase. First note that she will get the item and pay $\tau$, since $\vmax \geq \gamma \prediction$ and $\vmax \geq \vbest$. In addition, the threshold $\tau = \max(\gamma\prediction, \vbest) \geq \gamma\prediction$. In this case the revenue achieved is at least 
$$\tau \geq  \gamma\prediction \geq \gamma\quality v_{(1)} .$$
Since such instance happens with probability $\frac{\second - \first}{n} = \alpha$. We therefore have that the expected revenue is at least $\alpha\gamma\quality \vmax$ when $\quality \geq \gamma$.

Lastly, we note that the strategyproof argument for auction \emech\ is identical to that for auction \mech.
\end{proof}

\section{Missing proofs from Section~\ref{sec:lbrobust} (Impossibility without Predictions)}\label{app:lbrobust}

\lemorderoblivious*
We now prove Lemma~\ref{lem:orderoblivious} by constructing a stopping rule $\stoprule'$ from any given stopping rule $\stoprule$, using a construction from \citet{correa}, and show that $\stoprule'$ from such construction satisfies the conditions above. Let $V_i = \{v_1, \ldots, v_i\}$ be the unordered set of the first $i$ values.

\begin{definition}
\label{def:rprime}
    Given an arbitrary stopping rule $\stoprule$ with stopping time $\stoptime$, define the stopping rule $\stoprule'$ to be such that, for all time $i \in [n]$ and  values $w_1,\dots,w_i$,
  \[s_i'(w_1,\dots,w_i) = \PP(\stoptime = i| V_{i-1} = \{w_1, \ldots, w_{i-1}\}, v_i = w_i, \stoptime \geq i).\]

\end{definition}
Intuitively, $\stoprule'$ is a weighted average of $\stoprule$ over permutations of $w_1,\dots,w_{i-1}$. From this section onwards, any use of $\stoprule'$ and $\stoptime'$ refers to this construction, corresponding to whatever stopping rule $\stoprule$ is used in that context.

The following lemma is proven in~\citet{correa}.

\begin{lemma}[Equation (6) inside proof of Lemma 2 in \citet{correa}] 
\label{lem:correa}
For any  step $i \in [n]$, $i-1$ unordered values $W_{i-1} \subseteq \mathbb{R}_{\geq 0}$, value $w_i\geq 0$, and stopping rule $\stoprule$ with stopping time $\stoptime$ with corresponding stopping rule $\stoprule'$ with stopping time $\stoptime'$ we have that
\[\PP(\stoptime \geq i | V_{i-1} = W_{i-1}, v_i = w_i) = \PP(\stoptime' \geq i | V_{i-1} = W_{i-1}, v_i = w_i)\]
\end{lemma}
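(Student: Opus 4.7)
The plan is to prove the equality by induction on $i$. For brevity let me write $P(W_{i-1}, w_i) := \PP(\stoptime \geq i \mid V_{i-1} = W_{i-1}, v_i = w_i)$ and analogously $P'(W_{i-1}, w_i)$. Since the event $\{\stoptime \geq i\}$ depends only on $v_1, \ldots, v_{i-1}$ (and on the independent coins of the stopping rule), these quantities do not actually depend on $w_i$, but I would keep the $w_i$ in the notation to match the statement. The base case $i = 1$ is immediate because both probabilities equal $1$.

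For the inductive step, I would expand $P(W_{i-1}, w_i)$ via the law of total probability by conditioning on which value $w \in W_{i-1}$ occupies position $i-1$. Because $\randorder$ is uniform, each $w \in W_{i-1}$ is equally likely to be $v_{i-1}$ given $V_{i-1} = W_{i-1}$, so $P(W_{i-1}, w_i) = \frac{1}{i-1}\sum_{w \in W_{i-1}} \PP(\stoptime \geq i \mid V_{i-2} = W_{i-1}\setminus\{w\},\, v_{i-1} = w)$. Factoring survival to $i$ as survival to $i-1$ times the conditional probability of not stopping at $i-1$, and appealing to the defining identity in Definition~\ref{def:rprime}, namely $s_{i-1}'(\cdot) = \PP(\stoptime = i-1 \mid V_{i-2} = W_{i-1}\setminus\{w\}, v_{i-1} = w, \stoptime \geq i-1)$ (which depends only on the unordered history and on $w$), each summand becomes $P(W_{i-1}\setminus\{w\}, w) \cdot (1 - s_{i-1}'(W_{i-1}\setminus\{w\}, w))$.

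The same recursion must then be derived for $P'$. Here the key observation is that although $\stoprule'$ uses $s_{i-1}'$ as its instantaneous stopping probability as a function of the full ordered history, this function depends only on the unordered part of the history by construction. Consequently, averaging over the $(i-2)!$ orderings of $V_{i-2}$ leaves the non-stopping factor unchanged, giving $\PP(\stoptime' \geq i \mid V_{i-2} = W, v_{i-1} = w) = P'(W, w) \cdot (1 - s_{i-1}'(W, w))$. Conditioning on $v_{i-1}$ exactly as above then yields $P'(W_{i-1}, w_i) = \frac{1}{i-1}\sum_{w} P'(W_{i-1}\setminus\{w\}, w) \cdot (1 - s_{i-1}'(W_{i-1}\setminus\{w\}, w))$, which is the identical recursion. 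Applying the inductive hypothesis termwise closes the induction.

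The main obstacle, as I see it, is the bookkeeping step that verifies $\stoprule'$'s instantaneous non-stopping factor is exactly $1 - s_{i-1}'(W, w)$ after conditioning on the unordered history. One must be careful to distinguish the random quantity $s_{i-1}'(v_1, \ldots, v_{i-1})$ viewed as a function of the ordered tuple from its common value across permutations of $W$, and to note that the conditional probability $\PP(\stoptime' = i-1 \mid V_{i-2} = W, v_{i-1} = w, \stoptime' \geq i-1)$ is therefore unaffected by averaging over orderings of $W$. Once this identification is made explicit, the induction proceeds symmetrically for $\stoprule$ and $\stoprule'$ and the claim follows.
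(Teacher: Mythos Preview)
The paper does not give its own proof of this lemma; it simply cites it as Equation~(6) inside the proof of Lemma~2 in \citet{correa}. Your inductive argument is correct: the key observation is that Definition~\ref{def:rprime} makes the one-step non-stopping factor at time $i-1$ equal to $1 - s_{i-1}'(W_{i-1}\setminus\{w\}, w)$ both for $\stoprule$ (by definition of $s'$) and for $\stoprule'$ (because $s'$ is order-oblivious, so averaging over orderings of the history leaves it unchanged). This yields the identical recursion for $P$ and $P'$, and the induction closes. This is precisely the natural proof and, as far as one can tell from the citation, is in the spirit of the original argument in \citet{correa}.
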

We are now ready to prove Lemma~\ref{lem:orderoblivious}.
\begin{proof}[Proof of Lemma~\ref{lem:orderoblivious}] Let $\stoprule'$ be the stopping rule from Definition~\ref{def:rprime}. For 1., note that   for any $\randorder \in \mathcal{S}_{i-1}$, 
\begin{align*}s'_i(w_1,\dots,w_i) & = \PP(\stoptime = i|V_{i-1} = \{w_1,\dots,w_{i-1}\},v_i = w_i,\stoptime \geq i) \\
 & = \PP(\stoptime = i|V_{i-1} =\{w_{\randorder(1)},\dots,w_{\randorder(i-1)}\},v_i = w_i,\stoptime \geq i)\\
 & = s_i'(w_{\randorder(1)},\dots,w_{\randorder(i-1)},w_i). 
\end{align*}
Thus, $\stoprule'$ is order-oblivious. For 2., consider a stopping rule $\stoprule$ with stopping time $\stoptime$, an instance of  $n$ values $W\subseteq \mathbb{R}_{\geq 0}$, and $i\in[n]$ such that $\PP(\stoptime = i) = 0$. This implies $s_i'(w_1,\dots,w_i) = 0$ for every $\{w_1,\dots,w_i\} \subseteq  W$. Thus, $\PP(\stoptime' = i) = 0$.  

For 3., consider the stopping time $\stoptime$ of stopping rule $\stoprule$ over $n$ values $W$, we have that
    \begin{align*}
        \E[v_{\max}^{<\stoptime}|v_{\stoptime}\geq v_{\max}^{<\stoptime}]\cdot\PP(v_{\stoptime}\geq v_{\max}^{<\stoptime}) 
        = & \sum_{i=1}^n\E[v_{\max}^{<i}|v_i\geq v_{\max}^{<i},\stoptime = i]\cdot\PP(v_i\geq v_{\max}^{<i}, \stoptime = i)\\
        = & \sum_{i=1}^n \sum_{S \subseteq W : |S| = i}\E[v_{\max}^{<i}|v_i\geq v_{\max}^{<i}, \stoptime = i, V_i = S ] \\
        & \qquad \qquad \qquad \qquad \cdot \PP(v_i\geq v_{\max}^{<i}, \stoptime = i | V_i = S)\cdot\PP(V_i = S)  
         \end{align*}
    For a fixed set $T \subseteq W $ such that $|T| = i$, we let $\{t_1, \ldots, t_i\} = T$ such that $t_1 <  \ldots<  t_i$. Then,
    \begin{align*}
         & \sum_{T \subseteq W : |T| = i}\E[v_{\max}^{<i}|v_i\geq v_{\max}^{<i}, \stoptime = i, V_i = T ]\cdot\PP(v_i\geq v_{\max}^{<i}, \stoptime = i | V_i = T)\cdot\PP(V_i = T)    \\
         = & \sum_{T \subseteq W : |T| = i}t_{i-1} \cdot \PP(\stoptime = i | V_i = T,  v_i = t_i)\cdot\PP(V_i = T| v_i = t_i)\cdot\PP(v_i = t_i) 
    \end{align*}
Next, note that 
\begin{align*}
    \PP(\stoptime = i |  V_i = T,  v_i = t_i)
    = &  \PP(\stoptime = i | V_i = T,  v_i = t_i, \stoptime \geq i)\cdot\PP(\stoptime \geq i | V_i = T,  v_i = t_i) \\ 
    = & \PP(\stoptime' = i | V_i = T,  v_i = t_i, \stoptime \geq i)\cdot\PP(\stoptime' \geq i | V_i = T,  v_i = t_i) \\
    = & \PP(\stoptime' = i |  V_i = T,  v_i = t_i) \numberthis
\end{align*}
where the second equality is by definition of $\stoptime'$ and  Lemma~\ref{lem:correa}. We conclude that
\begin{align*}
         \E[v_{\max}^{<\stoptime}|v_{\stoptime}\geq v_{\max}^{<\stoptime}]\cdot\PP(v_{\stoptime}\geq v_{\max}^{<\stoptime})  
         = & \sum_{i=1}^n\sum_{T \subseteq W : |T| = i}t_{i-1} \cdot \PP(\stoptime = i | V_i = T,  v_i = t_i) \\
         & \qquad \qquad \qquad \qquad \cdot\PP(V_i = T| v_i = t_i)\cdot\PP(v_i = t_i) \\
          = & \sum_{i=1}^n \sum_{T \subseteq W : |T| = i}t_{i-1} \cdot \PP(\stoptime' = i | V_i = T,  v_i = t_i) \\
          & \qquad \qquad \qquad \qquad \cdot\PP(V_i = T| v_i = t_i)\cdot\PP(v_i = t_i) \\
           = & \E[v_{\max}^{<\stoptime'}|v_\stoptime'\geq v_{\max}^{<\stoptime}]\cdot\PP(v_\stoptime'\geq v_{\max}^{<\stoptime}),
        \end{align*}
where the last equality follows identically as for $\stoptime$.
\end{proof}

\lemalgtoLPfeas*
\begin{proof}
     First observe that $x_i = \PP(\stoptime = i) = \PP(\stoptime = i, v_i \geq v_{\max}^{<i})$ because we assume $\PP(\stoptime = i|v_i < \max\{v_1,\dots,v_{i-1}\}) = 0$. Then for each $i\in[n]$ we can break this value down as follows:
     \begin{align}
         \PP(\stoptime = i, v_i \geq v_{\max}^{<i}) = &\PP(\stoptime = i, v_i \geq v_{\max}^{<i} | \stoprule\text{ reaches i})\cdot \PP(\stoprule\text{ reaches i}) \nonumber \\
        \leq & \PP(\stoptime = i, v_i \geq v_{\max}^{<i}|\stoprule\text{ reaches i})\cdot (1-\sum_{j<i}x_j) \nonumber\\
        = & \PP(\stoptime = i|v_i \geq v_{\max}^{<i} ,\stoprule\text{ reaches i}) \cdot \PP(v_i \geq v_{\max}^{<i}|\stoprule \text{ reaches i}) \cdot (1-\sum_{j < i}x_j) \nonumber\\
        = & s_i\cdot \PP(v_i \geq v_{\max}^{<i}) \cdot (1-\sum_{j<i}x_j) \nonumber\\
        = & \frac{s_i}{i} (1-\sum_{j<i}x_j) \label{eq:nonneg} \\
        \leq &\frac{1}{i}(1-\sum_{j<i}x_j) \label{eq:constraint}.
     \end{align}
Observe that in the fourth line, the relative order of $v_i$ is independent of the ordering of $v_1,\dots,v_{i - 1}$. Clearly (\ref{eq:nonneg}) is nonnegative because events $\PP(\stoptime = i)$ for distinct $i$'s are disjoint, and (\ref{eq:constraint}) is precisely the first line of constraints.
\end{proof}

\lemtauequalsi*
\begin{proof}
    Observation : if some event $A'$ is independent of the relative order of $v_1,\dots,v_i$, then 
    \begin{equation} \label{eq:observation1}
        \PP(\stoptime = i|\stoptime \geq i, v_i \geq v_{\max}^{<i}, A') = s_i 
    \end{equation}
   because $\{\stoptime = i\}$ only depends on $v_1,\dots,v_i$, specifically only whether $v_i \geq v_{\max}^{<i}$ by definition of $s_i(v_1,\dots,v_i)$. Let us define $A$ to be the event that $\{\randorder(1) = i,\randorder(2) < i\}$. Observe that this is independent of the relative order of $v_1,\dots,v_j$ for any $j < i$. 
   
   We show by induction that $$\PP(\stoptime \geq j|v_i \geq v_{\max}^{<i},A) = \PP(\stoptime \geq j|v_i \geq v_{\max}^{<i})$$ for any $j \leq i$. The base case is  $\PP(\stoptime \geq 1|v_i \geq v_{\max}^{<i},A) = s_1 = \PP(\stoptime \geq 1|v_i \geq v_{\max}^{<i})$    because $s_1(v) = s_1$ for any $v$. Now assume for any $j\leq i$ that the statement holds for $j-1$. We first perform the following decomposition by applying the observation. 
    \begin{align*}
        \PP(\stoptime = j-1|\stoptime \geq j-1,v_i \geq v_{\max}^{<i}, A) = &\PP(\stoptime = j-1| \stoptime \geq j-1, v_{j-1} \geq v_{\max}^{<j-1},v_i \geq v_{\max}^{<i}, A) \\
        & \qquad \qquad \qquad \qquad \cdot \PP(v_{j-1} \geq v_{\max}^{<j-1}|\stoptime \geq j-1, v_i \geq v_{\max}^{<i}, A) \\
        = & s_{j-1}\PP(v_{j-1} \geq v_{\max}^{<j-1})  \\
        = &  \PP(\stoptime = j-1|\stoptime \geq j-1, v_{j-1} \geq v_{\max}^{<j-1}, v_i \geq v_{\max}^{<i}) \cdot \PP(v_{j-1} \geq v_{\max}^{<j-1})
    \end{align*}
where the first equality is since  $\PP(\stoptime = j-1 | v_j < v_{\max}^{<j-1}\}) = 0$. Then we can get the following.
    \begin{align*} 
    \PP(\stoptime \geq j|v_i \geq v_{\max}^{<i}, A)
    = & \PP(\stoptime \neq j-1|\stoptime \geq j-1, v_i \geq v_{\max}^{<i}, A)\cdot \PP(\stoptime \geq j-1|v_i \geq v_{\max}^{<i}, A)\\ 
    = & (1- \PP(\stoptime = j-1|\stoptime \geq j-1,v_i \geq v_{\max}^{<i}, A)) \cdot \PP(\stoptime \geq j-1|v_i \geq v_{\max}^{<i}, A) \\ 
    = & (1 - \PP(\stoptime = j-1| \stoptime \geq j-1, v_{j-1} \geq v_{\max}^{<j-1},v_i \geq v_{\max}^{<i}) \cdot \PP(v_{j-1} \geq v_{\max}^{<j-1})) \\
    & \qquad \qquad \qquad \qquad \qquad \cdot \PP(\stoptime \geq j-1|v_i \geq v_{\max}^{<i}, A) 
    \end{align*}
Now we can apply the induction hypothesis and see that this is equivalent to the following.
    \begin{align*}
    & (1 - \PP(\stoptime = j-1|\stoptime \geq j-1, v_{j-1} \geq v_{\max}^{<j-1}, v_i \geq v_{\max}^{<i}) \cdot \PP(v_{j-1} \geq v_{\max}^{<j-1})) \\ 
    & \qquad \qquad \qquad \qquad \qquad \cdot\PP(\stoptime \geq j-1|v_i \geq v_{\max}^{<i})
    \end{align*}
We can apply in reverse the same computation as for $\PP(\stoptime \geq j | v_i \geq v_{\max}^{<i}, A)$ to see that this equals $\PP(\stoptime \geq j|v_i \geq v_{\max}^{<i})$ as desired. Then by letting $j = i$ we have that
\begin{equation}
    \PP(\stoptime \geq i|v_i \geq v_{\max}^{<i}, A) = \PP(\stoptime \geq i|v_i \geq v_{\max}^{<i}).
\end{equation}
 Then we have that 
    \begin{align*}
        \PP(\stoptime = i|v_i \geq v_{\max}^{<i},\randorder(1) = i, \randorder(2) < i) 
         = & \PP(\stoptime = i|\stoptime\geq i,v_i \geq v_{\max}^{<i},\randorder(1) = i, \randorder(2) < i) \\
         & \qquad \qquad \qquad \cdot \PP(\stoptime \geq i|v_i \geq v_{\max}^{<i},\randorder(1) = i, \randorder(2) < i) \\
        = & s_i\cdot \PP(\stoptime \geq i|v_i \geq v_{\max}^{<i},\randorder(1) = i, \randorder(2) < i) \\
        = & s_i \cdot \PP(\stoptime \geq i|v_i \geq v_{\max}^{<i}) \\
        = & \PP(\stoptime = i|\stoptime \geq i, v_i \geq v_{\max}^{<i})\cdot\PP(\stoptime \geq i|v_i \geq v_{\max}^{<i})  \\
        = & \PP(\stoptime = i|v_i \geq v_{\max}^{<i}).\qedhere
    \end{align*}
\end{proof}
\lemLPbound*
\begin{proof}
    By weak duality, it is sufficient to find a feasible dual solution $y$ with a dual objective value at most $\frac{1}{4} + \frac{2}{n}$. Consider the following:
    \[y_i=\begin{cases}
        0 & i < \lceil \frac{n}{2}\rceil \\
        \frac{1}{n}(1-\frac{2(n-i)}{n-1}) & \lceil \frac{n}{2}\rceil\leq i\leq n
    \end{cases}.\]

    First we show this is feasible. Clearly $y$ is nonnegative, so let us focus on \[y_i \geq \frac{i-1}{(n-1)n}-\frac{1}{i}\sum_{j=i+1}^ny_j.\]

   Fix $i \geq \lceil \frac{n}{2} \rceil$, then
    \begin{align*}
        \sum_{j=i+1}^ny_j = & \sum_{j=i+1}^n\frac{1}{n}\left(1-\frac{2(n-j)}{n-1}\right) \\
        = & \frac{(n-i)(n-1) - 2\sum_{j=i+1}^n(n-j)}{n(n-1)} \\
        = & \frac{(n-i)(n-1)-(n-i)(n-i-1)}{n(n-1)} = \frac{i(n-i)}{n(n-1)}.
    \end{align*}

    In the constraint this gives us 
    \[\frac{i-1}{(n-1)n}-\frac{1}{i}\sum_{j=i+1}^ny_j = \frac{i-1}{(n-1)n}-\frac{i(n-i)}{in(n-1)} =\frac{2i-n-1}{n(n-1)} = \frac{1}{n}(1-\frac{2(n-i)}{n-1})=y_i. \]

    Now for $i < \lceil\frac{n}{2}\rceil$, observe that we want to show $0\geq\frac{i-1}{n(n-1)}-\frac{1}{i}\sum_{j=\lceil\frac{n}{2}\rceil}^ny_j$. The following is sufficient.
    \[\sum_{j=\lceil{\frac{n}{2}\rceil}}^ny_j = \frac{(\lceil\frac{n}{2}\rceil - 1)(n-\lceil\frac{n}{2}\rceil+1)}{n(n-1)}= \frac{(\lceil\frac{n}{2}\rceil - 1)\lfloor\frac{n}{2}\rfloor + 1)}{n(n-1)}\geq \frac{i(i-1)}{n(n-1)}\]
    The inequality is because $i <\lceil\frac{n}{2}\rceil \leq \lfloor \frac{n}{2}\rfloor + 1$. Thus $q$ is a feasible solution to the dual problem.

    The objective value is as follows:
    \begin{align*}
        \sum_{i=1}^ny_i = & \sum_{i=\lceil\frac{n}{2}\rceil}^n\frac{1}{n}\left(1-\frac{2(n-i)}{n-1}\right) \\
        = & \frac{n-\lceil\frac{n}{2}\rceil + 1}{n}-\frac{2}{n(n-1)}\sum_{i=1}^{n-\lceil\frac{n}{2}\rceil}i\\
        = & 1 - \frac{\lceil\frac{n}{2}\rceil - 1}{n}-\frac{(n-\lceil\frac{n}{2}\rceil)(n-\lceil\frac{n}{2}\rceil + 1)}{n(n-1)}\\
        \leq & \frac{1}{2} +\frac{1}{n} - \frac{(\frac{n}{2}-1)\frac{n}{2}}{n(n-1)} = \frac{1}{4}+ \frac{1}{n} + \frac{1}{4n-4} \leq \frac{1}{4} + \frac{2}{n}.\qedhere
    \end{align*}
\end{proof}
\lemInV*
\begin{proof}
We know that $\stoprule'$ and the corresponding infinite set $Y$ exist by Lemma~\ref{lem:valoblivfromorder}.
\begin{align*}
&\E[v_{\max}^{<\stoptime}|v_{\stoptime}\geq v_{\max}^{<\stoptime}]\cdot\PP(v_{\stoptime}\geq v_{\max}^{<\stoptime})  = \E[v_{\max}^{<\stoptime'}|v_{\stoptime'}\geq v_{\max}^{<\stoptime'}]\cdot\PP(v_{\stoptime'}\geq v_{\max}^{<\stoptime'}) \tag{by Lemma~\ref{lem:orderoblivious}}\\
    = & v_{(2)} \cdot \PP(\randorder(1) = \stoptime',\randorder(2) < \stoptime') + \E[v_{\max}^{<\stoptime'}|v_{\stoptime'}\geq v_{\max}^{<\stoptime'}, \randorder(1) \neq \stoptime' \cup \randorder(2) \geq \stoptime']\\
    & \qquad \qquad \qquad \qquad \qquad \cdot \PP(v_{\stoptime'}\geq v_{\max}^{<\stoptime'}, \randorder(1) \neq \stoptime' \cup \randorder(2) \geq \stoptime')\\
    \leq & v_{(2)} \cdot \PP(\randorder(1) = \stoptime', \randorder(2)<\stoptime') + v_{(3)} \\
    \leq & v_{(2)}\cdot \left(\PP(\randorder(1) = \stoptime', \randorder(2) < \stoptime') + \frac{1}{n^2}\right) 
\end{align*}
The first inequality is because $\randorder(2) \geq \stoptime'$ implies $v_{\max}^{<\stoptime'} \leq v_{(3)}$, and $\randorder(2) < \stoptime'\cap\randorder(1) \neq \stoptime'$ implies that $v_{\stoptime'} \leq v_{(3)}$. The last line is because $v_{(2)} \geq n^2v_{(3)}$ by definition of $\infset$.  Since $\stoprule'$ is $\varepsilon$-value-oblivious for $\varepsilon = \frac{1}{n^2}$, we know that $s_i'(v_1,\dots,v_i) \in[y_i-\varepsilon,y_i+\varepsilon)$ for all $i\in[n]$ for some $y_i\in[0,1]$. We define a $0$-value-oblivious stopping rule $\hat{\stoprule}$ with stopping rule $\hat{\stoptime}$ as follows: $\hat{s}_i(v_1,\dots,v_i) = s_i$ if $v_i > \max\{v_1,\dots,v_{i - 1}\}$ and 0 otherwise (because stopping on $v_i$ if it is not the highest so far does not increase the probability of stopping on $v_{(1)}$). The union bound argument in \cite{correa} gives us
\begin{equation*} \PP(\randorder(1) = \stoptime',\randorder(2) < \stoptime') \leq \PP(\randorder(1) = \hat{\stoptime}, \randorder(2) < \hat{\stoptime}) + n\varepsilon = \PP(\randorder(1) = \hat{\stoptime}, \randorder(2) < \hat{\stoptime}) + \frac{1}{n}.\end{equation*}
We conclude that
\begin{align*}
\E[v_{\max}^{<\stoptime} |v_\stoptime \geq v_{\max}^{<\stoptime}]\cdot\PP(v_\stoptime \geq v_{\max}^{<\stoptime}) 
& \leq v_{(2)}\left(\PP(\randorder(1) = \stoptime', \randorder(2) < \stoptime') + \frac{1}{n^2}\right)  \\
& \leq v_{(2)}\left(\PP(\randorder(1) = \hat{\stoptime}, \randorder(2) < \hat{\stoptime}) + \frac{1}{n} + \frac{1}{n^2}\right) \\
& < v_{(2)}\left(\frac{1}{4} + \frac{2}{n} + \frac{1}{n} + \frac{1}{n^2}\right) \\
& \leq v_{(2)}\left(\frac{1}{4} + \frac{4}{n}\right)
\end{align*}
where the first two inequalities are by the above two series of inequalities and the third inequality is by Lemma~\ref{thm:LPbound}.
\end{proof}

\lempostbelowhighest*
We will first need a helper lemma for the specific case of posting to the second highest bidder having seen the highest already.

\begin{lemma}\label{lem:seenhighest}
    Given auction $M \in \mathcal{M}_u$ over $n$ agents with stopping rule $\stoprule$ and stopping time $\stoptime$, if the corresponding stopping rule $\stoprule'$ is $\frac{1}{n^2}$-value-oblivious on infinite set $Y \subseteq \mathbb{N}$, then there exists an instance $V \in \infset$ such that 
    \[\E[p_\stoptime(\textbf{v}_{:\stoptime-1})|v_{(2)} \geq p_\stoptime(\textbf{v}_{:\stoptime - 1}), v_\stoptime = v_{(2)}, v_{\max}^{<\stoptime} = v_{(1)}]\cdot\PP(v_{(2)} \geq p_\stoptime(\textbf{v}_{:\stoptime - 1})|v_\stoptime = v_{(2)}, v_{\max}^{<\stoptime} = v_{(1)}) \leq \frac{2}{n^2}v_{(2)}.\]
\end{lemma}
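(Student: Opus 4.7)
The plan begins by simplifying the target expression. On the conditioning event $E := \{v_\stoptime = v_{(2)},\, \vmaxless{\stoptime} = v_{(1)}\}$, the auction has just allocated the item to a bidder of value $v_{(2)}$, so the $\mathcal{M}_u$ stopping rule $v_\stoptime \ge p_\stoptime(\mathbf{v}_{:\stoptime-1})$ deterministically forces $p_\stoptime(\mathbf{v}_{:\stoptime-1}) \le v_{(2)}$. Hence $\mathbbm{1}[v_{(2)} \ge p_\stoptime(\mathbf{v}_{:\stoptime-1})]$ is identically $1$ on $E$, the conditional probability in the statement equals $1$, and the product collapses to $\E[p_\stoptime \mid E]\cdot \PP(E) \le v_{(2)}\cdot \PP(E)$. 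It therefore suffices to exhibit an instance $V \in \infset$ for which $\PP(E) \le 2/n^2$.

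I decompose $\PP(E)$ by stopping position: $\PP(E) = \sum_{i=2}^n \frac{i-1}{n(n-1)} \cdot \PP(\stoptime = i \mid v_i = v_{(2)},\, v_{(1)} \in V_{i-1})$, where $\frac{i-1}{n(n-1)}$ is the exact combinatorial factor under the uniform random matching. I choose $V \subseteq Y$ with $Y$ the infinite set on which the order-oblivious surrogate $\stoprule'$ is $\frac{1}{n^2}$-value-oblivious (Lemma~\ref{lem:valoblivfromorder}), and use Lemma~\ref{lem:correa}'s survival-probability matching to carry out the analysis on $\stoprule'$. The task reduces to bounding the conditional stopping probability above by $O(1/n^2)$ uniformly in $i$. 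Since the $\mathcal{M}_u$ constraint gives $p_i \in [0, v_{(1)}]\cup\{\infty\}$ on the conditioning event, ``$p_i \le v_{(2)}$'' is strictly stronger than ``$p_i < \infty$''; value-obliviousness controls the latter event up to $1/n^2$ error, and the plan is to use a swap-style argument to upgrade this to a bound on the stronger event at the $O(1/n^2)$ scale.

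The main obstacle is precisely this upgrade. Value-obliviousness is stated only for new-max inputs in $Y$, whereas in our event $v_i = v_{(2)}$ lies strictly below the running maximum $v_{(1)}$, so the property does not apply directly. The plan to close this gap combines two ingredients: (i) a swap comparing the random matching in which $v_{(1)}$ is at some $k < i$ and $v_{(2)}$ at $i$ with the matching where these two positions are exchanged, so that $v_{(2)}$ becomes a new max at $k$ and $v_{(1)}$ a new max at $i$, both now directly bounded by value-obliviousness; and (ii) the freedom to pick $v_{(2)}$ adversarially from the infinite $Y$ subject to $v_{(2)} \ge n^2 v_{(3)}$, so as to compare against histories of $\stoprule'$ without $v_{(1)}$, where $\mathcal{M}_u$ forces $p_i \le v_{(3)}$. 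Combining (i), (ii), and the $1/n^2$ value-oblivious slack, then summing over $i$ using $\sum_{i=2}^n \frac{i-1}{n(n-1)} = 1/2$, yields the desired $2/n^2$ bound. The technical heart is this swap-and-compare step.
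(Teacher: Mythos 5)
Your opening reduction is where the argument breaks. You correctly observe that, since the stopping time of the rule derived from an $\mathcal{M}_u$ auction is the time of acceptance, on the event $E=\{v_\stoptime=v_{(2)},\,v_{\max}^{<\stoptime}=v_{(1)}\}$ we automatically have $p_\stoptime\le v_{(2)}$, so the probability factor in the statement, $\PP(v_{(2)}\ge p_\stoptime\mid E)$, equals $1$. But then the left-hand side equals $\E[p_\stoptime\mid E]$, \emph{not} $\E[p_\stoptime\mid E]\cdot\PP(E)$: you have silently replaced a probability conditioned on $E$ by the probability of $E$ itself, which turns the statement into a much weaker one. Worse, even that weaker target cannot be reached the way you propose, because ``$\PP(E)\le 2/n^2$ for some instance'' is simply false for natural members of $\mathcal{M}_u$. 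Take the auction that posts $\infty$ to the first half of the bidders and afterwards posts the second-highest value seen so far (essentially the $1/4$-competitive auction of Hajiaghayi et al.). For \emph{every} instance, the event of selling to the $v_{(2)}$-bidder having already seen $v_{(1)}$ has constant probability (of order $1/4$), not $O(1/n^2)$; the lemma holds for this auction only because on $E$ the posted price is at most $v_{(3)}\le v_{(2)}/n^2$. In other words, the bound must come from the \emph{magnitude} of the price conditioned on $E$, which your step ``$\le v_{(2)}\cdot\PP(E)$'' discards at the outset. The swap/value-obliviousness ``upgrade'' does not repair this: value-obliviousness constrains stopping probabilities at new-maximum arrivals, not prices, and on $E$ the history contains $v_{(1)}$, so the $\mathcal{M}_u$ constraint only forces $p_\stoptime\le v_{(1)}$, not $p_\stoptime\le v_{(3)}$ as you assert in ingredient (ii).

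The paper's proof is structured around exactly the leverage you drop. It fixes $v_{(1)},v_{(3)},\dots,v_{(n)}$ and a grid of $l\ge n^2$ candidate second-highest values from $Y$ separated by factors of at least $n^2$, and buckets the possible value of $p_\stoptime$ by this grid. Under the instance whose second value is a given grid point, prices above that point are ruled out by the conditioning, prices lying more than one bucket below contribute at most $v_{(2)}/n^2$ each, and an averaging (pigeonhole) argument over which grid point to adopt as $v_{(2)}$ ensures the single dangerous band just below $v_{(2)}$ carries conditional mass at most $1/l\le 1/n^2$; summing gives $\frac{2}{n^2}v_{(2)}$. Note that value-obliviousness and order-obliviousness play no role inside this particular lemma (they matter for the companion Lemma~\ref{lem:InV}); the essential ideas are the geometric grid of candidate $v_{(2)}$'s drawn from the infinite set $Y$ and the bucket-plus-averaging bound on the conditional price, neither of which appears in your plan. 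As written, your proposal neither proves the inequality as stated nor provides a viable route to the quantity it substitutes for it.
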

\begin{proof}
    Let $l \in \mathbb{N}$ such that $l \geq n^2$. Since $Y$ is infinitely large and over $\mathbb{N}$, there exist  values 
\[\{v_{(1)}, v_{(2)}^1, \dots, v_{(2)}^l, v_{(3)}, \dots,v_{(n)}\} \in Y\] such that $v_{(1)} > v_{(2)}^1$, $ v_{(2)}^{j} \geq n^2 \cdot v_{(2)}^{j+1}$ for  $j\in[l-1]$, $v_{(2)}^l \geq n^2 \cdot v_{(3)} $, and $v_{(j)} \geq v_{(j+1)}$ for $j\in \{3, \ldots, n-1\}$. We define instances $V^j = \{v_{(1)}, v_{(2)}^j, v_{(3)}, \dots, v_{(n)}\}$ for every $j\in[l]$, and note that $V^j \in \infset$ by construction. Next, we define  intervals $\priceinterval_j = (v_{(2)}^{j+1}, v_{(2)}^j]$ for $j\in[l-1]$, $\priceinterval_l = (v_{(3)}, v_{(2)}^l]$, and $\priceinterval_{l+1} = [0, v_{(3)}]$. Fix an instance $j\in [l]$, and if we let $A_\stoptime$ be the event that $v_\stoptime = v_{(2)}$ and $v_{\max}^{<\stoptime} = v_{(1)}$, we get
\begin{align*}
    & \E[p_\stoptime(\textbf{v}_{:\stoptime - 1})|v_{(2)}^j \geq p_\stoptime(\textbf{v}_{:\stoptime - 1}), A_\stoptime] \cdot \PP(v_{(2)}^j \geq p_\stoptime(\textbf{v}_{:\stoptime - 1})|A_\stoptime) \\
    = & \sum_{k=1}^{l+1} \E[p_\stoptime(\textbf{v}_{:\stoptime-1})|v_{(2)}^j\geq p_\stoptime(\textbf{v}_{:\stoptime - 1}), p_\stoptime(\textbf{v}_{:\stoptime - 1}) \in \priceinterval_k, A_\stoptime] \\
    & \qquad \qquad \qquad \qquad\cdot \PP(v_{(2)}^j\geq p_\stoptime(\textbf{v}_{:\stoptime - 1})| p_\stoptime(\textbf{v}_{:\stoptime - 1}) \in \priceinterval_k, A_\stoptime) \cdot  \PP(p_\stoptime(\textbf{v}_{:\stoptime - 1}) \in \priceinterval_k|A_\stoptime)
\end{align*}
Note that $\PP(v_{(2)}^j\geq p_\stoptime(\textbf{v}_{:\stoptime - 1})|p_\stoptime(\textbf{v}_{:\stoptime - 1}) > v_{(2)}^1)=0$ for all $j\in[l]$. Now if $p_\stoptime(\textbf{v}_{:\stoptime - 1}) \in \priceinterval_k$ for $k\neq j$, we consider two cases. If $k > j$, then $p_\stoptime(\textbf{v}_{:\stoptime - 1}) \leq v_{(2)}^k \leq \frac{1}{n^2} v_{(2)}^j$. If $k < j$, then $p_\stoptime(\textbf{v}_{:\stoptime - 1}) > v_{(2)}^{k+1} \geq v_{(2)}^j$. Thus $$\E[p_\stoptime(\textbf{v}_{:\stoptime - 1})|v^j_{(2)} \geq p_\stoptime(\textbf{v}_{:\stoptime - 1}), p_\stoptime(\textbf{v}_{:\stoptime - 1})\in \priceinterval_k, A_\stoptime]\cdot\PP(v^j_{(2)} \geq p_\stoptime(\textbf{v}_{:\stoptime - 1})|p_\stoptime(\textbf{v}_{:\stoptime - 1})\in \priceinterval_k, A_\stoptime) \leq \frac{1}{n^2}v_{(2)}^j,$$ for all $k\neq j$. Now, since $\sum_{k=1}^{l}\PP(p_\stoptime(\textbf{v}_{:\stoptime - 1}) \in \priceinterval_k|A_\stoptime) \leq 1$,
 there must exist some $j'\in[l]$ such that $\PP(p_\stoptime(\textbf{v}_{:\stoptime-1})\in \priceinterval_{j'}|A_\stoptime) \leq \frac{1}{l}$. Then if we fix instance $V^{j'}$ we get
\begin{align*}
    & \sum_{k=1}^{l+1} \E[p_\stoptime(\textbf{v}_{:\stoptime-1})|v_{(2)}^{j'}\geq p_\stoptime(\textbf{v}_{:\stoptime - 1}), p_\stoptime(\textbf{v}_{:\stoptime - 1}) \in \priceinterval_k, A_\stoptime]\\
    & \qquad \qquad \qquad \qquad \cdot\PP(v_{(2)}^{j'}\geq p_\stoptime(\textbf{v}_{:\stoptime - 1})| p_\stoptime(\textbf{v}_{:\stoptime - 1}) \in \priceinterval_k,A_\stoptime) \cdot \PP(p_\stoptime(\textbf{v}_{:\stoptime - 1}) \in \priceinterval_k|A_\stoptime)\\
    = & \sum_{k\in[l+1]: k\neq j'}\E[p_\stoptime(\textbf{v}_{:\stoptime-1})|v_{(2)}^{j'}\geq p_\stoptime(\textbf{v}_{:\stoptime - 1}), p_\stoptime(\textbf{v}_{:\stoptime - 1}) \in \priceinterval_k, A_\stoptime]\\
    &\qquad \qquad \qquad \qquad \cdot\PP(v_{(2)}^{j'}\geq p_\stoptime(\textbf{v}_{:\stoptime - 1}), p_\stoptime(\textbf{v}_{:\stoptime - 1}) \in \priceinterval_k,A_\stoptime)  \cdot \PP(p_\stoptime(\textbf{v}_{:\stoptime - 1}) \in \priceinterval_k|A_\stoptime) \\
    & \qquad \qquad + \E[p_\stoptime(\textbf{v}_{:\stoptime -1})|v_{(2)}^{j'} \geq p_\stoptime(\textbf{v}_{:\stoptime - 1}), p_\stoptime(\textbf{v}_{:\stoptime - 1}) \in \priceinterval_{j'}, A_\stoptime]\\
    & \qquad \qquad \qquad \qquad \cdot\PP(v_{(2)}^{j'} \geq p_\stoptime(\textbf{v}_{:\stoptime - 1})), p_\stoptime(\textbf{v}_{:\stoptime - 1})\in \priceinterval_{j'}, A_\stoptime) \cdot \PP(p_\stoptime(\textbf{v}_{:\stoptime - 1})\in \priceinterval_{j'}| A_\stoptime) \\
    \leq & \frac{1}{n^2}v_{(2)}^{j'} + \frac{1}{l}v_{(2)}^{j'} \leq \frac{2}{n^2} v_{(2)}^{j'}. \qedhere
\end{align*}
\end{proof}
We can now finish the proof of Lemma~\ref{lem:postbelowhighest}
\begin{proof}[Proof of Lemma~\ref{lem:postbelowhighest}]
We first break down the value
\[\E[p_\stoptime(\mathbf{v}_{:\stoptime - 1})|v_\stoptime \in [p_\stoptime(\mathbf{v}_{:\stoptime - 1}),v_{\max}^{<\stoptime})]\cdot\PP(v_\stoptime \in [p_\stoptime(\mathbf{v}_{:\stoptime - 1}),v_{\max}^{<\stoptime}))\]
into two cases.

Case 1: $v_\stoptime \leq v_{(3)}$ or $v_{\max}^{<\stoptime} \leq v_{(3)}$. Then $p_\stoptime(\mathbf{v}_{:\stoptime - 1})$ is bounded above by $v_{(3)}$.

Case 2: $v_\stoptime, v_{\max}^{<\stoptime} > v_{(3)}$. Observe that this is only possible if $v_\stoptime = v_{(2)}$ and $v_{\max}^{<\stoptime} = v_{(1)}$. Then conditioning on this, we get 
\begin{align*}
    &\E[p_\stoptime(\mathbf{v}_{:\stoptime - 1}) | v_{\stoptime} \in [p_\stoptime(\mathbf{v}_{:\stoptime - 1}), v_{\max}^{<\stoptime}), v_{\stoptime} = v_{(2)}, v_{\max}^{<\stoptime} = v_{(1)}]\cdot\PP(v_{\stoptime} \in [p_{\stoptime}(\mathbf{v}_{:\stoptime - 1}), v_{\max}^{<\stoptime})|v_\stoptime = v_{(2)}, v_{\max}^{<\stoptime} = v_{(1)}) \\
    = & \E[p_\stoptime(\mathbf{v}_{:\stoptime - 1})|v_{(2)} \geq p_\stoptime(\mathbf{v}_{:\stoptime - 1}), v_\stoptime = v_{(2)}, v_{\max}^{<\stoptime} = v_{(1)}] \cdot \PP(v_{(2)} \geq p_\stoptime(\mathbf{v}_{:\stoptime - 1})|v_\stoptime = v_{(2)}, v_{\max}^{<\stoptime} = v_{(1)}).
\end{align*}
If we consider instance $V$ from Lemma~\ref{lem:seenhighest}, this value is at most $\frac{2}{n^2}v_{(2)}$. Also note that $V \in \infset$, so $v_{(3)} \leq \frac{1}{n^2}v_{(2)}$.

Thus we get 
\begin{align*}
    & \E[p_\stoptime(\mathbf{v}_{:\stoptime - 1})|v_\stoptime \in [p_\stoptime(\mathbf{v}_{:\stoptime - 1}),v_{\max}^{<\stoptime})]\cdot\PP(v_\stoptime \in [p_\stoptime(\mathbf{v}_{:\stoptime - 1}),v_{\max}^{<\stoptime})) \\
    \leq & v_{(3)} + \E[p_\stoptime(\textbf{v}_{:\stoptime-1})|v_{(2)} \geq p_\stoptime(\textbf{v}_{:\stoptime - 1}), v_\stoptime = v_{(2)}, v_{\max}^{<\stoptime} = v_{(1)}] \\
    & \qquad \qquad \qquad \qquad \qquad \cdot\PP(v_{(2)} \geq p_\stoptime(\textbf{v}_{:\stoptime - 1})|v_\stoptime = v_{(2)}, v_{\max}^{<\stoptime} = v_{(1)}) \\
    \leq & \frac{3}{n^2}v_{(2)}. \qedhere
\end{align*}
\end{proof}

\section{Full proof of Theorem~\ref{thm:hardness}}\label{app:lbtrade-off}
\thmHardness*

\subsection{The Reduction from PA to PM Auctions}
\label{sec:reduction}

We start by giving a simple formula for the consistency and robustness of auctions in $\mathcal{M}_{a}$. For $M \in \mathcal{M}_a$, we let $C^M :=\{ \randorder: \text{$M$ posts price $\prediction$ to $\randorder(1)$ under order $\randorder$}\}$ and $R^M := \{ \randorder : \text{$M$ posts price $v_{(2)}$ to $\randorder(1)$ under order $\randorder$} \}$, where $\mu(V)$ represents a random matching of the values to indices in $[n]$.

\begin{lemma}
\label{lem:CR}
An  auction $M \in \mathcal{M}_{a}$ achieves consistency $|C^M| / (n!)$ and robustness $|R^M| / (n!)$.
\end{lemma}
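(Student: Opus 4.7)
The approach is to evaluate $\E_{\randorder}[\rev(M(\mathbf{v}, \prediction))]$ by separating the permutations $\randorder$ on which $M$'s behavior toward the top bidder $\randorder(1)$ matches the relevant benchmark (either $\prediction = v_{(1)}$ for consistency, or $v_{(2)}$ under an adversarial prediction for robustness) from the remaining permutations, and then to use extremal instances to make the residual contribution vanish.

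For consistency, I would fix the correct prediction $\prediction = v_{(1)}$ and consider instances where $v_{(1)}/v_{(2)} \to \infty$. For any $\randorder \in C^M$, by definition the item is not allocated before time $\randorder(1)$ and the rule at that time is $a^{\randorder(1)}_{2,j}$, so $M$ posts price $\max\{\prediction, v^{<\randorder(1)}_{(j)}\} = v_{(1)}$ (using $v^{<\randorder(1)}_{(j)} \leq v_{(1)}$). Bidder $\randorder(1)$ accepts, yielding revenue $v_{(1)}$. For $\randorder \notin C^M$, any allocation must be via an $a_{3,j'}$ rule, because any $a_{2,j'}$ rule fired at some earlier $i' \neq \randorder(1)$ would post a price $\geq \prediction = v_{(1)}$ that no non-top bidder can meet. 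Hence the extracted price is bounded by the maximum among values other than $v_{(1)}$, i.e., by $v_{(2)}$. This gives
\[
\tfrac{|C^M|}{n!} \cdot v_{(1)} \;\leq\; \E[\rev] \;\leq\; \tfrac{|C^M|}{n!} \cdot v_{(1)} + \left(1 - \tfrac{|C^M|}{n!}\right) \cdot v_{(2)},
\]
and dividing by $v_{(1)}$ and letting $v_{(2)}/v_{(1)} \to 0$ pins down consistency as $|C^M|/n!$.

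For robustness, I would argue that the adversary's best move is to send $\prediction \to \infty$, which disables every $a_{2,j}$ rule (since the posted price exceeds all values), so the auction effectively runs on its $a_{3,j}$ and $a_1$ rules only. An argument parallel to the consistency case then applies: for $\randorder \in R^M$ the mechanism posts $v_{(2)}$ to $\randorder(1)$ via an $a_{3,j}$ rule (the relevant $v^{<\randorder(1)}_{(j)}$ equals $v_{(2)}$), and $\randorder(1)$ accepts, yielding revenue $v_{(2)}$. For $\randorder \notin R^M$, a case analysis on who receives the item (an earlier bidder with value in $\{v_{(2)}, \ldots, v_{(n)}\}$, bidder $\randorder(1)$ at a price $\neq v_{(2)}$, or no one) shows that in instances with $v_{(3)}/v_{(2)} \to 0$ the revenue is bounded by $v_{(3)}$: the accepted price is always some $v^{<i}_{(j')}$ that is at most $v_{(3)}$, because if the accepting bidder has value $v_i = v_{(2)}$ then all earlier values lie in $\{v_{(3)}, \ldots\}$, and if $v_i < v_{(2)}$ then $v_i \leq v_{(3)}$. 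Taking $v_{(3)}/v_{(2)} \to 0$ then makes this contribution negligible and yields robustness $|R^M|/n!$.

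The most delicate point I anticipate is justifying that sending $\prediction$ to infinity is truly the adversary's worst prediction: one has to check that any finite $\prediction$ only \emph{adds} permutations on which some $a_{2,j}$ rule successfully fires (each with nonnegative revenue contribution), and therefore weakly increases $\E[\rev]$ compared to the $\prediction \to \infty$ limit. Beyond this, the case analysis of non-$R^M$ permutations—tracking which rule fires first, to whom, and at what price—is the one combinatorial obstacle, but the ordering structure of the values among the early bidders makes each subcase resolve to the bound $v_{(3)}$ cleanly.
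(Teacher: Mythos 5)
Your consistency argument and your robustness \emph{upper} bound are essentially the paper's proof: the paper also gets the lower bounds by noting that every order in $C^M$ (resp.\ $R^M$) contributes revenue $v_{(1)}$ (resp.\ $v_{(2)}$), and for the reverse direction it evaluates the auction on explicit extremal instances ($v_{(1)}=1$ and all other values $0$ with $\prediction=1$; and $v_{(1)}=1$, $v_{(2)}=\varepsilon$, rest $0$, with $\prediction=v_{(1)}+1$), on which the structure of PA prices forces all revenue to come from posting $\prediction$ (resp.\ $v_{(2)}$) to the bidder at step $\randorder(1)$; your limiting instances with $v_{(2)}/v_{(1)}\to 0$ and $v_{(3)}/v_{(2)}\to 0$ play exactly the same role.

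The genuine gap is in your robustness \emph{lower} bound, namely the claim that the adversary's worst prediction is $\prediction\to\infty$ because a finite $\prediction$ ``only adds permutations on which some $a_{2,j}$ rule successfully fires, each with nonnegative revenue contribution.'' A successful $a_{2,j}$ firing is not a pure gain: it consumes the item and can preempt a later $a_{3,1}$ sale at price $v_{(2)}$ to the top bidder, and when $j>1$ its price $\max\{\prediction,v^{<i}_{(j)}\}$ can be far below $v_{(2)}$. Concretely, take $n$ large, $k=n/2$, and the PA auction with $a_1^i$ for $i<k$, rule $a_{2,k-1}^k$ at step $k$ (price $\max\{\prediction,\min_{j<k}v_j\}$), and $a_{3,1}^i$ for $i>k$. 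Under any $\prediction>v_{(1)}$ this auction posts $v_{(2)}$ to $\randorder(1)$ exactly when $\randorder(1)>k$ and $\randorder(2)\le k$, so $|R^M|/n!\approx 1/4$. But on the instance $v_{(1)}=2$, $v_{(2)}=1$, $v_{(3)},\dots,v_{(n)}\le\eta$, with a prediction $\prediction\le\eta$ below all values, the item is sold at step $k$ for at most $\eta$ with probability $1-1/k$, so the expected revenue is at most $\eta+\tfrac{2}{n}v_{(2)}$, far below $\tfrac14 v_{(2)}$. Hence expected revenue under a finite prediction is \emph{not} weakly larger than under $\prediction=\infty$, and your route to ``robustness $\ge |R^M|/n!$'' fails (indeed, with $R^M$ frozen as the set computed under a huge prediction, that inequality is simply false for this auction). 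The paper never invokes such monotonicity over predictions: its lower-bound remark reads $R^M$ on the same instance and prediction being evaluated, and the direction actually needed downstream for the impossibility result, robustness at most $|R^M|/n!$, comes purely from the explicit hard instance, which your argument does deliver.
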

\begin{proof}

Observe that for each $\randorder \in C^M$, the auction achieves revenue $v_{(1)}$ when the prediction is correct. In addition, for each $\randorder \in R^M$, the auction achieves revenue $v_{(2)}$, even when the prediction is incorrect. Then consistency and robustness are lower bounded by the probability of drawing $\randorder \in C^M$ and $\randorder \in R^M$ respectively, which are precisely $|C^M| / (n!)$ and $|R^M| / (n!)$.

To show that consistency is at most $|C^M| / (n!)$, it suffices to find a single instance where equality holds. Consider the instance where the values are $v_{(1)} = 1, v_{(2)} = \cdots = v_{(n)} = 0$ and prediction $\prediction = 1$ (we will denote this $I_1$). Observe from our definition of PA auctions, the only (noninfinite) prices that can be posted to bidder $i$ are in the set $\{v_1, v_2, \ldots,  v_{i - 1}, \prediction\}$. For this instance, only two of these may be nonzero, $\prediction$ or $v_j = 1$ if the highest bidder arrives at step $j < i$ bidder. In the first case, the only bidder who can accept this price is the highest bidder, and revenue of $v_{(1)}$ is extracted. In the second case, we know the highest bidder has already departed, so bidder $i$ must have value $v_i = 0$ and no revenue can be gained. Thus the only way revenue is gained in this instance is by posting $\prediction$ to the highest bidder at step $\randorder(1)$, and the revenue is precisely $v_{(1)}$, so consistency is exactly $|C^M| / (n!)$.

Similarly, for robustness consider the instance where the values are $v_{(1)} = 1, v_{(2)} = \varepsilon, v_{(3)} = \cdots = v_{(n)} = 0$ for some $\varepsilon < 1$ and the prediction is $\prediction = v_{(1)} + 1$ (we will denote this $I_2$). No revenue is gained by posting $\prediction$ since no bidder would accept that price. The only other positive prices that can be posted to bidder $i$ are $v_j = 1$ if the highest bidder arrives at step $j < i$ or $v_j = \varepsilon$ if the second highest bidder arrives at step $j < i$.  The first case is the same as above. As for $v_j = \varepsilon$, the only bidder who can accept this price is the highest bidder, and revenue of $v_{(2)}$ is gained. Thus the only way revenue is gained in this instance is by posting price $v_{(2)}$ to the highest bidder at step $\randorder(1)$, and the revenue is precisely $v_{(2)}$, so robustness is $|R^M| / (n!)$.
\end{proof}

We now reduce to the following simpler family of mechanisms.
\begin{definition}
Consider the following three allocation rules: 
 $a_1^i$  never allocates the item to bidder $i$, $a_2^i$ allocates to $i$ if $v_i \geq \max\{\prediction, \vmaxless{i}\}$, and $a_3^i$ allocates to $i$ if $v_i \geq \vmaxless{i}$.  An auction $M$ is in the family of  Prediction or Max-Previously-Seen (PM) auctions $\mathcal{M}_{m}$ if, for every bidder $i \in [n]$, there is an allocation rule $a^i \in \{a_1^i, a_2^i, a_3^i\}$ such that if the item is not allocated to a bidder $j < i$ then $M$ allocates to $i$ according to $a^i$.
\end{definition}

Observe that $\mathcal{M}_{m} \subset \mathcal{M}_{a} \subset \mathcal{M}$.

\begin{lemma}
\label{lem:PAtoPM}
For every $M \in \mathcal{M}_{a}$, there exists some $M' \in \mathcal{M}_{m}$ such that $\text{consistency}(M') \geq \text{consistency}(M) \text{ and robustness}(M') \geq \text{robustness}(M).$
\end{lemma}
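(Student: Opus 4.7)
The plan is to construct $M'$ from $M$ by a simple rule substitution: at each step $i$, replace every choice of $a_{2,j}^i$ (for any $j \in [i-1]$) by $a_2^i$, replace every $a_{3,j}^i$ by $a_3^i$, and leave $a_1^i$ unchanged. Since each resulting allocation rule lies in $\{a_1^i, a_2^i, a_3^i\}$, we have $M' \in \mathcal{M}_m$. By Lemma~\ref{lem:CR} it suffices to prove $|C^{M'}| \geq |C^M|$ and $|R^{M'}| \geq |R^M|$, since consistency and robustness are $|\cdot|/n!$ for both mechanisms.

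For consistency, consider the instance $I_1$ from the proof of Lemma~\ref{lem:CR}, where $v_{(1)}=1$, all other values are $0$, and $\prediction=1$. For every $i < \randorder(1)$, the maximum previously seen value is $0$, so $v_{(j)}^{<i}=0$ for every $j$. Consequently every $a_{2,j}^i$ and every $a_2^i$ posts the same price $\prediction=1$ (rejected by a zero-value bidder), and every $a_{3,j}^i$ and every $a_3^i$ posts the same price $0$ (accepted at zero revenue). Hence the characterization of $\randorder \in C^M$, namely that $a^i$ is not of the form $a_{3,j}^i$ for any $i<\randorder(1)$ and that $a^{\randorder(1)}$ is of the form $a_{2,j}^{\randorder(1)}$, maps exactly under the substitution to the characterization of $\randorder \in C^{M'}$. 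Thus $C^M \subseteq C^{M'}$ as sets of orderings.

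For robustness I use the instance $I_2$ ($v_{(1)}=1,\, v_{(2)}=\varepsilon,\, v_{(3)}=\cdots=0,\, \prediction>1$). A short case analysis shows that $\randorder \in R^M$ forces $\randorder(2) < \randorder(1)$, $a^{\randorder(1)} = a_{3,1}^{\randorder(1)}$, and, at each earlier step $i$, that $a^i$ avoids premature allocation: for $i \leq \randorder(2)$ this rules out every $a_{3,j}^i$ (each posts $0$, which is accepted), and for $\randorder(2) < i < \randorder(1)$ it rules out $a_{3,j}^i$ only for $j \geq 2$ (whereas $a_{3,1}^i$ posts $v_{(1)}^{<i}=\varepsilon$ and is rejected by a zero-value bidder). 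Under the substitution, the rules that $M'$ ends up with at each step behave consistently with these constraints on $I_2$: at $i \leq \randorder(2)$ the mapped choices lie in $\{a_1^i, a_2^i\}$, neither of which allocates; at $\randorder(2) < i < \randorder(1)$ all three mapped rules $\{a_1^i, a_2^i, a_3^i\}$ are safe because $a_3^i$ now posts $v_{(1)}^{<i}=\varepsilon$ which is rejected; and at $i = \randorder(1)$ the mapped rule $a_3^{\randorder(1)}$ posts $v_{(1)}^{<\randorder(1)} = \varepsilon = v_{(2)}$, which the highest bidder accepts. Hence $\randorder \in R^M$ implies $\randorder \in R^{M'}$, so $|R^{M'}| \geq |R^M|$, and combining both inequalities with Lemma~\ref{lem:CR} yields the lemma.

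The main obstacle is the robustness step: the map $a_{3,j}^i \mapsto a_3^i$ could conceivably make $M'$ allocate \emph{earlier} than $M$ and so drop $\randorder$ out of $R^{M'}$. The reason this does not happen is that $a_3^i$ posts the strict maximum $v_{(1)}^{<i}$, which is at least as large as the price $v_{(j)}^{<i}$ posted by $a_{3,j}^i$, so any bidder who rejects the old price also rejects the new one; the delicate accounting is just to check that this monotonicity is consistent with the single instance $I_2$, and that the only rule at step $\randorder(1)$ producing revenue $v_{(2)}$ (namely $a_{3,1}^{\randorder(1)}$) maps to the unique $M'$ rule $a_3^{\randorder(1)}$ that also produces this revenue.
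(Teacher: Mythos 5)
Your proposal is correct and follows essentially the same route as the paper: the identical substitution $a_{2,j}^i \mapsto a_2^i$, $a_{3,j}^i \mapsto a_3^i$, an appeal to Lemma~\ref{lem:CR}, and the containments $C^M \subseteq C^{M'}$, $R^M \subseteq R^{M'}$. The only (immaterial) difference is that you verify the containments by explicit case analysis on the worst-case instances $I_1, I_2$, whereas the paper argues more generically that $M'$ posts weakly higher prices than $M$ at every earlier step and then does the case analysis at step $\randorder(1)$.
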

\begin{proof}
We will construct $M'$ from $M$ as follows. We determine the allocation rule $M'$ uses for bidder $i$:
\[a'^i = \begin{cases}
    a_1^i & a^i \in \{a_1^i\} \\
    a_2^i & a^i \in \{a_{2,j}^i\}_{j\in [i - 1]} \\
    a_3^i & a^i \in \{a_{3,j}^i\}_{j\in [i - 1]}
\end{cases}\]
Note that $a_2^i$ and $a_3^i$ allocate the item to bidder i if $v_i$ is at least $\max\{\prediction, v_{(1)}^{<i}\}$ and $v_{(1)}^{<i}$ respectively.

Next, we show that $\text{consistency}(M') \geq \text{consistency}(M)$. First, we have that $\text{consistency}(M') = |C^{M'}| / (n!)$ by  Lemma~\ref{lem:CR}, so it is sufficient to show that $|C^{M'}| \geq |C^M|$. Consider any $\randorder \in C^M$ with $\randorder(1) = i$. Observe that if $M$ does not allocate the item prior to step $i$, neither does $M'$ because at any $j < i$, $M'$ posts to $j$ a price at least as high as the price $M$ posts to $j$. Since price $\prediction$ is posted by $M$ to $i$, we know that $a^i \in \{a_{2,j}^i\}_{j\in [i - 1]}$, and subsequently $a'^i = a_2^i$, so $M'$ also posts price $\prediction$ to $i$. Thus $\randorder \in C^{M'}$, and therefore $|C^{M'}| \geq |C^M|$.

Similarly, we show that $\text{robustness}(M') \geq \text{robustness}(M)$ by proving that $|R^{M'}| \geq |R^M|$. Consider any $\randorder \in R^M$ with $\randorder(1) = i$. By the same argument as above, if $M$ does not allocate the item prior to step $i$ neither does $M'$. Since $M$ posts price $v_{(2)}$ to $i$, there are two cases for $a^i$. Case 1 is $a^i = a_{2,1}^i$ if $\prediction \leq v_{(2)}$. Note that we know $j = 1$ because $v_{(1)}$ must be seen at time i. Then $a'^i = a_2^i$ and also posts $v_{(2)}$ to bidder $i$. Case 2 is $a^i = a_{3,1}^i$, and $j = 1$ by the same reasoning. Then $a'^i = a_3^i$ and again posts $v_{(2)}$ to bidder $i$. Thus $\randorder \in R^{M'}$ and $|R^{M'}| \geq |R^M|$.
\end{proof}

By Lemma~\ref{lem:PAtoPM}, impossibility results for $\mathcal{M}_{m}$ extend to $\mathcal{M}_{a}$.

\subsection{The Main Lemma for the Impossibility Result}
\label{sec:mainlemma}

The main lemma for the impossibility result shows that there exists an $\alpha$-consistent auction that is robustness-optimal among  auctions in $\mathcal{M}_{a}$ and has a three-phase structure (as does our auction).

\begin{lemma} \label{lem:optmech}
    There exists an $\alpha$-consistent auction that is robustness-optimal among  auctions in $\mathcal{M}_{a}$ and satisfies the following structure: it posts price $\infty$ at each time $i \in [1, \first]$, then price $\max\{\prediction, \vmaxless{i}\}$ at each time $i \in [\first + 1, \second]$, and finally price $\vmaxless{i}$ at each time $i \in [\second + 1, n]$.
\end{lemma}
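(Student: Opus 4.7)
The plan is to combine the reduction from $\mathcal{M}_{a}$ to $\mathcal{M}_{m}$ given by Lemma~\ref{lem:PAtoPM} with a direct comparison argument on PM auctions.

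Starting from an $\alpha$-consistent, robustness-optimal auction $M \in \mathcal{M}_{a}$, Lemma~\ref{lem:PAtoPM} produces $M' \in \mathcal{M}_{m}$ with weakly better consistency and robustness; since $\mathcal{M}_{m} \subseteq \mathcal{M}_{a}$, $M'$ remains robustness-optimal. Any such $M'$ is specified by a partition $[n] = S_1 \sqcup S_2 \sqcup S_3$ determined by which of the rules $a_1, a_2, a_3$ is used at each step (with $1 \in S_1$ forced by Definition~\ref{def:mechfamily}). Lemma~\ref{lem:CR} then translates the performance of $M'$ into counting problems over such partitions: by case analysis on the hard instances and on the adversary's choice of $\prediction$, consistency equals $|S_2 \cap [1, \min S_3 - 1]|/n$, while robustness corresponds to the worst of two probabilities. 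The over-prediction regime ($\prediction \to \infty$, where only $a_3$ can allocate) yields the event $\{\mu(1) \in S_3,\; \mu(2) < \min S_3\}$ with probability $|S_3|(\min S_3 - 1)/(n(n-1))$, and the under-prediction regime ($\prediction \to 0^+$, where $a_2$ effectively acts as $a_3$ on zero-value bidders) yields the event $\{\mu(1) \in S_2 \cup S_3,\; \mu(2) \in S_1 \cap [1, \min S_3 - 1],\; \mu(2) < \mu(1)\}$.

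Now let $\first = |S_1|$, $\second = |S_1| + |S_2|$, and define the three-phase $M^{\star} \in \mathcal{M}_{m}$ with $S_1^{\star} = [1, \first]$, $S_2^{\star} = [\first+1, \second]$, $S_3^{\star} = [\second+1, n]$, so that $|S_k^{\star}| = |S_k|$. I would then establish three monotonicity claims comparing $M^{\star}$ to $M'$, which together give that $M^{\star}$ is $\alpha$-consistent and robustness-optimal. For consistency, $|S_2^{\star} \cap [1, \min S_3^{\star} - 1]| = |S_2| \geq |S_2 \cap [1, \min S_3 - 1]|$. For the over-prediction event, the $|S_3|$ elements of $S_3$ all lie in $[\min S_3, n]$, forcing $\min S_3 - 1 \leq n - |S_3| = \min S_3^{\star} - 1$, and combining with $|S_3^{\star}| = |S_3|$ gives weak dominance. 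For the under-prediction event, every pair $(a, b) \in [1, \first] \times [\first+1, n]$ automatically satisfies $a < b$, giving $\first(n-\first)$ favorable pairs for $M^{\star}$, while the count of favorable pairs for $M'$ is at most $|S_1 \cap [1, \min S_3 - 1]| \cdot |S_2 \cup S_3| \leq |S_1| \cdot |S_2 \cup S_3| = \first(n-\first)$.

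The main obstacle is justifying the combinatorial expressions for consistency and robustness for an arbitrary PM auction, in particular the claim that the two adversarial prediction regimes above produce the worst cases. This requires carefully tracking, instance by instance, when a general $(S_1, S_2, S_3)$ auction prematurely allocates at low revenue versus reaches the top bidder at a price near $v_{(2)}$, generalizing the phase-specific arguments in Lemmas~\ref{lem:consistency} and~\ref{lem:robustness} to arbitrary partitions.
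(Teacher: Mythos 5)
Your route is genuinely different from the paper's. Both proofs begin with the reduction from $\mathcal{M}_a$ to $\mathcal{M}_m$ (Lemma~\ref{lem:PAtoPM}), but from there the paper runs an \emph{interchange} argument: if the prices are not in the order $\infty$, then $\max\{\prediction,\vmaxless{i}\}$, then $\vmaxless{i}$, it swaps two adjacent out-of-order prices and uses a bijection on orderings (swapping the values at positions $i$ and $i+1$) to show each swap weakly increases $|C^M|$ and $|R^M|$; repeating the swaps sorts the auction into the three-phase form (Lemmas~\ref{lem:swap2and3}--\ref{lem:swap1and3}). You instead sort ``in one shot'': you describe an arbitrary PM auction by its partition $(S_1,S_2,S_3)$, write explicit expressions for its performance on the extremal instances (values $1,\varepsilon,0,\dots,0$ with the prediction correct, far too high, or near zero), and compare directly with the three-phase auction of the same block sizes via three monotonicity inequalities. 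I checked your three events and the dominance inequalities and they are correct (including the under-prediction event: the extra condition $\randorder(2)<\min S_3$ is automatically implied when $\randorder(1)<\min S_3$, so your event coincides with the true favorable event). If completed, your argument is more quantitative and avoids the paper's case-heavy swap lemmas; the paper's swap argument, in exchange, never needs exact formulas for arbitrary partitions, only comparisons of neighboring auctions.

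Two remarks on the part you flag as the main obstacle. First, you need less than you claim: you do \emph{not} need to show that the two prediction regimes are worst-case for an \emph{arbitrary} PM auction. For $M'$ you only need \emph{upper} bounds on consistency and robustness, and these follow from the three specific instances alone (robustness and consistency are infima over instances and predictions), via a finite case analysis showing that on those instances any sale outside your events happens at price $0$ or at price at most $\prediction\to 0^+$; this analysis is elementary because only the prices $0$, $\prediction$, $\varepsilon$, $1$ (and $\infty$) can ever be posted, and the price $1$ or $\max\{\prediction,1\}$ appears only after the top bidder has departed. The all-instances, all-predictions direction is needed only for the sorted auction $M^\star$, and there it is exactly the arguments of Lemmas~\ref{lem:consistency} and~\ref{lem:robustness}, which never use the specific values $\first=\frac{1-\alpha}{2}n$, $\second=\frac{1+\alpha}{2}n$ and hence already give consistency at least $(\second-\first)/n$ and robustness at least $\min\bigl\{\first(n-\first),\second(n-\second)\bigr\}/(n(n-1))$ for arbitrary thresholds. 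Second, a small slip: as $\prediction\to 0^+$ it is on \emph{positive}-value bidders that $a_2$ coincides with $a_3$; on zero-value bidders they differ (an $S_2$ slot posts $\prediction>0$ and does not sell to them), and this difference is precisely why the second-highest bidder occupying an $S_2$ or $S_3$ slot before the top kills the revenue, forcing $\randorder(2)\in S_1$ in your under-prediction event. With these two points addressed, your proof goes through and yields the same statement as the paper's.
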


The remainder of Section~\ref{sec:mainlemma} is devoted to the proof of Lemma~\ref{lem:optmech}. 

\paragraph{Overview of the proof of Lemma~\ref{lem:optmech}.}
The proof follows an interchange argument that shows that if an auction $M \in \mathcal{M}_m$ does not post prices in the order specified by Lemma~\ref{lem:optmech}, then there are two positions $i$ and $i+1$ that violate this order and the prices posted at these time steps can be swapped without decreasing $|C^M|$ and $|R^M|$, and therefore without decreasing consistency and robustness.  There are three potential violations of the ordering specified by Lemma~\ref{lem:optmech}. In Lemma~\ref{lem:swap2and3}, we consider the case where $\vmaxless{i}$ is posted to bidder $i$ and $\max\{\prediction, \vmaxless{i + 1}\}$ to bidder $i+1$, in Lemma~\ref{lem:swap1and2} the case where $\vmaxless{i}$ is posted to bidder $i$ and $\infty$ to bidder $i+1$, and in Lemma~\ref{lem:swap1and3} the case where $\max\{\prediction, \vmaxless{i}\}$ to bidder $i$ and $\infty$ to bidder $i+1$. 

We now define the interchange function $f_i: \mathcal{S}_n\to \mathcal{S}_n$. For fixed index $i$ and any permutation $\randorder$, for every $j \in [n]$ let 
\[f_i(\randorder)(j) = \begin{cases}
    i+1 & \randorder(j) = i \\
    i & \randorder(j) = i + 1 \\
    \randorder(j) & \text{else},
\end{cases}\]
which is a bijective function that swaps the values of the ith and (i+1)th bidders. We first state a trivial fact regarding the revenue achieved from the first $i-1$ bidders for two auctions that are identical before step $i$. This fact will be repeatedly used in the proof of the next lemmas.

\begin{lemma} \label{beforei}
    Consider two  auctions $M, M' \in \mathcal{M}_m$ that are identical for steps before $i$. Then $M$ under order $\randorder$ and $M'$ under order $f_i(\randorder)$ gain the same revenue before step $i$.
\end{lemma}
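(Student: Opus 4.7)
The plan is to exploit the fact that $f_i$ only touches positions $i$ and $i+1$, so the prefix of the value sequence up to time $i-1$ is unchanged. Once that is established, the conclusion follows because a PM auction's behavior at any time step $j$ is fully determined by the values observed at times $1,\ldots,j-1$ and the (fixed) prediction $\prediction$, so agreeing on the first $i-1$ time steps immediately forces identical pricing and identical allocation decisions throughout that prefix.

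More concretely, the first step is to unpack the convention $v_j = v_{(\randorder^{-1}(j))}$ and verify the fact that for every $j \leq i-1$, $f_i(\randorder)^{-1}(j) = \randorder^{-1}(j)$. This is a one-line check: if $\randorder^{-1}(j) = k$ then $\randorder(k) = j$, and since $j \notin \{i, i+1\}$ the definition of $f_i$ gives $f_i(\randorder)(k) = \randorder(k) = j$, so $f_i(\randorder)^{-1}(j) = k$. Consequently the value at time step $j$ under $f_i(\randorder)$ equals the value at time step $j$ under $\randorder$ for every $j \leq i-1$.

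The second step is to argue that this prefix-equality of value sequences implies identical transcripts. Since $M$ and $M'$ are elements of $\mathcal{M}_m$ that coincide on the first $i-1$ steps, they assign the same allocation rule $a^j \in \{a_1^j, a_2^j, a_3^j\}$ at each $j < i$, and each such rule posts a price that is a function only of $\vmaxless{j} = \max\{v_1,\ldots,v_{j-1}\}$ and the fixed prediction $\prediction$. By step one, $(v_1,\ldots,v_{j-1})$ is the same under both orderings, so the posted price at time $j$ is the same, the accept/reject decision of bidder $j$ is the same, and the payment (if any) is the same. Summing over $j=1,\ldots,i-1$ (or, equivalently, observing that at most one allocation can occur in a prefix and the two transcripts agree on whether and where it occurs) yields that the revenue before step $i$ is identical under $M$ with $\randorder$ and under $M'$ with $f_i(\randorder)$.

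There is no real obstacle here; the lemma is essentially a definitional sanity check that isolates the interchange argument from noise in the prefix, so the only care needed is to handle the two notational layers (the permutation $\randorder$ and its inverse) correctly when verifying prefix-invariance.
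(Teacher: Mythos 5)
Your proposal is correct and follows essentially the same argument as the paper: $f_i$ leaves the values at time steps $1,\dots,i-1$ unchanged, and since $M$ and $M'$ coincide on that prefix and their posted prices depend only on previously seen values and $\prediction$, the transcripts (and hence the revenue) before step $i$ agree. The extra permutation bookkeeping you include is a fine, slightly more detailed version of the paper's one-line observation.
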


\begin{proof}
    Observe that $f_i(\randorder)$ does not affect the values that appear before $i$. Then $M'$ sees the same ranks before $i$ under $f_i(\randorder)$ as $M$ does under $\randorder$, and since they follow the same rules the revenue gained at each step before $i$ is the same. 
\end{proof}

The first potential violation of the ordering specified by Lemma~\ref{lem:optmech} is when $\vmaxless{i}$ is posted to bidder $i$ and $\max\{\prediction, \vmaxless{i + 1}\}$ to bidder $i+1$.

\begin{lemma} \label{lem:swap2and3}
 Consider an  auction $M \in \mathcal{M}_m$ that posts price  $\vmaxless{i}$ at some step $i$ and $\max\{\prediction, \vmaxless{i + 1}\}$ at step $i + 1$. Let $M'$ be the same auction as $M$ except that it posts price $\max\{\prediction, \vmaxless{i}\}$  at step $i$ and $\vmaxless{i + 1}$  at step $i + 1$. Then, $|C^{M'}| \geq |C^M|$ and $|R^{M'}| \geq |R^M|$.
\end{lemma}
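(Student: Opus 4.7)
The plan is to prove both $|C^{M'}| \ge |C^M|$ and $|R^{M'}| \ge |R^M|$ by an interchange argument: I will map each $\randorder \in C^M$ (resp.\ $R^M$) either to $\randorder$ itself or to $f_i(\randorder)$, and show that the combined map is an injection into $C^{M'}$ (resp.\ $R^{M'}$). Since $M$ and $M'$ agree on every step outside $\{i,i+1\}$ and $f_i$ only swaps the values at positions $i$ and $i+1$, Lemma~\ref{beforei} ensures that $M$ and $M'$ behave identically on every step before $i$ under either $\randorder$ or $f_i(\randorder)$, so all the divergence between the two auctions is concentrated at steps $i$ and $i+1$.

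For the consistency part I split on $\randorder(1)$. The case $\randorder(1) < i$ uses the identity, since $M$ and $M'$ have the same rule at $\randorder(1)$ and identical behavior before. The case $\randorder(1) = i$ is vacuous because $M$ uses $a_3^i$, not $a_2^i$. For $\randorder(1) = i+1$, the map $f_i$ moves the highest bidder to position $i$, where $M'$ uses $a_2^i$, and the item still reaches step $i$ since values before $i$ are unchanged. The substantive case is $\randorder(1) > i+1$: the fact that $M$ reaches $\randorder(1)$ forces $v_i < \vmaxless{i}$, which automatically makes $M'$'s $a_2^i$ reject at $i$; the only residual obstruction is that $M'$'s $a_3^{i+1}$ may allocate at $i+1$ in the sub-case $\vmaxless{i+1} \le v_{i+1} < \prediction$. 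In that sub-case I apply $f_i$: the swap places $v_{i+1}^\randorder < \prediction$ at position $i$ (so $a_2^i$ rejects) and places $v_i^\randorder < \vmaxless{i}$ at position $i+1$ (so $a_3^{i+1}$ rejects, since $\vmaxless{i+1}^{f_i(\randorder)} \ge \vmaxless{i}$), allowing the item to reach $\randorder(1)$ where $M'$ uses the same $a_2$ rule as $M$.

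The robustness part follows the same blueprint but additionally tracks $\randorder(2)$ to preserve $\vmaxless{\randorder(1)} = v_{(2)}$. For $\randorder(1) < i$ the identity suffices; for $\randorder(1) > i+1$ the identity also works when $\randorder(2) < i$ (so $v_{i+1}$ cannot become a new max) or when $\randorder(2) \in [i+2,\randorder(1)-1]$ and $v_{i+1} < \vmaxless{i+1}$. The swap $f_i$ handles the remaining cases: $\randorder(1) = i$ (moving the highest bidder to $i+1$ while preserving $\vmaxless{i+1}^{f_i(\randorder)} = v_{(2)}$ because $v_{(2)}$ still sits at position $\randorder(2) < i$), $\randorder(1) > i+1$ with $\randorder(2) = i+1$, and $\randorder(1) > i+1$ with $\randorder(2) \in [i+2,\randorder(1)-1]$ in the sub-case $\vmaxless{i+1} \le v_{i+1} < \prediction$. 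In each $f_i$-case I verify that the value moved to $i$ is below $\max\{\prediction, \vmaxless{i}\}$, the value moved to $i+1$ is below $\vmaxless{i+1}^{f_i(\randorder)}$, and $v_{(2)}$ still lies in $\{v_1,\ldots,v_{\randorder(1)-1}\}$ so the price posted at $\randorder(1)$ remains $v_{(2)}$.

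Injectivity of the combined map rests on the observation that any $\randorder$ sent by the identity with $\randorder(1) > i+1$ satisfies $v_i < \vmaxless{i}$ at position $i$, whereas the image of any $f_i$-branch $\randorder$ with $\randorder(1) > i+1$ has value $v_{i+1}^\randorder \ge \vmaxless{i+1}^\randorder = \vmaxless{i}$ at position $i$; these inequalities are incompatible, so the two sets of images are disjoint in that regime. Collisions at other values of $\randorder(1)$ are ruled out because $f_i$ preserves $\randorder(1)$ outside $\{i,i+1\}$ but exchanges $i$ with $i+1$, isolating the $f_i$-image classes from the identity classes. The main obstacle I expect is the meticulous case analysis under $\randorder(1) > i+1$, where $f_i$ must simultaneously disable the ``$a_3^{i+1}$ allocates when $a_2^{i+1}$ would not'' phenomenon for $M'$, preserve the intended posted price at step $\randorder(1)$ (including the identity of $\vmaxless{\randorder(1)}$ in the robustness argument), and avoid colliding with any identity-image.
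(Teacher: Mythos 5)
Your overall plan (an injective map into $C^{M'}$ and $R^{M'}$ built case-by-case, using either the identity or the swap $f_i$) can be made to work, and your consistency case analysis together with the ``new running max at position $i$'' criterion for separating the identity-images from the $f_i$-images is sound. It is, however, more burdensome than the paper's proof, which maps \emph{every} $\randorder$ by $f_i$ (so that $f_i(C^M)\subseteq C^{M'}$ and $f_i(R^M)\subseteq R^{M'}$) and gets the cardinality comparison for free from bijectivity of $f_i$, with no injectivity bookkeeping at all.

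The genuine gap is in the robustness part: your case enumeration covers $\randorder(1)<i$, $\randorder(1)=i$, and $\randorder(1)>i+1$, but omits $\randorder(1)=i+1$. This case is not vacuous for $R^M$ (unlike the analogous case $\randorder(1)=i$ for $C^M$): when $\prediction < v_{(2)}$ and the second-highest bidder arrives before step $i$, the rule $a_2^{i+1}$ posts $\max\{\prediction,\vmaxless{i+1}\}=v_{(2)}$ to the highest bidder at step $i+1$, so such orderings lie in $R^M$ and must be accounted for; the paper treats this sub-case explicitly. Moreover, the omission interacts badly with your injectivity scheme: if you tried to patch it with the identity map, the resulting images would have the highest bidder at position $i+1$ and could collide with the $f_i$-images of your $\randorder(1)=i$ class (which also place the highest bidder at position $i+1$), and indeed both $\sigma$ and $f_i(\sigma)$ can simultaneously lie in $R^M$ in these two classes. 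The clean fix is to map the $\randorder(1)=i+1$ orderings by $f_i$ as well (as the paper does): after the swap, $M'$ posts $\max\{\prediction,\vmaxlessbar{i}\}=v_{(2)}$ to the highest bidder now sitting at position $i$, and the image has $\sigma(1)=i$, which is disjoint from all your other image classes.
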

\begin{proof}    First we show that if $\randorder \in C^M$, then $f_i(\randorder) \in  C^{M'}$, meaning $f_i(C^M) \subseteq C^{M'}$. Since $f_i$ is a bijective function, consequently $|C^{M'}| \geq |C^M|$. Assume that $\prediction = v_{(1)}$.  Consider any $\randorder \in C^M$, then $\prediction$ is posted to the highest ranked bidder who arrives at step $\randorder(1)$. Observe that to prove $f_i(\randorder) \in C^{M'}$, it is sufficient to show that price $\prediction$ is posted to bidder $f_i(\randorder)(1)$, or that revenue $\prediction$ is extracted by $M'$ under ordering $f_i(\randorder)$. There are three cases.

    First, if $\randorder(1)<i$,  observe that there is no difference  between $M$ and $M'$  before step $i$. Then it follows from Lemma \ref{beforei} that since $M$ extracts revenue $\prediction$ at step $\randorder(1) < i$, $M'$ extracts the same revenue under $f_i(\randorder)$ before step $i$. 

    The second case is if $\randorder(1)\in \{i,i+1\}$. Recall that $\randorder \in C^M$ implies that price $\prediction$ is posted to bidder $\randorder(1)$. Then since $M$ posts $\vmaxless{i} < \prediction$ to bidder $i$, $\randorder(1)=i+1$ is the only possibility, and indeed $M$ posts $\max\{\prediction, \vmaxless{i+1}\} = \prediction$ to $i + 1$. Now we must show that $M'$ posts price $\prediction$ to $f_i(\randorder)(1)$. By our definition of the interchange function, $f_i(\randorder)^{-1}(i) = \randorder^{-1}(i + 1) = 1$, so $f_i(\randorder)(1) = i$. Note that if $M$ reaches step $i$, then $M'$ does as well. Then since $M'$ posts price $\max\{\prediction, \vmaxless{i}\} = \prediction$ at step $i$, then $f_i(\randorder) \in C^{M'}$.

    The third and last case is if $\randorder(1) >i+1$. Observe that it is sufficient to show that under $M'$, bidders $i$ and $i+1$ do not receive the item because at steps after $i + 1$, the auctions see the same order of bidders and make the same posts. Clearly if under $M$ bidder $i + 1$ does not accept its posted-price $\prediction$, ie $v_{i + 1} < \prediction$, then under $M'$, bidder $i$ with value $v_{(f_i(\randorder)^{-1}(i))} = v_{i + 1}$ will not accept its posted price $\prediction$. Now we consider bidder $i + 1$ under $M'$. If we let $\vmaxlessbar{\ell}$ be the value of the highest ranked bidder seen before step $\ell$ given ordering $f_i(\randorder)$, we can see that $\vmaxlessbar{i+1} \geq \vmaxlessbar{i} = \vmaxless{i}$. Then if bidder $i$ does not accept the price $\vmaxless{i}$ posted under $M$, ie $v_i < \vmaxless{i}$, then $v_{(f_i(\randorder)^{-1}(i + 1))}=v_{i} < \vmaxless{i} \leq \vmaxlessbar{i+1}$ and under $M'$ bidder $i + 1$ also does not accept its posted price $\vmaxlessbar{i + 1}$. 
 
     Next, to show the second part of lemma, we show that $f_i(R^M) \subseteq R^{M'}$. If $\randorder \in R^M$ is true, then $\vmaxless{i}=v_{(2)}$ is posted to the bidder at step $\randorder(1)$. Similar to the consistency proof, to prove $f_i(\randorder) \in R^{M'}$, it is sufficient to show that price $v_{(2)}$ is posted to bidder $f_i(\randorder)(1)$, or that revenue $v_{(2)}$ is extracted by $M'$ under ordering $f_i(\randorder)$ (for any value of $\prediction$). Cases 1 and 3 are exactly the same as for consistency. Then consider $\randorder(1) \in \{i,i+1\}$.

 If $\prediction > v_{(1)}$, then under $M$ price $\max\{\prediction, \vmaxless{i+1}\} > v_{(1)}$ is posted to and rejected by bidder $i + 1$, so $\randorder(1) = i$ is the only possibility. Indeed, price $\vmaxless{i} = v_{(2)}$ may be posted to $i$ if the second highest bidder arrives before $i$. After the interchange, bidder $i$ is offered price $\prediction > v_{(1)}$ under $M'$ and they reject it. Observe that if $\randorder(1) = i$, then $f_i(\randorder)(1) = i + 1$. Then under $M'$, bidder $i + 1$ sees price $\vmaxlessbar{i + 1} = \vmaxless{i} = v_{(2)}$.

 Now if $\prediction < v_{(2)}$, then there are two scenarios. First, consider $\randorder(1) = i$, then since $M$ posts $\vmaxless{i}$ at $i$ we know that $\vmaxless{i} = v_{(2)}$. Observe that since the highest two bidders arrive by step $i$, then $v_{i + 1} < v_{(2)}$. Under $f_i(\randorder)$, bidder $i$ has value $v_{(f_i(\randorder)^{-1}(i))} = v_{i + 1}$. Then when $M'$ posts to bidder $i$ price $\max\{\prediction, \vmaxlessbar{i}\} = \vmaxless{i} = v_{(2)}$, it is rejected. $M'$ then posts price $\vmaxlessbar{i + 1}$, which is exactly $v_{(2)}$ because bidder $i$ has value below $v_{(2)}$, to the bidder with rank $f_i(\randorder)^{-1}(i + 1) = \randorder^{-1}(i) = 1$. If instead $\randorder(1) = i + 1$, then it is impossible for bidder $i$ to have the second highest value or else they would accept their price $\vmaxless{i} < v_{(2)}$. Then for $\max\{\prediction, \vmaxless{i}\} = v_{(2)}$ to hold, the second highest bidder must arrive before $i$ and $\vmaxless{i} = v_{(2)}$. Thus under $M'$, price $\vmaxlessbar{i} = \vmaxless{i}$ is posted at step $i$ to the bidder with rank $f_i(\randorder)^{-1}(i) = \randorder^{-1}(i + 1) = 1$.

 Observe that if $v_{(2)} \leq \prediction \leq v_{(1)}$, given that $\vmaxless{i} = v_{(2)}$, selling by posting $\max\{\prediction, \vmaxless{i}\}$ and $\vmaxless{i}$ both result in at least $v_{(2)}$ revenue, so swapping these two prices does not lower robustness.
\end{proof}

The second potential violation of the ordering specified by Lemma~\ref{lem:optmech} is when  $\vmaxless{i}$ is posted to bidder $i$ and $\infty$ to bidder $i+1$.

\begin{lemma} \label{lem:swap1and2}
 Consider an  auction $M \in \mathcal{M}_m$ that posts price  $\vmaxless{i}$ at some step $i$ and $\infty$ at step $i + 1$. Let $M'$ be the same auction at $M$ except that it posts price $\infty$  at step $i$ and $\vmaxless{i + 1}$  at step $i + 1$. Then, $|C^{M'}| \geq |C^M|$ and $|R^{M'}| \geq |R^M|$.
\end{lemma}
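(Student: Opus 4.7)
I will mirror the interchange argument used in the proof of Lemma~\ref{lem:swap2and3}: apply the bijection $f_i$ that swaps the ranks at positions $i$ and $i+1$, and show that $f_i(C^M)\subseteq C^{M'}$ and $f_i(R^M)\subseteq R^{M'}$. Bijectivity then yields the two desired cardinality inequalities. The case split is on the position $\randorder(1)$ of the top-ranked bidder: either $\randorder(1)<i$, $\randorder(1)\in\{i,i+1\}$, or $\randorder(1)>i+1$. Since $M$ and $M'$ coincide at every step outside $\{i,i+1\}$, Lemma~\ref{beforei} handles the pre-$i$ behavior for free.

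\paragraph{Consistency.} For $\randorder\in C^M$, the price posted to the bidder at $\randorder(1)$ must equal $\prediction$; because $M$ posts $\vmaxless{i}$ at step $i$ and $\infty$ at step $i+1$, and because $v_{(1)}=\prediction$ sits at $\randorder(1)$, neither of these prices can be $\prediction$. Hence the middle case is vacuous. For $\randorder(1)<i$, Lemma~\ref{beforei} together with $f_i(\randorder)(1)=\randorder(1)$ gives the conclusion immediately. For $\randorder(1)>i+1$, I need to show that $M'$ still reaches step $\randorder(1)$ under $f_i(\randorder)$ without allocating. Step $i$ is vacuous since $M'$ posts $\infty$. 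At step $i+1$, $M'$ posts $\vmaxlessbar{i+1}=\max\{\vmaxless{i},v_{i+1}\}$; the new bidder at position $i+1$ has value $v_i$, which by assumption satisfies $v_i<\vmaxless{i}$ (otherwise $M$ would have already allocated at step $i$), so $v_i<\vmaxlessbar{i+1}$ and the price is rejected. From step $i+2$ onwards the multisets of past values and the posted-price rules coincide under $(M,\randorder)$ and $(M',f_i(\randorder))$, so the $\prediction$-revenue extracted by $M$ at $\randorder(1)$ is also extracted by $M'$.

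\paragraph{Robustness.} For $\randorder\in R^M$, the price posted to $\randorder(1)$ equals $v_{(2)}$, which in particular forces $\vmaxless{\randorder(1)}=v_{(2)}$. The cases $\randorder(1)<i$ and $\randorder(1)>i+1$ go through exactly as in the consistency argument, using the same pre-$i$ coincidence and the same rejection-propagation at step $i+1$. The case $\randorder(1)=i+1$ is impossible, since $M$ posts $\infty$ there. It remains to handle $\randorder(1)=i$. Here $M$ posts $\vmaxless{i}=v_{(2)}$ and the top-value bidder accepts, so the second-highest value sits at some position less than $i$. Under $f_i(\randorder)$ the top bidder moves to position $i+1$; $M'$ posts $\infty$ at $i$ (rejected) and then $\vmaxlessbar{i+1}=\max\{\vmaxless{i},v_{i+1}\}=v_{(2)}$ at step $i+1$, using that the old $v_{i+1}$ is at most $v_{(3)}\le v_{(2)}$ because the top two values already live in $\{1,\dots,i\}$. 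The swapped bidder at $i+1$ has value $v_{(1)}\ge v_{(2)}$ and accepts, yielding revenue $v_{(2)}$.

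\paragraph{Main obstacle.} The one nontrivial technical step is the rejection-propagation at step $i+1$: I need that if bidder $i$ rejected $\vmaxless{i}$ in $M$, then bidder $i+1$ rejects $\vmaxlessbar{i+1}$ in $M'$ after the swap. This relies on the monotonicity $\vmaxlessbar{i+1}\ge\vmaxless{i}$, which holds because swapping only inserts a new value into the prefix and cannot decrease its maximum. All other steps are either symmetry with Lemma~\ref{lem:swap2and3} or direct application of Lemma~\ref{beforei}.
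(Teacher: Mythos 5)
Your proof is correct and follows essentially the same interchange argument as the paper: showing $f_i(C^M)\subseteq C^{M'}$ and $f_i(R^M)\subseteq R^{M'}$ with the identical case split on $\randorder(1)$, the same vacuity argument for the middle consistency case, the same rejection-propagation $v_i<\vmaxless{i}\le\vmaxlessbar{i+1}$, and the same handling of $\randorder(1)=i$ for robustness (where your explicit observation $v_{i+1}\le v_{(3)}$ matches the paper's remark that the highest bidder cannot be at step $i$ under $f_i(\randorder)$).
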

\begin{proof} We first show that $|C^{M'}| \geq |C^M|$ by proving that $f_i(C^M) \subseteq C^{M'}$. Let $\prediction = v_{(1)}$. Consider any $\randorder \in C^M$, so $\prediction$ is posted to bidder $\randorder(1)$. There are again three cases. The first case, $\randorder(1) < i$, is as in Lemma~\ref{lem:swap2and3}. The second case is $\randorder(1)\in \{i,i+1\}$. Observe that it is impossible to post $\prediction$ to bidder $\randorder(1)$ by posting $\infty$ or $\vmaxless{i}$. This is because $\vmaxless{i} \leq v_{(2)}$. The third and last case is $\randorder(1) > i + 1$. It is sufficient to show that $M'$ does not sell the item at time $i$ or $i + 1$. Clearly the former is true because $\infty$ is posted to $i$. Since $M$ fails to sell the item at step $i$ by posting price $\vmaxless{i}$, we know that $v_i < \vmaxless{i}$. Then under $M'$, the price $\vmaxlessbar{i + 1} \geq \vmaxlessbar{i} = \vmaxless{i}$ is posted to bidder with value $v_{(f_i(\randorder)^{-1}(i + 1))} = v_i$, so the item is not sold to bidder $i + 1$ under $M'$.

    For the second part of the lemma, we show $f_i(\randorder)\in R^{M'}$ for any $\randorder \in R^M$. If $\randorder \in R^M$ is true, then $\vmaxless{i}=v_{(2)}$ is posted to the bidder at step $\randorder(1)$. Similar to the consistency proof, to prove $f_i(\randorder) \in R^{M'}$, it is sufficient to show that price $v_{(2)}$ is posted to bidder $f_i(\randorder)(1)$, or that revenue $v_{(2)}$ is extracted by $M'$ under ordering $f_i(\randorder)$ (for any value of $\prediction$).  Once again, cases one and three are the same. Now consider $\randorder(1)\in\{i,i+1\}$. Since bidder $i + 1$ never accepts $\infty$, this means that $\randorder(1)=i$. $M$ posts price $\vmaxless{i}$ to bidder $i$, so $\vmaxless{i} = v_{(2)}$. Observe that $f_i(\randorder)^{-1}(i + 1) = \randorder^{-1}(i) = 1$. First $M'$ posts $\infty$ at step $i$, and then $\vmaxlessbar{i + 1}$, which equals $v_{(2)}$ since the highest bidder cannot be at step $i$ under $f_i(\randorder)$. Then $v_{(2)}$ is posted to bidder $i + 1  = f_i(\randorder)(1)$.
\end{proof}

The third and last potential violation of the ordering specified by Lemma~\ref{lem:optmech} is when  $\max\{\prediction, \vmaxless{i}\}$ to bidder $i$ and $\infty$ to bidder $i+1$.

\begin{lemma} \label{lem:swap1and3}
 Consider an  auction $M \in \mathcal{M}_m$ that posts price  $\max\{\prediction, \vmaxless{i}\}$ at some step $i$ and $\infty$ at step $i + 1$. Let $M'$ be the same auction at $M$ except that it posts price $\infty$  at step $i$ and $\max\{\prediction, \vmaxless{i + 1}\}$  at step $i + 1$. Then, $|C^{M'}| \geq |C^M|$ and $|R^{M'}| \geq |R^M|$.
\end{lemma}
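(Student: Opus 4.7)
The plan is to follow the same interchange strategy as Lemmas~\ref{lem:swap2and3} and~\ref{lem:swap1and2}: show $f_i(C^M)\subseteq C^{M'}$ and $f_i(R^M)\subseteq R^{M'}$, which by bijectivity of $f_i$ yields the two desired inequalities. For a fixed target permutation $\randorder$, I would split on whether $\randorder(1)<i$, $\randorder(1)\in\{i,i+1\}$, or $\randorder(1)>i+1$, and argue in each case that $f_i(\randorder)$ lies in the corresponding set for $M'$.

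The two outer cases are routine and essentially mirror the analogous cases from the previous two interchange lemmas. For $\randorder(1)<i$, Lemma~\ref{beforei} immediately transfers the revenue-extracting step from $M$ to $M'$ under $f_i(\randorder)$ since the two auctions agree strictly before step $i$. For $\randorder(1)>i+1$, it suffices to verify that $M'$ sells at neither step $i$ (immediate, since $M'$ posts $\infty$ there) nor step $i+1$; the latter follows because under $f_i(\randorder)$ the bidder at step $i+1$ has value $v_i$, and $M'$'s posted price $\max\{\prediction,\vmaxlessbar{i+1}\}$ dominates $\max\{\prediction,\vmaxless{i}\}$ (since $\vmaxlessbar{i+1}\geq\vmaxless{i}$), which $M$ has already demonstrated strictly exceeds $v_i$. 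After step $i+1$ both auctions post the same prices to the same bidders, so $M'$ extracts the same revenue as $M$ at step $\randorder(1)$.

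The substantive case is $\randorder(1)\in\{i,i+1\}$. Because $M$ posts $\infty$ at step $i+1$, this forces $\randorder(1)=i$, so the highest-valued bidder sits at step $i$ under $\randorder$ and migrates to step $i+1$ under $f_i(\randorder)$. Under $M'$ this bidder faces the price $\max\{\prediction,\vmaxlessbar{i+1}\}$ at step $i+1$, and the heart of the proof is showing that this price equals the target revenue $M$ would have extracted. For consistency ($\prediction=v_{(1)}$), this is immediate because $\vmaxlessbar{i+1}\leq v_{(1)}=\prediction$, so the posted price is exactly $\prediction$, which the relocated top bidder accepts. For robustness, I would combine the two identities $\max\{\prediction,\vmaxless{i}\}=v_{(2)}$ (which follows from $\randorder\in R^M$) and $v_{i+1}\leq v_{(2)}$ (which follows from $\randorder(1)=i$) to conclude that $\max\{\prediction,\vmaxlessbar{i+1}\}=v_{(2)}$, so the relocated top bidder again accepts, yielding revenue $v_{(2)}$.

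The main obstacle will be this last subcase for robustness: one has to carefully track that the value migrating from position $i+1$ to position $i$ under $f_i$ does not inflate $\vmaxlessbar{i+1}$ past $v_{(2)}$ in $M'$. This is ultimately controlled by the fact that $\randorder(1)=i$ pins the highest bidder at step $i$, so the displaced value at step $i+1$ under $\randorder$ is at most $v_{(2)}$. Once this is in place, everything else is bookkeeping parallel to the preceding two interchange lemmas, and the two containments $f_i(C^M)\subseteq C^{M'}$ and $f_i(R^M)\subseteq R^{M'}$ follow.
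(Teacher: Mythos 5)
Your proposal is correct and follows essentially the same interchange argument as the paper: the same containments $f_i(C^M)\subseteq C^{M'}$ and $f_i(R^M)\subseteq R^{M'}$, the same three-way case split on $\randorder(1)$, and the same handling of the outer cases via Lemma~\ref{beforei} and price domination. The only cosmetic difference is the robustness subcase $\randorder(1)=i$: you use $\max\{\prediction,\vmaxless{i}\}=v_{(2)}$ (hence $\prediction\le v_{(2)}$) to conclude the swapped price is exactly $v_{(2)}$, whereas the paper splits on $\prediction>v_{(1)}$ versus $\prediction<v_{(1)}$ and argues the posted price lies in $[v_{(2)},v_{(1)}]$ --- both yield the required containment.
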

\begin{proof}  We again first show that $|C^{M'}| \geq |C^M|$ by proving that $f_i(C^M) \subseteq C^{M'}$. Let $\prediction = v_{(1)}$. Consider any $\randorder \in C^M$, so $\prediction$ is posted to bidder $\randorder(1)$. The first case, $\randorder(1) < i$, is as in Lemma~\ref{lem:swap2and3}. The second case is $\randorder(1) \in \{i, i+1\}$. Since auction $M$ posts $\infty$ to $i+1$, $\randorder(1) = i$ is the only possibility, and indeed $\max\{\prediction, \vmaxless{i}\} = \prediction$ is posted at $i$. First observe that if $M$ reaches step $i$, then $M'$ does as well. $M'$ subsequently posts $\infty$ to bidder $i$, effectively skipping them. We know that $f_i(\randorder)^{-1}(i + 1) = \randorder^{-1}(i) = 1$. Then when under $f_i(\randorder)$ auction $M'$ posts $\max\{\prediction, \vmaxless{i + 1}\} = \prediction$ at step $i + 1$, it is to the highest bidder. The third and last case is $\randorder(1) > i+1$. Use the same argument as in Lemma \ref{lem:swap1and2} except instead of $\vmaxless{i}$ being rejected by bidder $i$ under $M$ and $i + 1$ under $M'$, here $\max\{\prediction, \vmaxless{i}\} \geq \vmaxless{i}$ is being rejected by bidder $i$ under $M$ and $i + 1$ under $M'$.

For the second part of the lemma, we show that $f_i(\randorder)\in R^{M'}$ for $\randorder \in R^M$. If $\randorder \in R^M$ is true, then $\vmaxless{i}=v_{(2)}$ is posted to the bidder at step $\randorder(1)$. Similar to the consistency proof, to prove $f_i(\randorder) \in R^{M'}$, it is sufficient to show that price $v_{(2)}$ is posted to bidder $f_i(\randorder)(1)$, or that revenue $v_{(2)}$ is extracted by $M'$ under ordering $f_i(\randorder)$ (for any value of $\prediction$). Observe that the proofs for cases 1 and 3 are the same as above. Case two is impossible if $\prediction > v_{(1)}$ because no bidder accepts prices $\infty$ or $\max\{\prediction, \vmaxless{i}\} > v_{(1)}$. Then if $\prediction < v_{(1)}$, we have $\randorder(1) = i$, as auction $M$ posts $\infty$ to bidder $i + 1$. In order for value at least $v_{(2)}$ (but below $v_{(1)}$) to be posted at step $i$ by $M$, we need $\max\{\prediction, \vmaxless{i}\} \geq v_{(2)}$, so it is sufficient for the second highest bidder to arrive before $i$ and therefore $\vmaxless{i} = v_{(2)}$. Observe that $f_i(\randorder)^{-1}(i + 1) = \randorder^{-1}(i) = 1$. Under $M'$, $\infty$ is posted to bidder $i$, and then $\max\{\prediction, \vmaxlessbar{i + 1}\}$ is posted to bidder $i + 1$. We know that $\vmaxlessbar{i + 1} = \vmaxlessbar{i} = \vmaxless{i}$ by the same reasoning as in Lemma~\ref{lem:swap1and2}. Then $M'$ posts $\max\{\prediction, \vmaxless{i + 1}\} \in [v_{(1)}, v_{(2)}]$ to bidder $i + 1$ with the highest value.
\end{proof}

We are now ready to prove Lemma~\ref{lem:optmech}. \newline

\textit{Proof of Lemma~\ref{lem:optmech}.} Consider an $\alpha$-consistent robustness-optimal  auction $M \in \mathcal{M}_m$. If it does not satisfy the structure specified by Lemma~\ref{lem:optmech}, then there exist time steps $i$ and $i+1$ such that $M$ either  posts prices $\vmaxless{i}$  and $\max\{\prediction, \vmaxless{i + 1}\}$, or  prices $\infty$ and $\vmaxless{i + 1}$, or prices $\infty$ and $\max\{\prediction, \vmaxless{i + 1}\}$ to $i$ and $i + 1$. Lemma~\ref{lem:swap2and3}, Lemma~\ref{lem:swap1and2}, and Lemma~\ref{lem:swap1and3} show that for each of these cases, the two prices can be swapped without decreasing $|R^M|$ and $|C^M|$. By repeating this swapping process, we obtain an auction $M'$ such that $|C^{M'}| \geq |C^M|$ and $|R^{M'}| \geq |R^M|$. 

Thus, by Lemma~\ref{lem:CR}, $M'$ is also an $\alpha$-consistent robustness-optimal auction among $\mathcal{M}_m$ with the structure presented in Lemma~\ref{lem:optmech}. By Lemma~\ref{lem:PAtoPM}, $M'$ is also robustness-optimal among $\mathcal{M}_a$. \qed

\subsection{The Optimal Thresholds} \label{sec:optthresholds}

For  auctions in $\mathcal{M}_a$ constructed as in Lemma~\ref{lem:optmech}, the time thresholds $\first$ and $\second$ set to $\frac{1 - \alpha}{2}n$ and $\frac{1 + \alpha}{2}n$ achieve $\alpha$-consistency and $\frac{1 - \alpha^2}{4} + O(\frac{1}{n})$-robustness. We show that for $\alpha \in [0,1]$, no other thresholds lead to a better robustness, which then shows the impossibility result for  PA auctions. We note that our auction also use these same thresholds. \newline

\textit{Proof of Theorem~\ref{thm:hardness}.} Let us first introduce some notation. Let $R_i(M)$ be the event that step $i$ is reached under auction $M$ and let $P_i^{\prediction}(M)$ be the event that  $\prediction$ is posted at step $i$ under auction $M$. For a fixed $\alpha\in [0,1]$, consider an $\alpha$-consistent   auction $M \in \mathcal{M}_a$ that is optimal with respect to robustness and is structured as in Lemma~\ref{lem:optmech} with time thresholds $i_1$ and $i_2$. Observe that if $\prediction = v_{(1)}$, then the consistency achieved by $M$ is 
    \begin{align*}
        &\sum_{i=1}^n \PP(\randorder^{-1}(i) = 1) \cdot \PP(R_i(M)|\randorder^{-1}(i) = 1) \cdot 
 \PP(P_i^{\prediction}(M)|R_i(M), \randorder^{-1}(i) = 1) \\
 = &\frac{1}{n} \sum_{i=\first+1}^{\second}\PP(R_i(M)|\randorder^{-1}(i) = 1) = \frac{\second-\first}{n}.
    \end{align*}
   where the first equality is because $\prediction$ is posted only at steps $i \in [\first+1, \second]$ and the highest ranking bidder is equally likely to be at any step. The second equality holds because $M$ posting only $\infty$ before $\first$ and bidders within $[\first + 1, i - 1]$ failing to accept price $v_{(1)}$. Thus, $M$ achieves $\alpha$-consistency if $\second \geq \first + \alpha n$ .

    In our auction, we use $\first = \frac{1 - \alpha}{2}n$ and $\second = \frac{1 + \alpha}{2}n$, and we  show no other pair $\first', \second'$ can improve robustness. Recall from Lemma~\ref{lem:CR} that the robustness of $M$ is precisely $|R^M| / (n!)$. We consider two cases. The first is if $\first' \leq \first$. Observe that if $\prediction < v_{(2)}$, and letting $\randorder\sim \mathcal{S}_n$, then we have that 
    \begin{align*}
        \frac{|R^M|}{n!} = \underset{\randorder\sim \mathcal{S}_n}{\PP}(\randorder \in R^M) & = \underset{\randorder\sim \mathcal{S}_n}{\PP}(M \text{ posts price } v_{(2)} \text{ to } \randorder(1)) \\
        & = \underset{\randorder\sim \mathcal{S}_n}{\PP}(\randorder(2) \leq \first') \cdot  \underset{\randorder\sim \mathcal{S}_n}{\PP}(\randorder(1) > \first' | \randorder(2) \leq \first') \\
        & = \frac{\first'}{n} \frac{n - \first'}{n - 1}
    \end{align*}
    where the first equality is by definition of $\randorder$ and the second by definition of $R^M$. The third equality is since we need $\randorder(1) > \first'$ to not post $\infty$ to $\randorder(1)$, $\randorder(2) \leq \randorder(1)$ so that  $\max\{\prediction, \vmaxless{\randorder(1)}\} = \vmaxless{\randorder(1)} = v_{(2)}$ is posted to $\randorder(1)$, and $\randorder(2) \not \in [\first', \randorder(1)]$ to not sell to $\randorder(2)$ and reach $\randorder(1)$.
    Differentiating $\frac{\first'}{n} \frac{n - \first'}{n - 1}$ with respect to $\first'$, we get $\frac{n - 2\first'}{n(n - 1)}$, which is positive for $\first'\leq \frac{n}{2}$. Then since $\first' \leq \frac{1 - \alpha}{2}n \leq \frac{n}{2}$, we get that the robustness of $M$ is $\frac{\first'}{n}\frac{n - \first'}{n - 1} \leq \frac{\first}{n}\frac{n - \first}{n - 1} = \frac{n}{n - 1} \frac{1 - \alpha^2}{4}$.

    The second case is if $\first' > \first$. Since $\second = \first + \alpha n$, then $\second' \geq \first' + \alpha n \geq \second$. Observe that if $\prediction > v_{(1)}$, and letting $\randorder\sim \mathcal{S}_n$, then we have that
        \begin{align*}
        \frac{|R^M|}{n!}  & = \underset{\randorder\sim \mathcal{S}_n}{\PP}(M \text{ posts price } v_{(2)} \text{ to } \randorder(1)) \\
        & = \underset{\randorder\sim \mathcal{S}_n}{\PP}(\randorder(2) \leq \second') \cdot  \underset{\randorder\sim \mathcal{S}_n}{\PP}(\randorder(1) > \second' | \randorder(2) \leq \first') \\
        & = \frac{\second'}{n}\frac{n - \second'}{n-1}
    \end{align*}
    where the second equality is since we need $\randorder(1) > \second'$ to not post $\infty$ or $\prediction$ to $\randorder(1)$, $\randorder(2) \leq \randorder(1)$ so that  $\vmaxless{\randorder(1)} = v_{(2)}$ is posted to $\randorder(1)$, and $\randorder(2) \not \in [\second', \randorder(1)]$ to not sell to $\randorder(2)$ and reach $\randorder(1)$. Differentiating $\frac{\second'}{n}\frac{n - \second'}{n-1}$ with respect to $\second'$, we get $\frac{n - 2\second'}{n(n - 1)}$, which is negative for $\second' \geq \frac{n}{2}$. Since $\second' \geq \frac{1 + \alpha}{2}n \geq \frac{n}{2}$, we obtained that the robustness achieved by $M$ is $\frac{\second'}{n}\frac{n - \second'}{n - 1}\leq \frac{\second}{n}\frac{n - \second}{n - 1} = \frac{n}{n - 1} \frac{1 - \alpha^2}{4}$.

Thus, an $\alpha$-consistent   auction $M \in \mathcal{M}_a$ that is optimal with respect to robustness and is structured as in Lemma~\ref{lem:optmech} achieves a robustness that is at most $\frac{n}{n - 1} \frac{1 - \alpha^2}{4} = \frac{1 - \alpha^2}{4} + O(\frac{1}{n})$. By Lemma~\ref{lem:optmech}, we conclude that this robustness bound holds for any PA auction. \qed

\end{document}